\documentclass[12pt,reqno]{amsart}
%\pdfoutput=1 
\usepackage{amsmath,amsfonts,amssymb,bbm}
%\usepackage{natbib} % for citet\usepackage{subfig, graphicx}
	%\setcitestyle{numbers,sort&compress}
%\usepackage{bm}

%\usepackage{geometry}
%\geometry{verbose,a4paper,tmargin=1.5cm,bmargin=1.5cm,lmargin=1.5cm,rmargin=1.5cm}
\usepackage[margin = 1in]{geometry}
\pagestyle{plain}

\usepackage{booktabs} 

\usepackage[utf8]{inputenc}
\usepackage{ifthen}
\usepackage[boxed, linesnumbered]{algorithm2e}
\usepackage{comment}
%Figures
\usepackage{graphicx, caption}
\captionsetup{width=.8\linewidth}
\usepackage{pgf}
\usepackage{tikz}
\usetikzlibrary{arrows,automata}
\usepackage{pgfplots}\pgfplotsset{compat=1.9}
\usepgfplotslibrary{fillbetween}

\usepackage{amsthm}
\usepackage{enumerate}
\usepackage{comment}

\usepackage{caption}
\usepackage{subcaption}

\usepackage[colorlinks=true,linkcolor=blue]{hyperref}

\newtheorem{theorem}{Theorem}[section]
\newtheorem{corollary}[theorem]{Corollary}
\newtheorem{lemma}[theorem]{Lemma}

\newtheorem{definition}[theorem]{Definition}
\newtheorem{remark}[theorem]{Remark}
\newtheorem{proposition}[theorem]{Proposition}

\usepackage[utf8]{inputenc} % allow utf-8 input
\usepackage[T1]{fontenc}    % use 8-bit T1 fonts
\usepackage{lmodern}
\usepackage{hyperref}       % hyperlinks
\usepackage{url}            % simple URL typesetting
\usepackage{booktabs}       % professional-quality tables
\usepackage{nicefrac}       % compact symbols for 1/2, etc.
\usepackage{microtype}      % microtypography

%\renewcommand{\floatpagefraction}{0.9}

%begin custom theorem

%end custom theorem

\newcommand{\eps}{\mbox{$\varepsilon$}}

\newcommand{\fp}{\overline{y}}
\newcommand{\olx}{\overline{x}}

\def\ol{\overline}
\def\fabs{f_{ab\sigma}}

\def\R{\ensuremath{\mathbb{R}}}
\def\N{\ensuremath{\mathbf{N}}}

\def\ox{\overline{x}}
\def\oxa{\frac{d\ox}{da}}

%% ***********************************************
%% ***********************************************
%\renewcommand{\algorithmcfname}{ALGORITHM}

\SetArgSty{textrm}  % for algorithm2e
\SetAlFnt{\small}
\SetAlCapFnt{\small}
\SetAlCapNameFnt{\small}
\SetAlCapHSkip{0pt}
\IncMargin{-\parindent}

\title{Unpredictable dynamics in congestion games:\\  memory loss can  prevent chaos}

\author[J. Bielawski]{Jakub Bielawski}
\address[J. Bielawski]{Department of Mathematics, Cracow University
  of Economics, Ra\-ko\-wicka~27, 31-510 Krak\'ow, Poland}
\email{jakub.bielawski@uek.krakow.pl}

\author[T. Chotibut]{Thiparat Chotibut}
\address[T. Chotibut]{
Chula Intelligent and Complex Systems, Department of Physics, Faculty of Science, Chulalongkorn University, Bangkok 10330, Thailand.}
\email{Thiparat.C@chula.ac.th, thiparatc@gmail.com}

\author[F. Falniowski]{Fryderyk Falniowski}
\address[F. Falniowski]{Department of Mathematics, Cracow University
  of Economics, Ra\-ko\-wicka~27, 31-510 Krak\'ow, Poland}
\email{falniowf@uek.krakow.pl}

\author[M. Misiurewicz]{Micha{\l} Misiurewicz}
\address[M. Misiurewicz]{Department of Mathematical Sciences, Indiana
  University-Purdue University Indianapolis, 402 N. Blackford
  Street, Indianapolis, IN 46202, USA}
\email{mmisiure@math.iupui.edu}

\author[G. Piliouras]{Georgios Piliouras}
\address[G. Piliouras]{Engineering Systems and Design, Singapore
  University of Technology and Design, 8 Somapah Road, Singapore 487372}
\email{georgios@sutd.edu.sg}

% The \author macro works with any number of authors. There are two commands
% used to separate the names and addresses of multiple authors: \And and \AND.
%
% Using \And between authors leaves it to LaTeX to determine where to break the
% lines. Using \AND forces a line break at that point. So, if LaTeX puts 3 of 4
% authors names on the first line, and the last on the second line, try using
% \AND instead of \And before the third author name.

\begin{document}

\maketitle

%\markleft{BIELAWSKI, CHOTIBUT, FALNIOWSKI, MISIUREWICZ, PILIOURAS}

\begin{abstract}
We study the dynamics of simple congestion games with two resources where a continuum of agents behaves according to a version of Experience-Weighted Attraction (EWA) algorithm. The dynamics is 
characterized by two parameters: the (population) intensity of choice $a>0$ capturing the economic rationality of the total population of agents and a discount factor $\sigma\in [0,1]$ capturing a type of memory loss where past outcomes matter exponentially less than the recent ones. Finally, our system adds a third parameter $b \in (0,1)$, which captures the asymmetry of the cost functions of the two resources. It is the proportion of the agents using the first resource at Nash equilibrium, with $b=1/2$ capturing a symmetric network.

Within this simple framework, we show a plethora of bifurcation phenomena where behavioral dynamics destabilize from global convergence to equilibrium, to limit cycles or even (formally proven) chaos as a function of the parameters $a$, $b$ 
and $\sigma$. Specifically, we show that for any discount factor $\sigma$ the system will be destabilized for a sufficiently large intensity of choice $a$. Although for discount factor $\sigma=0$ almost always (i.e., $b \neq 1/2$) the system will become chaotic, as $\sigma$ increases the chaotic regime will give place to the attracting periodic orbit of period 2. Therefore, memory loss can simplify game dynamics and make the system predictable. We complement our theoretical analysis with simulations and several bifurcation diagrams that showcase the unyielding complexity of the population dynamics (e.g., attracting periodic orbits of different lengths) even in the simplest possible potential games.

\end{abstract}

\section{Introduction}
\label{s:intro}

Congestion games~\cite{rosenthal73} are arguably amongst the most well-studied classes of games in game theory. As their
name indicates, they capture multi-agent settings where the costs of each agent depends on the
resources she chooses and how congested each of them is (e.g., traffic routing, common resources). Congestion games are well known to be isomorphic
to potential games~\cite{monderer1996potential}, i.e. games where the incentives of all agents are perfectly aligned with each other by being equivalent to optimizing a single potential function. Furthermore, non-atomic (also known as population) potential/congestion games are  even more regular game settings as under a minimal natural assumption on the cost functions of resources %(i.e., non-negative, continuous, non-decreasing)
it is known that they admit an essentially unique equilibrium flow which coincides with the global minimum of a strictly convex potential function~\cite{NisaRougTardVazi07}. Given the above, (population) potential/congestion games are typically thought as the paragon of game theoretic stability with numerous evolutionary dynamics provably converging in them via Lyapunov arguments where the potential function is strictly decreasing with time (e.g., \cite{chen2016generalized,cohen2017learning,Even-Dar:2005:FCS:1070432.1070541,Fischer:2006:FCW:1132516.1132608,Fotakis08,Kleinberg09multiplicativeupdates,kleinberg2011load,krichene2015online,mertikopoulos2018riemannian,palaiopanos2017multiplicative,Sandholm10}). In fact, these games are treated as testing grounds for novel game dynamics, where convergence is treated more like  % as little more than a foregone conclusion as well as 
a minimal desideratum with interesting, novel technicalities rather than a major insight into the setting itself.

Despite the ubiquitous nature of these positive results, at a closer look, a common driving ``regularity'' assumption emerges at their core. The behavioral dynamics have to be in a sense ``smooth'' enough to act as a gradient-like system for the common potential function. This type of regularities typically follow automatically in the case of continuous-time dynamics (e.g., \cite{Sandholm10}) whereas in the case of discrete-time dynamics they can be enforced by appropriate upper bounds on the step-size/learning rates. These bounds decrease as we increase the Lipschitz constant of the gradient of the potential with steeper potentials resulting in more restrictive bounds on the intensity of choice of the agents. In the case of non-atomic congestion games, as we increase the total population size (i.e. total load/congestion) these bounds converge to zero. Of course, the mathematical necessity of such regularity conditions is abundantly clear as even with gradient descent on a strictly convex function the step-size has to be controlled as the function becomes steeper to avoid overshooting effects. 
 What is less clear is how well do these mathematically driven constraints agree with our best known understanding of how people actually behave and adapt when facing such strategic considerations in practice. Which parameters are important in practice and how fine-tuned can we expect them to be in large population settings?

 The question of how people learn to adapt their strategies in real-world games is the object of study of behavioral game theory~\cite{camerer1999experience,camerer2011behavioral,ho2007self}. 
Experience-Weighted Attraction (EWA) is a canonical learning model in behavioral game theory and although in its full generality contains too many free parameters recent work has focused on a stripped down version~\cite{GallaFarmer_PNAS2013}
which allows only two free parameters, the intensity of choice (akin to the exponent of a logit choice function~(e.g., \cite{alos2010logit})) and a memory loss parameter, which is akin to the rate of exponential discounting over past payoffs. 
We adopt this model where, roughly speaking, agents perform logit best-responses to an estimate of the historical performance of each action where the effect of past payoffs/costs decays exponentially fast.  Importantly, experimental work in the area suggests that large intensity of choice (sometimes in double digits) are common, pointing out at the possibility of a conflict between standard mathematically driven assumptions and experimentally tested behavioral regularities. Similarly, the discounting rate is shown to be reliably positive; however, its actual value can vary widely from game to game (see e.g., Table 4 in \cite{ho2007self}). How does the interplay between intensity of choice and memory loss affect the convergence results in simple congestion games? At what points of the parameter space do the dynamics destabilize and when they do what sort of behavior do they give rise to (limit cycles, chaos, etc.)?  Finally, how does the nature of the congestion game (e.g. symmetry of costs) affect its stability? In a prior conference proceeding ~\cite{CFMP2019}, we have provided some preliminary answers to these questions for the  special case where the agents have no memory loss, whereas now we provide a more thorough understanding of the complex phenomena emerging at different levels of  memory loss.

%Prior to our paper, the only known results in this area, reflected the special case where the agents have no memory loss. In this case, our behavioural learning model contains as a special case, a ubiquitous optimisation algorithm known as Multiplicative Weights Updates (MWU~\cite{}) also known as Hedge~\cite{} (see also related work for other closely related algorithms). In this special case, that instability can emerge in simple non-atomic congestion games as long as the either the effective learning rate of the system is large enough (either the total demand becomes too large, or the intensity of choice of the individual agents? becomes too large).

{\bf Our results.} Due to a memory loss parameter, the (interior) fixed point of these different learning dynamics corresponds to a unique perturbed equilibrium.  As the intensity of choice increases perturbed equilibrium approaches Nash equilibrium~(Proposition~\ref{qreolx}) but at the cost of losing stability.
 As long as the equilibrium is locally attracting then it is globally attracting (Theorem~\ref{afp}).
 There is a lower bound on the learning rate above which the equilibrium is repelling. This threshold value is decreasing with the discount factor (Proposition~\ref{QREstability}).
 If there is memory loss, then there exists a threshold value on the effective learning rate of the system  such that if the intensity of choice crosses this threshold, then the game is unstable no matter what the costs are. In contrast, if there is no memory loss, for any intensity of choice one can find congestion games such that the dynamics is stable (Proposition~\ref{abthreshold}).

Next, we study exactly how unpredictable the system will become when stability is lost. We start with the simplest case where the dynamics is memoryless (i.e. only last period costs matter). In this case,  
there exists an intensity of choice threshold such that below it the dynamics are globally stable whereas above it almost all but a countably large set of initial conditions converge to a limit cycle of length two (Proposition~\ref{dynsigma0}).  In the case of systems with memory, 
as long as the underlying congestion game is not very asymmetric, then under large enough intensity of choice the system will inevitably converge to an attracting periodic orbit of period two (Theorem~\ref{thm2period}).
Thus, although the system won't converge, the behavior will be predictable. On the other hand, if the resources have significantly different costs resulting in a Nash equilibrium flow $b$ far from the symmetric $50\%-50\%$ split (i.e. the value $b$ is far from $0.5$) between the two resources, then there is a threshold value for the intensity of choice above which the dynamics are Li-Yorke chaotic (Theorem~\ref{chaos}). Putting everything together we derive that for large enough intensity of choice, as long as $b \in (\frac{1-\sigma}{2-\sigma}, \frac{1}{2-\sigma})$ almost all trajectories are attracted to the periodic orbit of period two, whereas outside this interval we observe chaos 
	(see Figure~\ref{fig: cp2}). 

\begin{figure}[t]
\centering
\includegraphics[width=0.65\textwidth]{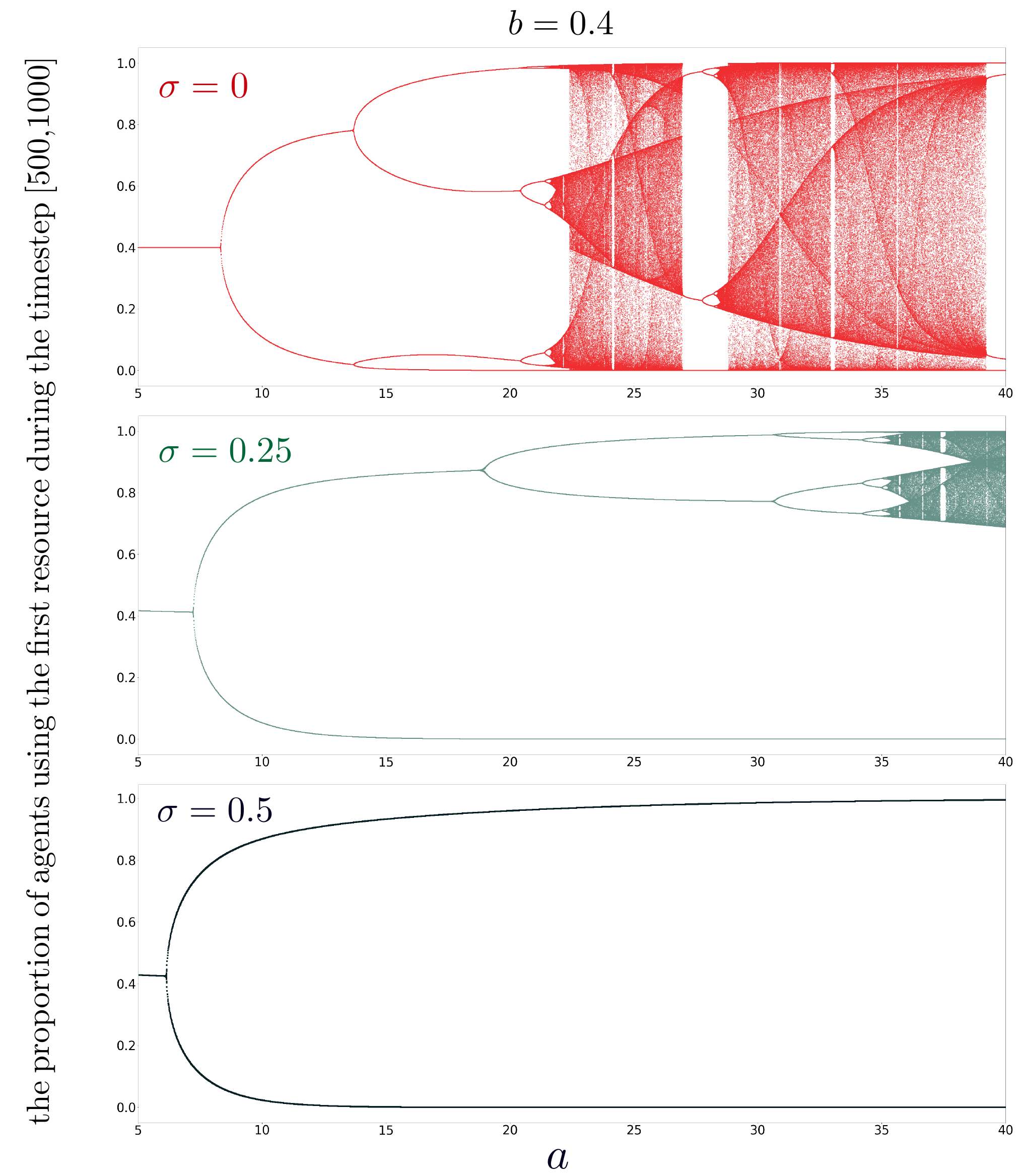} %0.95  0.68
\begin{minipage}{0.9\textwidth}
\caption{Bifurcation diagrams for increasing values of the discount factor $\sigma$ when the Nash equilibrium is fixed at $b=0.4$. Without any discounting ($\sigma = 0$), the dynamics is described by the Multiplicative Weight Update which leads to the period-doubling bifurcation route to chaos as one increases the intensity of choice $a$~ \cite{CFMP2019}. 
%Moreover, for each $a$, the time average of the dynamics converges {\it exactly} to the Nash equilibrium $b$ \cite{CFMP2019}, shown in the green horizontal line. 
However, as the discount factor increases to $\sigma = 0.25$, chaos starts at a much larger intensity of choice. 
%The time average also no longer converges to $b$ %{\color{red}[[refer to the equation of the time average convergence of other quantities.]]}
As the discount factor increases to $\sigma = 0.5$, increasing the intensity of choice can at most lead to the period-2 dynamics. Thus, a larger discount factor tends to make the dynamics more stable and predictable.}
\label{fig: bifurcation_diagrams}
\end{minipage}
\end{figure}

We complement our theoretical understanding with further simulations and numerical experiments.
In Figure~\ref{fig: bifurcation_diagrams} we present
bifurcation diagrams at increasing values of the discount factor/memory loss $\sigma$ in a specific instance of a two-resource congestion game where the Nash equilibrium flow is fixed at $b=0.4$. Without any discounting ($\sigma = 0$), the dynamics
%is described by Multiplicative Weight Update (MWU) which
leads to the period-doubling bifurcation route to chaos as one increases the intensity of choice $a$~\cite{CFMP2019}. 
%Moreover, for each $a$, the time average of the dynamics converges {\it exactly} to the Nash equilibrium $b$ \cite{CFMP2019}, shown in the green horizontal line. However, 
As the discount factor increases to $\sigma = 0.25$, chaos starts at a much larger intensity of choice. 
%The time average also no longer converges to $b$ %{\color{red}[[refer to the equation of the time average convergence of other quantities.]]}
Finally, as the discount factor increases to $\sigma = 0.5$, increasing the intensity of choice can at most result in the existence of a period-2 limit cycle. Thus, a larger discount factor tends to stabilize the dynamics.
%; however, the time-average convergence to the Nash equilibrium property is lost. 
In Figure \ref{fig: cobwebb_b0p7} we show how
increasing the discount factor/memory loss $\sigma$ can reduce chaotic, unpredictable dynamics into periodic, predictable ones by examining both the day-to-day behavior as well as the time evolution on the potential function of the game. 

\begin{figure}
  \centering
   \includegraphics[width = 0.72\textwidth]{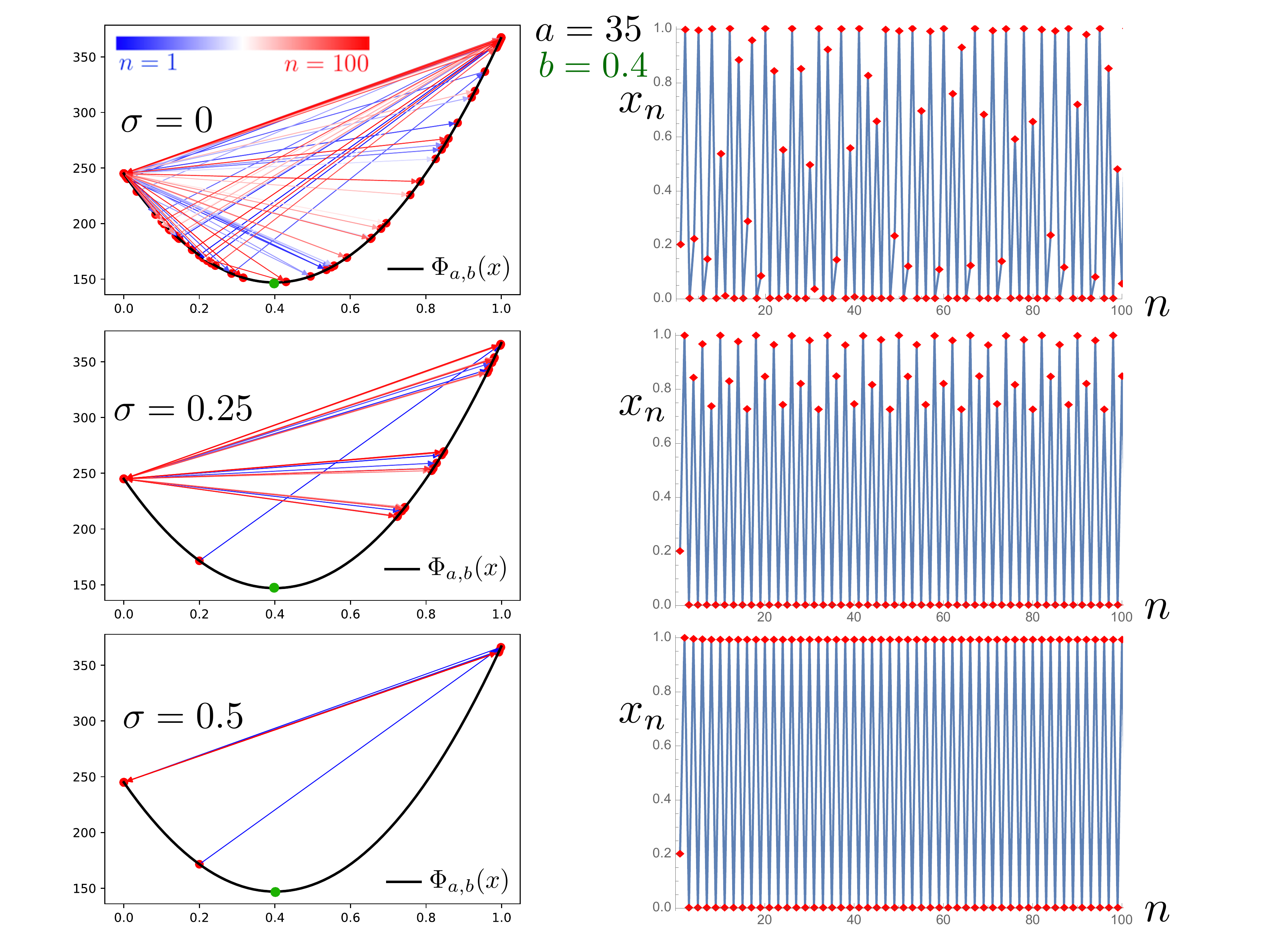} 
   \begin{minipage}{0.9\textwidth}
   \caption{\small A larger discount factor $\sigma$ help stabilize the dynamics. These figures are visualizations of the dynamics that lead to the bifurcation diagrams of Figure \ref{fig: bifurcation_diagrams} at a fixed $a=35$. The initial state is set to $x_0 = 0.2.$ The left column shows the dynamics in the convex potential (cost) landscape in our congestion game $\Phi_{a,b}(x) \equiv \frac{N^2}{2}\left( \alpha x^2 + \beta (1-x)^2\right) = \frac{a^2}{2}\left( (1-b)x^2 + b(1-x)^2\right)$, whose Nash equilibrium $b=0.4$ is the potential minimum. The right column shows the dynamics of $x_n$. For this large $a=35$, larger discount factors tend to stabilize the chaotic dynamics to at most a period 2 instability. The top row with $\sigma =0$ corresponds to the Multiplicative Weights Update dynamics  \cite{CFMP2019}, whose time-average of $x_n$ is exactly the Nash equilibrium $b$ \cite{CFMP2019}. }  
      \label{fig: cobwebb_b0p7}
   \end{minipage}
\end{figure}

\section{Related literature}
\label{s:related}

%\begin{itemize}
%\item learning \cite{benaim2009learning}
%\item learning \cite{leslie2005individual,CGM,benaim2009learning,leonardos2021exploration,cominetti2010payoff}
% \item perturbed dynamics: cited \cite{hofbauer2002global,hofbauer2007evolution,fudenberg1993learning}, to cite \cite{nicole2017stochastic}
%\item general \cite{sandholm2010pairwise,Sandholm10,hart2000simple,hart2003uncoupled}
%\end{itemize}

Concepts discussed in this article arise in different contexts and were studied independently by many economists and computer scientists.
%One of the most influential theoretical approach is perturbed best-response dynamics.
The problem of introducing discounting of the past costs (or random mistakes) is an important issue both from theoretical and experimental economics perspective. Two concepts closely related to the model discussed in this article are  Experience-Weighted Attraction (experimental one, usually discrete time)  and perturbed best response dynamics (theoretical one, usually continuous time).

 {\bf Experience-Weighted Attraction (EWA).}
 EWA is arguably one of the most influential learning models in behavioral game theory~\cite{camerer1999experience,camerer2011behavioral,camerer2002sophisticated}. 
% One of the criticism of the EWA model is that it contains many free parameters and thus recent work has focused on producing more stripped down versions of EWA, e.g.~\cite{ho2007self}.
 In its complete form EWA contains many free parameters, thus recent work has focused on its stripped down version~\cite{GallaFarmer_PNAS2013}
which allows only two free parameters, the intensity of choice (akin to the exponent of a logit choice function~(e.g., \cite{alos2010logit})) and a memory-loss parameter, which is akin to the rate of exponential discounting over past payoffs.
 The version of EWA that we study in this paper follows from \cite{GallaFarmer_PNAS2013}, which includes these two free parameters. %, the intensity of choice and a memory-loss parameter.
 In \cite{Farmer2019}, it is shown that best reply cycles can predict non-convergence  of six  well-known  learning  algorithms in games with random payoffs where one of the algorithms considered is EWA. Other dynamics include replicator dynamics, reinforcement learning, fictitious play and $k$-level EWA, showing that often there exist similarities between their behavior at least in small, randomly chosen games.

{\bf Perturbed dynamics.} There is rich literature on perturbed best response dynamics. It was initiated by Fudenberg and Kreps \cite{fudenberg1993learning} who introduced stochastic fictitious play, where each agent's payoffs are perturbed in each period by random shocks a la Harsanyi \cite{Harsanyi1973}. As a consequence, each player's anticipated behavior in each period is a genuine mixed strategy. Hofbauer and Sandholm \cite{hofbauer2002global} showed convergence of stochastic fictitious play in games with an interior ESS, zero-sum games, potential games and supermodular games.\footnote{There is no convergence in the general case (see e.g. \cite{benaim1999mixed})} However, the steady state to which the dynamics converges is not the Nash equilibrium but perturbed equilibrium (called also quantal response equilibrium, see \cite{goeree2020stochastic,mckelvey1995quantal}).
Moreover, Hopkins \cite{hopkins2002two} showed that stochastic fictitious play and perturbed reinforcement learning \cite{erev1998predicting} can be considered as noisy versions of the evolutionary replicator dynamics in two-player games. Thus, stationary points of a perturbed reinforcement learning model will be identical to those of stochastic fictitious play, and if the two models do converge, they will converge to the same state. In addition, if any mixed equilibrium is locally stable (unstable) for all forms of stochastic fictitious play, it is locally stable (unstable) for perturbed reinforcement learning. A number of authors have since generalized dynamic to the class of perturbed best response dynamics \cite{benaim1999mixed,benaim2009learning,hofbauer2002global,hofbauer2007evolution} in the context of large population games with finite strategy sets.\footnote{See Sandholm \cite{Sandholm10} for an extensive review of the literature on finite strategy (continuous) evolutionary game theory.} Recently it was also generalized for continuous games \cite{lahkar2021generalized}.  %{\color{red} stronger exposition of potential games}

%Another perturbed dynamics was considered by Hofbauer and Sandholm \cite{hofbauer2007evolution}. In their model each agent occasionally receives opportunities to update his choice of strategy. When such an opportunity arises, the agent selects a strategy that is currently optimal, but only after payoffs have randomly perturbed. They obtain convergence to approximate Nash equilibrium in stable, potential and supermodular games.

%The logit dynamic \cite{} is the prototypical representative of dynamics obtained by perturbing the best response thorugh the Shannon entropy function.

 {\bf Learning in games.} 
 %Learning in games have received considerable attention over the last 30 years.
Learning procedures can be divided into two broad categories depending on whether they evolve in continuous or discrete time: the former includes the numerous dynamics for learning and evolution (see Sandholm \cite{Sandholm10} and Hadikhanloo et al. \cite{hadikhanloo2021learning} for recent surveys), whereas the latter focuses on learning algorithms (such as fictitious play and its variants) for infinitely iterated games \cite{fudenberg1998theory}.

The EWA algorithm, discussed in this paper, can be seen as a reinforcement learning algorithm where agents score their actions over time based on their observed payoffs and then they choose an approximate/perturbed best response.
Learning algorithms of this kind have been investigated in continuous time by B\"{o}rgers and Sarin \cite{borgers1997learning}, Hopkins \cite{hopkins2002two}, Coucheney et al. \cite{CGM} and many others.
The model closely related to the one discussed in this paper was proposed by Coucheney et al. \cite{CGM}. They derived a class of penalty-regulated game dynamics consisted of replicator-like drift plus a penalty term that keeps agents from approaching the boundary of the state space. These dynamics are equivalent to agents scoring their actions by comparing their exponentially discounted cumulative payoffs over time and using a smooth best response to pick an action. They show global convergence for the continuous case.

Our model is also related to discrete-time model of $Q$-learning \cite{hopkins2002two,leslie2005individual,tuyls2006evolutionary}. From a discrete-time viewpoint Leslie and Collins \cite{leslie2005individual} used a $Q$-learning approach to establish the convergence of the resulting learning algorithm in two-player games under minimal information assumptions, a similar approach was also used by Cominetti et al. \cite{cominetti2010payoff}. 
Finally, the model presented in this article subsumes two well-known dynamics: a discrete-time variant of replicator dynamics  (Multiplicative Weights Update) \cite{freund1999adaptive} and logit best-response \cite{alos2010logit,blume1993statistical}. Moreover, it is a two-parameter version of EWA dynamics which was numerically studied for random zero-sum games by Galla and Farmer \cite{GallaFarmer_PNAS2013}  and Pangallo et al. \cite{Farmer2019}.

%From the learning perspective there a vast literature on this subject. In this context one can look at the perturbed dynamics via reinforcement learning and penalty-regulated dynamics \cite{CGM,MS2016}. To the thorough discussion on continuous time case we refer the reader to \cite{CGM,MS2016} and references therein. } {\color{red} expand}

{\bf Chaos in games.} %In reality, players rarely follow a Nash equilibrium strategy. 
The question one may ask is how complicated the behavior of agents may become even in simple games. The seminal work of \cite{SatoFarmer_PNAS} showed analytically by computing the Lyapunov exponents of the system that even in a simple two-player game of rock-paper-scissor replicator dynamics (the continuous-time variant of MWU) can lead to chaos, rendering the equilibrium strategy inaccessible. Replicator dynamics has recently been shown to be able to produce arbitrarily complex orbits (e.g. Lorenz butterfly dynamics) in simple matrix games~\cite{andrade2021learning}.

 For two-player games with a large number of available strategies (complicated games), \cite{GallaFarmer_PNAS2013} argues that EWA algorithm, exhibits chaotic behavior in a large parameter space. The prevalence of these chaotic dynamics also persists in games with many players, as shown in the follow-up work \cite{GallaFarmer_ScientificReport18}. Careful examinations suggest a complex behavioral landscape in many games (small or large) for which no single theoretical framework currently applies.
 In \cite{becker2007dynamics} and \cite{geller2010microdynamics} a chaotic behavior of Nash maps in game like matching pennies is shown.
  In \cite{VANSTRIEN2008259} and \cite{VANSTRIEN2011262} it is proved that fictitious play learning dynamics for a class of 3x3 games, including the Shapley's game and zero-sum dynamics, possesses rich periodic and chaotic behaviors. \cite{Soda14} showed that replicator dynamics, the continuous-time variant of MWU, is Poincar\'{e} reccurent in zero-sum games. \cite{mertikopoulos2017cycles} generalized this result to  Follow-the-Regularized-Leader (FTRL) algorithms (called also dual averaging). When MWU/FTRL is applied with constant step-size in zero-sum games it becomes unstable~\cite{BaileyEC18} and in fact Lyapunov chaotic~\cite{CP2019}. \cite{2017arXiv170109043P} showed experimentally that EWA leads to limit cycles and high-dimensional chaos
 in two-agent games with negatively correlated payoffs. \cite{cheung2020chaos} established Lyapunov chaos in the case of coordination/potential games for a variant of MWU, known as Optimistic MWU. However, none of the above results implies formal chaos in the sense of Li-Yorke. 
 
 The first formal proof of Li-Yorke chaos was shown for MWU in a single instance of two agent two strategy congestion game in~\cite{palaiopanos2017multiplicative}. This result was generalized for all two-agent two-strategy coordination games in~\cite{Thip18}.
  In arguably the main precursor of our work, \cite{CFMP2019} established Li-Yorke chaos in nonatomic congestion game where agents use MWU. This result was then extended to FTRL with steep regularizers~\cite{BCFKMP21}. The theory of Li-Yorke chaos has since then been applied in other game theoretic settings related to markets~\cite{cheung2021learning}, as well as blockchain protocols~\cite{leonardos2021dynamical}.

\section{Model}\label{s:prelim}

We consider a two-strategy nonatomic \emph{congestion game} (see~\cite{rosenthal73}) with a continuum of agents (players), where all of them apply the \emph{Experience-Weighted Attraction} (EWA) algorithm to update their strategies. Each of the agents controls an infinitesimally small fraction of the flow.
  %We will assume that 
  The total flow of all the agents is equal to $N$. % normalized to $1$. 
  We will denote the fraction of the agents adopting the first strategy at time $n$ as $x_n$. 
  %The second strategy is then chosen by $1-x_n$ fraction of the players. This model physically encapsulates how a large population of commuters selects between the two alternative paths that connect the initial point to the end point.
%When a large fraction of the players adopt the same
%strategy, congestion arises, and the cost of choosing the same strategy increases.

\subsection{Linear congestion games}

%We %primarily 
%focus on linear cost functions. Specifically, 
 The cost of each resource (path, link, route or strategy) here will be assumed proportional to the \emph{load}. By denoting $c_j$ the cost of selecting the
strategy number $j$ (when $x$ fraction of the agents choose the first strategy), if the coefficients of proportionality are
$\alpha,\beta>0$, we obtain
\begin{equation}\label{cost}
%\begin{aligned}
c_1(x) = \alpha N x, \hspace{50pt} c_2(1-x)=\beta N (1-x).
%\end{aligned}
\end{equation}

Without loss of generality we will assume throughout the paper that $\alpha+\beta=1$. Therefore, the values of $\alpha$ and $\beta=1-\alpha$ indicate how different the resource costs  are from each other.\footnote{Our analysis on the emergence of bifurcations, limit cycles and chaos will carry over immediately to the cost functions of the form $\alpha x+\gamma$ for $\alpha, \gamma > 0$.}  As we will see, the only parameter of the game that is important is the value of the equilibrium split, i.e. the fraction of agents using the first strategy at equilibrium. One advantage of this formulation is that the fraction of agents using each strategy at equilibrium is independent of the flow $N$.
%The second advantage is that the Price of Anarchy of these games is exactly $1$, independent of $\alpha, \beta, $ and $N$.  Hence, our model offers a natural benchmark for comparing equilibrium analysis, which suggests optimal social cost, to the time-average social cost arising from non-equilibrium learning dynamics which as we show can be as large as possible.

\subsection{Learning in congestion games with EWA}

Experience-Weighted Attraction (EWA) has been proposed by Camerer and Ho \cite{EWA1} as a stochastic algorithm that subsumes reinforcement learning and belief learning algorithms. This unifying property comes with a consequence of many free parameters. In this paper we focus on a deterministic variant of EWA \cite{GallaFarmer_PNAS2013} which has only two free parameters, an intensity of choice and a memory loss parameter, which can be seen as the rate of exponential discounting of past costs.

We assume that at time $n+1$ the agents know the cost of the strategies at time $n$ (equivalently, the probabilities $(x_n, 1-x_n)$). Since we have a continuum of agents, the realized flow (split) is accurately described by the probabilities $(x_n, 1-x_n)$. There is a parameter $\eps\in(0,1)$, which can be treated as the common learning rate of all agents, such that $\lambda = \log\frac{1}{1-\varepsilon}$ describes the intensity of choice. Then the agents update their choices using EWA algorithm
\begin{align}
\begin{split}\label{mwu}
 x_{n+1} &= \frac{x_n^{1-\sigma} \exp(-\lambda c_1(x_n))}{x_n^{1-\sigma} \exp(-\lambda c_1(x_n))+(1-x_n)^{1-\sigma} \exp(-\lambda c_2(1-x_n))}\\
 &= \frac{x_n^{1-\sigma}}{x_n^{1-\sigma}+(1-x_n)^{1-\sigma} \exp[\lambda (c_1(x_n)-c_2(1-x_n))]}\\
 &= \frac{x_n^{1-\sigma}}{x_n^{1-\sigma}+(1-x_n)^{1-\sigma} \exp(a(x_n-b))},
\end{split}
\end{align}
where $a = N\log\frac{1}{1-\varepsilon} > 0$ is a population intensity of choice, $b = \beta \in (0,1)$ is the equilibrium split, i.e. the fraction of agents using the first strategy at equilibrium and $\sigma \in [0,1]$ is a memory loss parameter.

Note that if $\sigma$ is treated as a discount factor then it describes how individuals value the past: the greater $\sigma$ the less important are previous plays, the more important recent plays.

Equation \eqref{mwu} implies that the dynamics of changes in the behavior of agents (and the system) is governed by the map
\begin{equation}\label{fabd}
f_{ab\sigma}(x) = \frac{x^{1-\sigma}}{x^{1-\sigma}+(1-x)^{1-\sigma}\exp (a(x-b))},
\end{equation}
where $a > 0$, $b \in (0,1)$, $\sigma\in [0,1]$.

%\subsection{Regret, Price of Anarchy and time average social cost}\label{sec: regretPoASC} {\color{red} WHAT DO WE WANT TO SAY ABOUT THAT ?}

\subsection{Equilibrium}

We assume that the population of agents is homogeneous, that is all agents use the same mixed strategy. Hence, adoption of a strategy profile $(x,1-x)$ by agents results in $x$ fraction of the agents choosing the first strategy.

\begin{definition}
A strategy profile $(x,1-x)$ is a {\it Nash equilibrium} if and only if no agent can strictly decrease his/her expected cost by unilaterally deviating to another strategy.
\end{definition}

\begin{definition} \label{pertdef}
We call $(\ol{x},1-\ol{x})$ a perturbed equilibrium, if $\ol{x}$ is a fixed point of the map $f_{ab\sigma}$.
\end{definition}

To show why the term ``perturbed'' is used one can show that it agrees with general concepts of perturbed equilibrium \cite{Sandholm10}. %and quantal response equilibrium \cite{mckelvey1995quantal}.  
Indeed, take an interior fixed point $\ol{x}$ of $\fabs$. By \eqref{mwu} it must satisfy%\footnote{{\color{red} change --- in equation \eqref{fabd} we don't have $\varepsilon$}}
\[ \left\{\begin{array}{l}
  \log(\overline{x}) = (1-\sigma) \log(\overline{x}) + \log(1-\varepsilon) c_1(\overline{x}) - \log Z, \\
  \log(1-\overline{x}) = (1-\sigma) \log(1-\overline{x}) + \log(1-\varepsilon) c_2(1-\overline{x}) - \log Z,
 \end{array}\right.\]
where $Z$ is the denominator of the second term in \eqref{mwu}. From the above system of equations we derive that 
\begin{equation} \label{perteq} \frac{N \sigma}{a} \log\frac{\overline{x}}{1-\overline{x}}  =  c_2(1-\overline{x}) - c_1(\overline{x}). \end{equation}
%Thus, the probability of playing action $i$ is proportional to $e^{-\theta \cdot cost(i)}$ where $\theta=\frac{N\sigma}{a}$. 
Thus, the probability of playing action $i$, denoted by $x_i$, is proportional to $e^{-\frac{a}{N \sigma}c_i(x_i)}$.
Equation \eqref{perteq} shows that when $\sigma=0$ the fixed point $\ol{x}$ is a Nash equilibrium, whereas $\sigma>0$ perturbs the equilibrium state.

%that if $\sigma=0$  %Therefore, the perturbed equilibrium of \eqref{dynamics} corresponds to the notion of quantal response equilibrium.% (when $\sigma \in (0,1)$) or logit equilibrium (when $\sigma = 1$).

\subsection{Attracting orbits and chaos}

%In the remainder of this section 
In this subsection we introduce basic notions of dynamical systems.%: Li-Yorke chaos and attracting/repelling fixed points and periodic orbits. To this end we denote by $f^n$ composition of the map $f$ with itself $n$-times.

\begin{definition}
Let $x$ be a fixed point of a dynamical system $(X,f)$. The fixed point $x$ is called:
\begin{itemize}
\item attracting, if there is an open neighborhood $U \subset X$ of $x$ such that for every $y \in U$ we have $\lim\limits_{n \to \infty} f^n(y) = x$, where $f^n$ is a composition of the map $f$ with itself $n$-times. 
\item repelling, if there is an open neighborhood $U \subset X$ of $x$ such that for every $y \in U$, $y \neq x$ there exists $n \in \N$ such that $f^n(y) \in X\backslash U$.
\end{itemize}
\end{definition}
In this note we study differentiable maps on the unit interval. Then let $x$ be a fixed point, if $|f'(x)| < 1$, then $x$ is attracting,  if $|f'(x)| > 1$, then $x$ is repelling. If  $|f'(x)| =1$ we need more information.

\begin{definition}
Let  $(X,f)$ be a dynamical system. An orbit $\{f^n(x)\}$ is called periodic of period $T$ if $f^{n+T}(x)=f^n(x)$ for any $n\in \mathbf{N}$. The smallest such $T$ is called the period of $x$.
The periodic orbit is called attracting, if $x$ is an attracting fixed point of $(X,f^T)$, and
 repelling, if $x$ is a repelling fixed point of $(X,f^T)$.
\end{definition}

It seems that there is no universally accepted definition of chaotic behavior of a dynamical system. Most definitions of chaos concern one of the following aspects:
\begin{itemize}
\item complex behavior of trajectories, such as Li-Yorke chaos;
\item fast growth of the number of distinguishable orbits of length $n$, such as  positive topological entropy;
%\item existence of absolutely continuous invariant measures;
\item sensitive dependence on initial conditions, such as Devaney or Auslander-Yorke chaos;
\item recurrence properties, such as transitivity or mixing. %and weakly mixing sets.
\end{itemize}

In this article, the first two are crucial. %and survey connections between them.
Also, in the presence of chaos, studying precise single orbit dynamics can be intractable.
%; we study the average behavior of trajectories instead. Thus, it is important to know whether the average converges. This is when ergodic theorems come into play. 

\begin{definition}[Li-Yorke chaos]  \label{LYchaos-def}
Let $(X,f)$ be a dynamical system and $x,y \in X$.
We say that $(x,y)$ is a \emph{Li-Yorke pair} %\footnote{Such pairs appeared in the seminal paper of Li and Yorke \cite{liyorke}.} 
if
\[ \liminf_{n\to\infty} dist (f^n(x),f^n(y))=0,\;\text{and}\;
\limsup_{n\to\infty} dist (f^n(x),f^n(y))>0.\]
A dynamical system $(X,f)$ is \emph{Li-Yorke chaotic} if there is an uncountable set $S\subset X$ (called \emph{scrambled set}) such that every pair $(x,y)$ with $x,y\in S$ and $x\neq y$ is a Li-Yorke pair. 
\end{definition}

Li-Yorke chaos occurs when there is an uncountable set $S \subset X$, such that for every pair of points from $S$ their trajectories are close to each other infinitely many times and they are far from each other infinitely many times.

The origin of the definition of Li-Yorke chaos is in the seminal Li and Yorke's article \cite{liyorke}.
Intuitively orbits of two points from the scrambled set have to gather themselves arbitrarily close and  spring aside infinitely many times but (if $X$ is compact) it cannot happen simultaneously for each pair of points. %That's why we say about {\it scrambled} set.
 Why should a system with this property be chaotic? %That is not completely intuitive, one may discuss every single ingredient of the definition, but it turns out to make sense as it is. 
Obviously the existence of a large scrambled set implies that orbits of points behave in unpredictable, complex way.
More arguments come from the theory of interval transformations, in view of which it was introduced. For such maps the existence of one Li-Yorke pair implies the existence of an uncountable scrambled set \cite{KuchtaSmital} and it is not very far from implying all other properties that have been called chaotic in this context, see e.g. \cite{Ruette}. In general, Li-Yorke chaos has been proved to be a necessary condition for many other {\it chaotic} properties to hold. A nice survey of properties of Li-Yorke chaotic systems can be found in \cite{BHS}.

%Furstenberg \cite[p. 38]{Fur67} chose to call a dynamical system \emph{deterministic} if its \emph{topological entropy} vanishes. 

A crucial feature of the chaotic behavior of a dynamical system is exponential growth of the number of distinguishable orbits. This happens if and only if the topological entropy of the system is positive. In fact positivity of topological entropy turned out to be an essential criterion of chaos \cite{GW93}.
This choice comes from the fact that the future of a deterministic (zero entropy) dynamical system can be predicted if its past is known (see \cite[Chapter 7]{Weiss}) and  positive entropy is related to randomness and chaos.

For every dynamical system over a compact phase space, we can define a number $h(f)\in[0,\infty]$ called the \emph{topological entropy} of transformation $f$. This quantity was first introduced by Adler, Konheim and McAndrew \cite{AKM} as the topological counterpart of metric (and Shannon) entropy.
%Figuratively speaking topological entropy is an exponential ratio of the growth of number of orbits distinguishable under the finite precision of observation.

For a given positive integer $n$ we define the $n$-th Bowen-Dinaburg metric on $X$, $\rho_n^f$ as \[\rho_n^f(x,y)=\max_{0\leq i<n} dist (f^i(x),f^i(y)).\]

We say that the set $E$ is $(n,\varepsilon)$-separated if $\rho_n^f(x,y)>\varepsilon$ for any distinct $x,y\in E$ and  we denote by $s(n,\varepsilon,f)$ the cardinality of the most numerous $(n,\varepsilon)$-separated set for $(X,f)$.

\begin{definition}
The  topological entropy of $f$ is defined as \[h(f)=\lim_{\varepsilon\searrow 0}\limsup_{n\to\infty}\frac 1n \log s(n,\varepsilon,f).\]
\end{definition} 

We begin with the intuitive explanation of the idea. 
Let us assume that we observe the dynamical system with the precision $\varepsilon>0$, that is, we can distinguish any two points only if they are apart by at least $\varepsilon$. Then, %we are able to see only as many points as is the cardinality of the biggest $(1,\varepsilon)$-separated set. Therefore,
 after $n$ iterations we will see at most $s(n,\varepsilon,f)$ different orbits. If transformation $f$ is mixing points, then $s(n,\varepsilon,f)$ will grow. Taking upper limit over $n$ will give us the asymptotic exponential  growth rate of number of (distinguishable) orbits,  and going with $\varepsilon$ to zero will give us the quantity which can be treated as a measure of exponential speed, with which the number of orbits grow (with $n$). Thus, as Li-Yorke chaos tells us if there is chaos in the system, the topological entropy tells us {\it how much} of chaos we have.

Both positive topological entropy and Li-Yorke chaos are local properties; in fact, entropy depends only on a specific subset of the phase space and is concentrated on the set of so-called nonwandering points \cite{Bow70}.  The question whether positive topological entropy implies
Li-Yorke chaos %or DC$2$ distributional chaos 
remained open for some time, but eventually it was shown to be true; see \cite{BGKM}. %The consequences of positive topological entropy for dynamics of pairs and tuples were also examined in \cite{BH,HLY}.
%Distributional chaos of type $2$ implies Li-Yorke chaos by definition. The converse implication is not true, because if $X=[0,1]$, then by \cite{SS} distributional chaos DC$3$ is equivalent to the positive topological entropy, and 
On the other hand, there are Li-Yorke chaotic interval maps with zero topological entropy (as was shown independently by Sm\'{\i}tal \cite{Smital} and Xiong \cite{Xiong}). %Piku{\l}a proved that there is no connection between positive topological entropy and DC$1$ distributional chaos \cite{Pikula}, therefore Li-Yorke chaos doesn't imply positive topological entropy, either. 
%All above is only a glimpse of the vast literature of the subject. 
For deeper discussion of these matters we refer the reader to the excellent surveys
by Blanchard \cite{Blanchard}, Glasner and Ye \cite{GY}, Li and Ye \cite{LY14} and Ruette's book \cite{Ruette}.

 %--- with explanations

%Quantal response equilibrium (QRE) is arguably one of the most well known solution concept in game theory under the assumption of bounded rationality. It was first introduced by  McKelvey and Palfrey \cite{mckelvey1995quantal,mckelvey1998quantal}.

\subsection{Derivation of the dynamics from multiplicative weights with discounting}
\label{s:MWUd}
In the remainder of this section we show how our model can be derived from Multiplicative Weights Update algorithm that is fitted with discounting of previous costs.
%We consider a two-strategy non-atomic congestion game where the play is driven by Multiplicative Weight Update (MWU) algorithm with discounting of previous costs

%We denote the strategies by $s_1, s_2$.
We recall that the cost of each strategy depends on the fraction of players $x \in [0,1]$ that use strategy $1$
\[
 c_1(x) = \alpha N x, \qquad c_2(1-x) = \beta N (1-x),
\]
where $N > 0$ is the mass of the entire population of players and $\alpha, \beta \geqslant 0$, $\max\{\alpha,\beta\} > 0$, are parameters that differentiate the strategies.

At every step $n \geqslant 0$ the strategies $1$ and $2$ have weights $w_1(n)$ and $w_2(n)$ respectively, where initially $w_1(0)=w_2(0)=1$. 
Then at step $n$ strategy $i$ is chosen by each player with probability $\frac{w_i(n)}{w_i(n)+w_j(n)}$, where $i,j \in \{1,2\}$, $i \neq j$.
%Then at step $n$ the strategy $s_1$ is chosen by each player with probability $\frac{w_1(n)}{w_1(n)+w_2(n)}$ and the strategy $s_2$ is chosen by each player	with probability $\frac{w_2(n)}{w_1(n)+w_2(n)}$. 
Because the population is homogeneous, the fraction of population that uses strategy $1$ at step $n+1$ is
\begin{equation}\label{MWU}
 x_{n+1} = \frac{w_1(n+1)}{w_1(n+1)+w_2(n+1)}.
\end{equation}
The weights are updated as follows
\begin{align*}
 w_1(n+1) &= w_1(0) \cdot (1-\varepsilon)^{\sum\limits_{k=0}^{n} (1-\sigma)^{n-k} c_1(x_k)},\\
 w_2(n+1) &= w_2(0) \cdot (1-\varepsilon)^{\sum\limits_{k=0}^{n} (1-\sigma)^{n-k} c_2(1-x_k)},
\end{align*}
%\[
% w_1(n+1) = w_1(0) \cdot (1-\varepsilon)^{\sum\limits_{k=0}^{n} (1-\sigma)^{n-k} c_1(x_k)}, \qquad
% w_2(n+1) = w_2(0) \cdot (1-\varepsilon)^{\sum\limits_{k=0}^{n} (1-\sigma)^{n-k} c_2(1-x_k)},
%\]
where $\varepsilon \in (0,1)$ is a learning rate and $\sigma \in [0,1]$ is a discount factor that depreciates past costs.
That is, the weight $w_i$ decreases with higher discounted cumulative cost of previous play of strategy $i \in \{1,2\}$.
We express the update rule of the weights in terms of previous-step weights;
\begin{align*}
 w_1(n+1) &= w_1(0) \cdot (1-\varepsilon)^{(1-\sigma)\sum\limits_{k=0}^{n-1} (1-\sigma)^{n-1-k} c_1(x_k)} \cdot (1-\varepsilon)^{c_1(x_n)} \\
 &= \left( w_1(n) \right)^{1-\sigma} \cdot (1-\varepsilon)^{c_1(x_n)}
\end{align*}
and
\begin{align*}
 w_2(n+1) &= w_2(0) \cdot (1-\varepsilon)^{(1-\sigma)\sum\limits_{k=0}^{n-1} (1-\sigma)^{n-1-k} c_2(1-x_k)} \cdot (1-\varepsilon)^{c_2(1-x_n)} \\
 &= \left( w_2(n) \right)^{1-\sigma} \cdot (1-\varepsilon)^{c_2(1-x_n)}.
\end{align*}
%\begin{align*}
% w_1(n+1) &= w_1(0) \cdot (1-\varepsilon)^{(1-\sigma)\sum\limits_{k=0}^{n-1} (1-\sigma)^{n-1-k} c_1(x_k)} \cdot (1-%\varepsilon)^{c_1(x_n)}
% = \left( w_1(n) \right)^{1-\sigma} \cdot (1-\varepsilon)^{c_1(x_n)}, \\
% w_2(n+1) &= w_2(0) \cdot (1-\varepsilon)^{(1-\sigma)\sum\limits_{k=0}^{n-1} (1-\sigma)^{n-1-k} c_2(1-x_k)} \cdot (1-%\varepsilon)^{c_2(1-x_n)}
% = \left( w_2(n) \right)^{1-\sigma} \cdot (1-\varepsilon)^{c_2(1-x_n)}.
%\end{align*}
Then from \eqref{MWU} we have that
\begin{equation}\label{xn_with_weights_and_epsilon}
 x_{n+1} = \frac{\left( w_1(n) \right)^{1-\sigma} \cdot (1-\varepsilon)^{c_1(x_n)}}{\left( w_1(n) \right)^{1-\sigma} \cdot (1-\varepsilon)^{c_1(x_n)} + \left( w_2(n) \right)^{1-\sigma} \cdot (1-\varepsilon)^{c_2(1-x_n)}}.
\end{equation}
%\begin{align*}
% x_{n+1} &= \frac{\left( w_1(n) \right)^{1-\sigma} \cdot (1-\varepsilon)^{c_1(x_n)}}{\left( w_1(n) \right)^{1-\sigma} \cdot (1-\varepsilon)^{c_1(x_n)} + \left( w_2(n) \right)^{1-\sigma} \cdot (1-\varepsilon)^{c_2(1-x_n)}} \\
% &= \frac{\left( \frac{w_1(n)}{w_1(n)+w_2(n)} \right)^{1-\sigma} \cdot \exp\left[ -\log\left( \frac{1}{1-\varepsilon} \right) \cdot c_1(x_n) \right]}{\left( \frac{w_1(n)}{w_1(n)+w_2(n)} \right)^{1-\sigma} \cdot \exp\left[ -\log\left( \frac{1}{1-\varepsilon} \right) \cdot c_1(x_n) \right] + \left( \frac{w_2(n)}{w_1(n)+w_2(n)} \right)^{1-\sigma} \cdot \exp\left[ -\log\left( \frac{1}{1-\varepsilon} \right) \cdot c_2(1-x_n) \right]} \\
% &= \frac{x_n^{1-\sigma} }{x_n^{1-\sigma} + \left( 1-x_n \right)^{1-\sigma} \cdot \exp\left[ N\log\left( \frac{1}{1-\varepsilon} \right) \cdot \left( \alpha x_n - \beta(1-x_n) \right) \right]}.
%\end{align*}
Note that $(1-\varepsilon)^{c_i(\cdot)} = \exp\left[ -\log\left( \frac{1}{1-\varepsilon} \right) \cdot c_i(\cdot) \right]$. Therefore, by dividing the numerator and the denominator of \eqref{xn_with_weights_and_epsilon} by %factor
$\left( w_1(n) + w_2(n) \right)^{1-\sigma} \cdot (1-\varepsilon)^{c_1(x_n)}$ we get
\[
 x_{n+1} = \frac{x_n^{1-\sigma} }{x_n^{1-\sigma} + \left( 1-x_n \right)^{1-\sigma} \cdot \exp\left[ N\log\left( \frac{1}{1-\varepsilon} \right) \cdot \left( \alpha x_n - \beta(1-x_n) \right) \right]}.
\]
Then by denoting $a = (\alpha+\beta)N\log\left( \frac{1}{1-\varepsilon} \right)$ and $b = \frac{\beta}{\alpha+\beta}$ we obtain the update rule \eqref{mwu}.
%\[
%x_{n+1} = \frac{x_n^{1-\sigma} }{x_n^{1-\sigma} + \left( 1-x_n \right)^{1-\sigma} \exp\left( a \left( x_n - b \right) \right)}.
%\]

\section{Results}
\subsection{One dimensional dynamics for $\sigma \in [0,1]$}

We are interested in discrete dynamical system on the unit interval $[0,1]$
\begin{equation}\label{dynamics}
 x_{n+1} = \fabs(x_n) = \frac{x_n^{1-\sigma}}{x_n^{1-\sigma}+(1-x_n)^{1-\sigma}\exp (a(x_n-b))},
\end{equation}
where $a > 0$, $b \in (0,1)$, $\sigma\in [0,1]$.

As we have one to one correspondence between the ratio of agents choosing first resource in the game $x$ and mixed strategy $(x,1-x)$, we will use the simplification saying that the fixed point of $f_{ab\sigma}$ is a Nash/perturbed equilibrium of the game.

 For $\sigma = 1$ the map $f_{ab\sigma}$ is decreasing and has one equilibrium $\ol{x}\in (0,1)$.
To study dynamics of $\fabs$ for $\sigma<1$ one can look at the derivative of $\fabs$ for $x\in (0,1)$, which is given by
\begin{equation} \label{derv1}
 f'_{ab\sigma}(x)= \frac{x^{-\sigma}(1-x)^{-\sigma}\exp(a(x-b))\left[ 1-\sigma - ax(1-x) \right]}{\left[ x^{1-\sigma}+(1-x)^{1-\sigma} \exp(a(x-b)) \right]^2}.
\end{equation}
The derivative of $\fabs$ can be written in an equivalent form
\begin{equation} \label{derv2}
 f'_{ab\sigma}(x)= f_{ab\sigma}(x) \big( 1-f_{ab\sigma}(x) \big) \left( \frac{1-\sigma}{x(1-x)} - a \right).
\end{equation}

 When $\sigma \in [0,1)$ the map $\fabs$ has three equilibria: 0, 1, and $\ol{x}\in (0,1)$. The unique equilibrium in $ (0,1)$ usually depends on $a, b$ and $\sigma$. When $\sigma<1$, the derivative of $\fabs$ is infinite at 0 and 1,
% so 0 and 1 are repelling for all $a>0$, $b\in (0,1)$ and $\sigma\in [0,1)$. 
therefore, the fixed points 0 and 1 are repelling independently of values of intensity of choice $a>0$, discount factor $\sigma \in [0,1)$ and Nash equilibrium of the game $b \in (0,1)$. 

%The map $\fabs$ has 3 fixed points: 0, 1, and $\ol{x}\in (0,1)$. The derivative of $\fabs$ is infinite at 0 and 1. Thus,  0 and 1 are repelling for all $a>0$ and $b\in (0,1)$ and $\sigma\in [0,1]$. The unique fixed point in $\ol{x}\in (0,1)$ usually depends on $a, b$ and $\sigma$. 

From \eqref{derv2} $\fabs$ is a homeomorphism as long as $a\in (0,4(1-\sigma)]$, and when $a>4(1-\sigma)$ the map $\fabs$ is  bimodal with critical points:
\[
 c_l = \frac{1-\sqrt{1-\frac{4(1-\sigma)}{a}}}{2}, \qquad c_r = \frac{1+\sqrt{1-\frac{4(1-\sigma)}{a}}}{2}.
\]

Moreover, from \eqref{derv2} we have that
\begin{equation}\label{dffixedpoint}
 f'_{ab\sigma}(\ol{x}) = a \ol{x}^2 - a \ol{x} + 1-\sigma =1- \sigma - a\ol{x}(1-\ol{x}).
\end{equation}

\subsection{Conjugate map $F$}
\begin{remark}\label{conjugacy}
Set $y=\frac 1a\log\frac{1-x}x$ and $F(y)=\frac 1a\log\frac{1-f_{ab\sigma}(x)}{f_{ab\sigma}(x)}$. Then
%$e^{ay}=\frac{1-x}x$, so $x=\frac1{e^{ay}+1}$. We get
%\[
%\frac{1-f_{ab\sigma}(x)}{f_{ab\sigma}(x)}=\frac{(1-x)^{1-\sigma}\exp(a(x-b))}{x^{1-\sigma}}=
%\left(\frac{1-x}x\right)^{1-\sigma}\exp(a(x-b)),
%\]
%so
\begin{equation} \label{F}
F(y)=(1-\sigma) y+\frac1{e^{ay}+1}-b.
\end{equation}
The map $x\mapsto\frac 1a\log\frac{1-x}x$ is a diffeomorphism from $(0,1)$
onto $\R$. Therefore $f_{ab\sigma}$ on $(0,1)$ is smoothly conjugate to $F$ on
$\R$.\footnote{The map $F$ depends on three parameters, but in order to simplify the
notation we will not mark this dependence.} 
%Studying dynamics of $F$ is usually simpler than $f_{ab\sigma}$. Thus, we will repeatedly look at the dynamics of our system through the lenses of the map $F$.
\end{remark}

As the derivatives of $f_{ab\sigma}$ at 0 and 1 are infinite, we will often make use of the map $F$ (given by \eqref{F}) which by Remark \ref{conjugacy} is topologically conjugate to $f_{ab\sigma}$. This means that instead of investigating the dynamics of $f_{ab\sigma}$, we
may investigate the dynamics of $F$. It is worth adding that studying the dynamics of $F$ is usually simpler than for $f_{ab\sigma}$. Thus, we will repeatedly look at the dynamics of our system through the lenses of the map $F$.

The conjugate map $F$ introduced in \eqref{F} has some nice properties, for instance it has negative Schwarzian derivative for $a>4(1-\sigma)$. This is important as the dynamics is fairly
regular if the map has negative Schwarzian derivative. Recall that the
Schwarzian derivative is given by
\[
SF=\frac{F'''}{F'}-\frac32\left(\frac{F''}{F'}\right)^2.
\]

\begin{lemma}\label{ls}
If $a>4(1-\sigma)$ then Schwarzian derivative of $F$ is negative.
\end{lemma}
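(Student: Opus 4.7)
The idea is to compute the three derivatives of $F$ directly, put everything in terms of a single convenient quantity $p := g(1-g)$ where $g(y) := 1/(e^{ay}+1)$, and then reduce the sign of $SF$ to a simple inequality that will be controlled by the hypothesis $a>4(1-\sigma)$.

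\textbf{Step 1: Derivatives of $F$.} Observe that $g'=-ag(1-g)$, from which induction gives
\[
g'=-ap,\qquad g''=a^{2}pq,\qquad g'''=-a^{3}p(1-6p),
\]
where I am writing $p=g(1-g)$ and $q=1-2g$, and using the identity $q^{2}=1-4p$. Since $F(y)=(1-\sigma)y+g(y)-b$, this yields
\[
F'=(1-\sigma)-ap,\qquad F''=a^{2}pq,\qquad F'''=-a^{3}p(1-6p).
\]

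\textbf{Step 2: Compute $SF$.} Set $D:=F'=(1-\sigma)-ap$. Plugging into $SF=F'''/F'-\tfrac32(F''/F')^{2}$ and multiplying by $2D^{2}$ gives
\[
2D^{2}\,SF \;=\; -2a^{3}p(1-6p)\,D \;-\; 3a^{4}p^{2}q^{2}.
\]
Now I expand $D=(1-\sigma)-ap$ and use $q^{2}=1-4p$. The key cancellation is
\[
2(1-6p)-3q^{2}=2(1-6p)-3(1-4p)=-1,
\]
which collapses the $a^{4}p^{2}$ terms. After this simplification one gets the clean formula
\[
2D^{2}\,SF \;=\; -a^{3}p\Bigl[\,2(1-\sigma)(1-6p)+ap\,\Bigr].
\]

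\textbf{Step 3: Sign analysis.} Since $a>0$, $p>0$, and $D^{2}>0$ wherever $SF$ is defined, it suffices to prove that the bracket
\[
B(p):=2(1-\sigma)(1-6p)+ap=2(1-\sigma)+p\bigl[a-12(1-\sigma)\bigr]
\]
is strictly positive for all attainable values $p\in(0,1/4]$ (the upper bound $1/4$ coming from $g\in(0,1)$, with equality at $y=0$). If $a\ge 12(1-\sigma)$ both summands are nonnegative and at least one is strictly positive. If $a<12(1-\sigma)$ the function $B$ is linear decreasing in $p$, so its infimum on $(0,1/4]$ is attained at $p=1/4$, giving $B(1/4)=a/4-(1-\sigma)$, which is strictly positive precisely under the hypothesis $a>4(1-\sigma)$. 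In both cases $B(p)>0$, hence $SF<0$.

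\textbf{Expected obstacle.} The computation of $g''$ and especially $g'''$ is routine but error-prone; the real content of the lemma is the miraculous cancellation $2(1-6p)-3q^{2}=-1$ that reduces the problem from a quartic-in-$p$ expression to the linear-in-$p$ bracket $B(p)$. Getting to that form is where the proof stands or falls; once it is in place, the hypothesis $a>4(1-\sigma)$ is clearly the sharp threshold that keeps $B$ positive against the worst-case value $p=1/4$ (which is exactly the critical point structure of $f_{ab\sigma}$ noted after \eqref{derv2}), so the result is tight.
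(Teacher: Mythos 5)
Your proof is correct and follows essentially the same route as the paper: compute $F'$, $F''$, $F'''$, form $2F'F'''-3(F'')^2$, exploit the cancellation of the top-order term, and bound the remaining expression using $a>4(1-\sigma)$ via $p=\frac{e^{ay}}{(e^{ay}+1)^2}\le\frac14$ (the paper does the identical computation in the variable $t=e^{ay}$, where your bound $p\le 1/4$ appears as $(t-1)^2\ge 0$). The only difference is the cosmetic reparametrization through $p=g(1-g)$, which makes the cancellation $2(1-6p)-3q^2=-1$ a bit cleaner to see.
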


\begin{proof}
For simplicity, let us use notation $t=e^{ax}$. Elementary calculations
give us
\begin{align*}
F'(x)   &= 1-\sigma-a\frac{t}{(1+t)^2},\\
F''(x)  &= a^2\frac{t^2-t}{(1+t)^3},\\
F'''(x) &= a^3\frac{-t^3+4t^2-t}{(1+t)^4}.
\end{align*}

Schwarzian derivative of $F$ is negative if and only if
$2F'F'''-3(F'')^2<0$. From our formulas we get
\[
2F'F'''-3(F'')^2=\big(2(1-\sigma)(-t^2+4t-1)-at\big)\frac{a^3t}{(1+t)^4}.
\]
We have
\[
2(1-\sigma)(-t^2+4t-1)-at=2(1-\sigma)\big(2t-(t-1)^2\big)-at\le
4(1-\sigma) t-at=(4(1-\sigma)-a)t,
\]
so if $a>4(1-\sigma)$ then $SF<0$.
\end{proof}

As
\[
F'(x)=1-\sigma-\frac{ae^{ax}}{(e^{ax}+1)^2}.
\]
 either $F$ is strictly increasing, or it is bimodal,
with $F$ increasing on the left and right laps, and decreasing on the
middle lap. By Lemma \ref{ls} if $F$ is bimodal, then it
has negative Schwarzian derivative.

\subsection{Properties of the interior equilibrium}

 %Now we will focus on the properties of $\ol{x}$, a unique fixed point of $f_{ab\sigma}$ in $(0,1)$ for $\sigma \in [0,1]$.% It is a quantal response equilibrium of our game.
As the fixed points 0 and 1 are always repelling, we focus our considerations on $\ol{x}$ -- the interior fixed point of the dynamics~\eqref{dynamics}. By Definition \ref{pertdef} and equation \eqref{perteq} we have the following:

\begin{remark} \label{xpe}
Let $\sigma \in [0,1]$. The interior fixed point of $f_{ab\sigma}$ is a perturbed equilibrium. In particular, when $\sigma = 0$, the interior fixed point of $f_{ab\sigma}$ is a Nash equilibrium.
\end{remark}
%Lemma \ref{xpe} follows from the fact that the dynamics induced by the map $\fabs$ cn be obtained  from  \eqref{perturbedBR} with $R(x)=x\log x+(1-x)\log (1-x)$.
%\begin{align*}
% x_{n+1} = \min_{x \in (0,1)} \Bigg[ &\ln\frac{1}{1-\varepsilon} \sum_{t=0}^n (1-\sigma)^{n-t} \left( c_1(x_t) \cdot x + c_2(1-x_t) \cdot (1-x) \right)\\
%  &+ x\log x + (1-x) \log(1-x) \Bigg].
%\end{align*}

%By Lemma \ref{xpe} the unique interior fixed point is a perturbed equilibrium of the game.

We now describe the location of the interior perturbed equilibrium $\ol{x}$ and its monotonic convergence to either $1/2$ or Nash equilibrium $b$. Firsf we show an auxiliary lemma.

\begin{lemma}\label{monotonic}
If $b \in (0,1/2)$, then $\oxa<0$; if $b \in (1/2,1)$, then $\oxa>0$.
In particular, if $b\in(0,1)\setminus\{1/2\}$ then $(1-2x)\oxa<0$.
\end{lemma}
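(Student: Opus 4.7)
The plan is to use implicit differentiation of the fixed-point characterization given in equation~\eqref{perteq}. Substituting the explicit costs $c_1(x) = \alpha N x$ and $c_2(1-x) = \beta N(1-x)$ with $\alpha + \beta = 1$ and $b = \beta$, equation~\eqref{perteq} reduces to
\[
a(b - \ol{x}) \;=\; \sigma \log\frac{\ol{x}}{1-\ol{x}},
\]
which I will call the \emph{fixed-point relation}. I proceed under the implicit assumption $\sigma > 0$: for $\sigma = 0$ the fixed point equals $b$ identically in $a$, so $\oxa = 0$ and the strict inequalities in the lemma are vacuous.

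First I would extract a structural observation from the fixed-point relation: $\ol{x} - 1/2$ and $b - 1/2$ share the same sign. Indeed, the right-hand side is positive iff $\ol{x} > 1/2$, and then the left-hand side forces $b > \ol{x} > 1/2$; the case $\ol{x} < 1/2$ is symmetric, and $\ol{x} = 1/2$ forces $b = 1/2$. In particular, for $b \in (0,1) \setminus \{1/2\}$ the equilibrium $\ol{x}$ lies strictly between $1/2$ and $b$.

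Next I would apply the implicit function theorem to
\[
G(\ol{x}, a) \;:=\; a(b - \ol{x}) - \sigma \log\frac{\ol{x}}{1-\ol{x}}.
\]
Computing
\[
\frac{\partial G}{\partial \ol{x}} \;=\; -a - \frac{\sigma}{\ol{x}(1-\ol{x})} \;<\; 0, \qquad \frac{\partial G}{\partial a} \;=\; b - \ol{x},
\]
I obtain
\[
\oxa \;=\; \frac{b - \ol{x}}{a + \sigma/\bigl(\ol{x}(1-\ol{x})\bigr)}.
\]
The denominator is strictly positive, so $\oxa$ has the sign of $b - \ol{x}$.

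To conclude, observe that by the fixed-point relation $b - \ol{x}$ has the sign of $\log\frac{\ol{x}}{1-\ol{x}}$, i.e.\ of $\ol{x} - 1/2$, which by the first observation equals the sign of $b - 1/2$. This yields both bulleted claims; the final inequality $(1-2\ol{x})\,\oxa < 0$ then follows immediately because $1 - 2\ol{x}$ has the sign opposite to $\ol{x} - 1/2$ and hence opposite to $\oxa$. The only substantive step is recognizing that the fixed-point relation packages two facts at once---that $\ol{x}$ is sandwiched between $1/2$ and $b$, and that $\ol{x}$ moves toward $b$ as $a$ increases; the rest is routine sign bookkeeping.
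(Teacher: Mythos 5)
Your proposal is correct and follows essentially the same route as the paper: implicit differentiation of the fixed-point relation $a(\ol{x}-b)=\sigma\log\frac{1-\ol{x}}{\ol{x}}$ yields the identical identity $\bigl(a+\frac{\sigma}{\ol{x}(1-\ol{x})}\bigr)\oxa=b-\ol{x}$, and the sign is read off from $b-\ol{x}$. The only difference is that you justify the sandwich property (that $\ol{x}$ lies strictly between $1/2$ and $b$) on the spot, whereas the paper invokes it implicitly and proves it only in the subsequent Proposition \ref{qreolx}; your version is, if anything, slightly more self-contained.
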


\begin{proof}
The equation $\fabs(\ol{x}) = \ol{x}$ is equivalent to
\begin{equation}\label{fp11}
 \ol{x} = b + \frac{\sigma}{a} \log\left( \frac{1-\ol{x}}{\ol{x}} \right).
\end{equation}
By~\eqref{fp11}, we have
\[
a\ox=ab+\sigma\log\frac{1-\ox}{\ox}.
\]
Take the total derivtive of both sides with respect to $a$. We get
\[
\ox+a\oxa=b-\frac{\sigma}{\ox(1-\ox)}\cdot\oxa.
\]
Therefore
\begin{equation}\label{e11}
\left(a+\frac{\sigma}{\ox(1-\ox)}\right)\oxa=b-\ox.
\end{equation}
If $b \in (0,1/2)$, then $b-\ox<0$; if $b \in (1/2,1)$ then $b-\ox>0$.
This completes the proof.
\end{proof}

\begin{proposition} \label{qreolx}
Let $\sigma \in (0,1]$, and let $\ol{x} \in (0,1)$ be a (unique) perturbed equilibrium. Then $\ol{x}$ lies between $1/2$ and $b$. Moreover, when the intensity of choice $a$ tends to zero, $\ol{x}$ tends monotonically to $1/2$, while as intensity of choice tends to infinity, $\ol{x}$ converges monotonically to Nash equilibrium $b$. Finally, if $\sigma=0$, then $\ol{x}=b$ is the unique Nash equilibrium for all $a>0$.
%\item $\lim\limits_{a \to 0^+} \ol{x} = \frac 12$, $\lim\limits_{a \to \infty} \ol{x} = b$.
\end{proposition}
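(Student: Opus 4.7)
The argument revolves around the fixed-point identity derived inside the proof of Lemma~\ref{monotonic}, namely
\begin{equation}\label{planfp}
\ol{x} \;=\; b + \frac{\sigma}{a}\log\frac{1-\ol{x}}{\ol{x}},
\end{equation}
valid for any perturbed equilibrium $\ol{x}\in(0,1)$. The case $\sigma=0$ is immediate: \eqref{planfp} collapses to $\ol{x}=b$, which by Remark~\ref{xpe} is the unique Nash equilibrium. For the rest of the proof I assume $\sigma\in(0,1]$.

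\textbf{Location of $\ol{x}$.} I would obtain the bracketing of $\ol{x}$ between $1/2$ and $b$ by a two-line sign argument on \eqref{planfp}. If $b<1/2$, then assuming $\ol{x}\ge 1/2$ yields $\log\!\bigl((1-\ol{x})/\ol{x}\bigr)\le 0$, so \eqref{planfp} forces $\ol{x}\le b<1/2$, a contradiction; conversely, assuming $\ol{x}\le b$ gives $\log\!\bigl((1-\ol{x})/\ol{x}\bigr)\ge \log\!\bigl((1-b)/b\bigr)>0$, so \eqref{planfp} forces $\ol{x}>b$, again a contradiction. Hence $b<\ol{x}<1/2$. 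The case $b>1/2$ is symmetric, and $b=1/2$ yields $\ol{x}=1/2$ directly from \eqref{planfp}. In particular $\ol{x}$ always lies in the compact interval with endpoints $b$ and $1/2$, bounded away from $0$ and $1$.

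\textbf{Monotonicity in $a$.} Here I would plug the location information back into Lemma~\ref{monotonic}. That lemma gives $(1-2\ox)\,\oxa<0$ whenever $b\neq 1/2$. Combined with the previous step, if $b<1/2$ then $\ol{x}<1/2$, so $1-2\ol{x}>0$ and $\oxa<0$; if $b>1/2$ then $\ol{x}>1/2$, so $1-2\ol{x}<0$ and $\oxa>0$. In both cases $\ol{x}$ moves monotonically away from $1/2$ and toward $b$ as $a$ increases, which already gives the monotonicity claimed in the proposition.

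\textbf{Limits.} Both limits follow cleanly from \eqref{planfp} once we use that $\ol{x}$ is confined to the interval with endpoints $b$ and $1/2$, so the quantity $L(\ol{x}):=\log\!\bigl((1-\ol{x})/\ol{x}\bigr)$ stays uniformly bounded in absolute value by $|\log((1-b)/b)|$. As $a\to\infty$, the perturbation term $(\sigma/a)L(\ol{x})$ in \eqref{planfp} tends to $0$, forcing $\ol{x}\to b$. For $a\to 0$, I rewrite \eqref{planfp} as $L(\ol{x})=(a/\sigma)(\ol{x}-b)$; since $\ol{x}-b$ is bounded, the right-hand side tends to $0$, so $L(\ol{x})\to 0$ and hence $\ol{x}\to 1/2$. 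The monotonic character of these limits is inherited from the previous step. No step presents a real obstacle; the only care needed is keeping the sign conventions straight across the two cases $b<1/2$ and $b>1/2$.
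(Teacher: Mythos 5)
Your proposal is correct and follows essentially the same route as the paper's proof: the same fixed-point identity $\ol{x}=b+\frac{\sigma}{a}\log\frac{1-\ol{x}}{\ol{x}}$, the same sign argument for the bracketing between $b$ and $1/2$, Lemma~\ref{monotonic} for monotonicity, and the same boundedness-of-the-logarithmic-term argument for both limits. Your handling of the $a\to\infty$ limit (observing directly that the perturbation term $\frac{\sigma}{a}\log\frac{1-\ol{x}}{\ol{x}}$ vanishes) is marginally more direct than the paper's, but it is not a different approach.
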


\begin{proof}%[Proof of Proposition \ref{qreolx}]

We begin with the first assertion. Let $b < \frac 12$. If $\ol{x} \in \left( 0, \frac 12 \right]$, then $\log\left( \frac{1-\ol{x}}{\ol{x}} \right) \geqslant 0$, and therefore, by \eqref{fp11} we have that $\ol{x} \geqslant b$. Otherwise $\ol{x} \in \left( \frac 12, 1 \right)$, but then $\log\left( \frac{1-\ol{x}}{\ol{x}} \right) < 0$, and by \eqref{fp11} we get $\ol{x} < b < \frac 12$, a contradiction. Thus, $\ol{x} \in [b,\frac 12]$. The proof of the case $b > \frac 12$ is analogous.

We proceed with the second assertion. We put \eqref{fp11} in the equivalent form
\begin{equation}\label{fp2}
 a(\ol{x}-b) = \sigma \log\left( \frac{1-\ol{x}}{\ol{x}} \right).
\end{equation}
Denote
\[
 L(a) = a(\ol{x}-b) \qquad \text{and}\qquad R(a) = \sigma \log \left( \frac{1-\ol{x}}{\ol{x}} \right).
\]
Note that $\lim\limits_{a \to 0^+} L(a) = 0$. By \eqref{fp2} $L(a)=R(a)$ for all $a$. Thus, $\lim\limits_{a \to 0^+} R(a) = 0$. For $\sigma \in (0,1]$ this last equation holds if and only if $ \lim\limits_{a \to 0^+}\ol{x} = \frac 12$.

Because $\ol{x} \in [b,\frac 12]$ for $b<\frac 12$ and $\ol{x} \in [\frac 12,b]$ for $b>\frac 12$, we have that $\lim\limits_{a \to \infty} R(a) < \infty$. By \eqref{fp2} we have $\lim\limits_{a \to \infty} L(a) < \infty$. This last inequality holds if and only if $\lim\limits_{a \to \infty} \ol{x} = b$.

The fact that the convergence of $\ol{x}$ is monotonic follows from Lemma \ref{monotonic}.
\end{proof}

Proposition \ref{qreolx} guarantees that the perturbed equilibrium is bounded by $\frac 12$ and $b$. Moreover, it describes two extreme cases. When the intensity of choice tends to zero, the perturbed equilibrium $\ol{x}$ approaches the case when an agent  is indifferent about his payoff and thus which resource to choose. As both choices are equally likely, the split $(\frac12,\frac12)$ is chosen. On the other hand, if intensity of choice tends to infinity, then a small historical advantage of a given choice causes that choice to be more probable. Then the perturbed equilibrium approaches Nash equilibrium. %This last result agrees with the theory of perturbed equilibria.

\begin{remark}
Proposition \ref{qreolx} and equation \eqref{fp11} imply that as $\sigma$ converges to $0$, the perturbed equilibrium $\ol{x}$ converges to Nash equilibrium $b$. 
\end{remark}

%Since we know that $\overline{x}$ is a QRE it is not suprising 
We next study the convergence of trajectories of the dynamics \eqref{dynamics} to the perturbed equilibrium.

\begin{theorem} \label{afp}
Let $\sigma \in [0,1)$. As long as the perturbed equilibrium $\ol{x}$  is attracting, it attracts all trajectories of points from $(0,1)$. For $\sigma=1$ the perturbed equilibrium attracts also trajectories of $0$ and $1$.
\end{theorem}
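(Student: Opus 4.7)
The plan is to pass to the conjugate map $F$ of Remark~\ref{conjugacy} and prove that every $y \in \R$ satisfies $F^n(y) \to \ol{y}$; by the conjugacy this is equivalent to $\ol{x}$ attracting every $x \in (0,1)$ under $f_{ab\sigma}$. As a preliminary I note that $G(y) := F(y)-y$ has derivative $-\sigma - a e^{ay}/(1+e^{ay})^2 < 0$, so $G$ is strictly decreasing and has $\ol{y}$ as its unique zero; thus $F$ has a unique fixed point on $\R$. The argument then splits into the monotone and bimodal regimes of $F$, and the extra statement for $\sigma=1$ is handled at the end.

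In the monotone regime $a \le 4(1-\sigma)$, the derivative $F'(y) = 1-\sigma - a e^{ay}/(1+e^{ay})^2$ is nonnegative, so $F$ is nondecreasing. Combining $F(y) > y$ for $y < \ol{y}$ (from $G(y) > 0$) with $F(y) < F(\ol{y}) = \ol{y}$ (by monotonicity of $F$), the orbit $\{F^n(y)\}$ is monotone increasing and bounded above by $\ol{y}$, hence converges to the unique fixed point; the case $y > \ol{y}$ is symmetric. No stability hypothesis is needed here, since $F'(\ol{y}) = 1-\sigma - a\ol{x}(1-\ol{x}) \in [0,1)$ in this regime, so attractivity is automatic.

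The substantial case is the bimodal regime $a > 4(1-\sigma)$, where $F$ has two critical points and, by Lemma~\ref{ls}, $SF < 0$. My first step is to build a compact trapping interval $J = [\alpha,\beta]$ with $F(J) \subset J$ that absorbs every orbit in a bounded number of iterates. This uses $F(y) = (1-\sigma)y - b + o(1)$ as $y \to +\infty$ and $F(y) = (1-\sigma)y + (1-b) + o(1)$ as $y \to -\infty$: for $\sigma > 0$ the outer behavior is a strict linear contraction toward $\ol{y}$, while for $\sigma = 0$ the constant downward/upward drift of size $b$ or $1-b$ pushes orbits into $J$ in finitely many steps. My second step is to apply Singer's theorem to $F|_J$: each attracting periodic orbit of an $S$-bimodal map captures a critical point of $F$ in its immediate basin, and the endpoints of $J$ are chosen not to be periodic, so at most two attracting cycles exist and $\ol{y}$ is one of them by hypothesis. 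The third step is to rule out a second attracting cycle $P$: combining the no-wandering-intervals theorem for $C^3$ maps with $SF<0$ (de~Melo--van~Strien) with the fact that any such $P$ would have to be fed by the remaining critical orbit, I would argue that coexistence of $P$ with the hyperbolic attractor $\ol{y}$ is incompatible with the negative-Schwarzian distortion estimates once the trapping geometry on $J$ is written down explicitly. With both critical orbits absorbed by $\ol{y}$, the standard negative-Schwarzian conclusion (no Cantor attractor, no further periodic attractor, no wandering interval) yields that the basin of $\ol{y}$ equals all of $J$, hence all of $\R$.

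For $\sigma = 1$ the map $f_{ab,1}(x) = 1/(1+\exp(a(x-b)))$ is strictly decreasing on $[0,1]$ with image in $(0,1)$, so $0$ and $1$ cease to be fixed. The second iterate $f^2$ is increasing with $\ol{x}$ as attracting fixed point (since $(f^2)'(\ol{x}) = (f'(\ol{x}))^2 < 1$), and the monotone argument from the first case, applied to $f^2$, yields convergence of every orbit in $[0,1]$, provided $\ol{x}$ is the unique fixed point of $f^2$; an additional fixed point would be a period-$2$ point of $f$, which the same negative-Schwarzian machinery excludes under the attracting hypothesis. The main obstacle is the bimodal regime: elevating local attraction of $\ol{y}$ to convergence of \emph{every} orbit (rather than a.e.\ orbit) requires the careful interplay of the trapping interval with one-dimensional negative-Schwarzian dynamics to exclude coexisting attractors, and the borderline $\sigma=0$ is delicate because the asymptotic contraction factor degenerates to $1$, so the trapping argument has to rely on drift rather than on a strict linear contraction.
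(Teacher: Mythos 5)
Your setup (conjugating to $F$, observing that $F-\mathrm{id}$ is strictly decreasing so the fixed point $\fp$ is unique, and handling the monotone regime $a\le 4(1-\sigma)$ by a monotone-orbit argument) matches the paper's, and that part is fine. The genuine gap is in the bimodal regime, at your ``third step.'' Singer's theorem leaves room for exactly one additional attracting cycle, fed by the critical point not captured by $\fp$, and your proposed exclusion --- that ``coexistence of $P$ with the hyperbolic attractor $\fp$ is incompatible with the negative-Schwarzian distortion estimates once the trapping geometry on $J$ is written down explicitly'' --- is not an argument and is not true as a general principle: bimodal maps with negative Schwarzian derivative routinely carry two coexisting attracting cycles, one per critical point, inside a trapping interval. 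Nothing in ``$SF<0$ plus a trapping region'' rules this out; you need to use the specific structure of $F$, and as written your proof cannot be completed along these lines. (Your $\sigma=1$ paragraph leans on the same unproved ``machinery excludes'' claim to kill period-$2$ orbits of $f$, so it inherits the gap.)

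The paper closes this hole with a different and more elementary mechanism, which you should compare against. First, any point whose trajectory does not converge to a fixed point forces the existence of a period-$2$ orbit (a classical result for interval maps with repelling behavior at the ends, cited as \cite{SKSF}), and any period-$2$ orbit $\{p,q\}$ with $p<q$ satisfies $F(p)=q>p$, $F(q)=p<q$, hence straddles a fixed point in $(p,q)$; by uniqueness of $\fp$ this means a period-$2$ orbit must straddle $\fp$. Consequently it suffices to show that \emph{one entire half-line} $(-\infty,\fp]$ or $[\fp,\infty)$ lies in the basin of $\fp$ (Lemma~\ref{l2a}). The Schwarzian hypothesis is used only for this one-sided statement: when $\fp$ lies in the decreasing lap, $SF<0$ forces the immediate basin to reach a critical point, say $c_-$; pulling back once gives $[y,\fp]$ in the basin where $F(y)=\fp$, and for $x<y$ the orbit increases monotonically until it enters $[y,\fp]$. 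This yields $(-\infty,\fp]$ in the basin, whence no period-$2$ orbit can exist and global convergence follows --- no distortion estimates, no appeal to de~Melo--van~Strien, and no need to construct a compact trapping interval (which, as you note, is delicate at $\sigma=0$ where the outer contraction degenerates). If you want to salvage your draft, replace your third step with this straddling/half-line argument.
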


%In the proof of Theorem \ref{afp} we work with the conjugate map $F$ from \eqref{F}. For $F$ we show that if a fixed point of $F$ is attracting, then it is globally attracting.

%\subsection{Proof of Theorem \ref{afp}}

Theorem \ref{afp} guarantees that local stability of the perturbed equilibrium imply global convergence to this perturbed equilibrium. In other words, starting from any initial condition, that is any mixed strategy profile $(x,1-x)$, the system will converge to $(\overline{x},1-\overline{x})$.\footnote{It is worth mentioning that existence of attracting perturbed equilibrium or even attracting Nash equilibrium does not exclude possibility of chaotic behavior. For instance, Follow the Regularized Leader algorithm, admits coexistence of attracting Nash equilibrium and chaos \cite{BCFKMP21}.} Therefore, the description of the dynamics of the game is simple as long as $\ol{x}$ is attracting.

\begin{proof}[Proof of Theorem \ref{afp}]
We are going to show that if $\ol{x}$ is attracting, then it attracts all points from $(0,1)$. To this aim we will work on the conjugate map $F$.

\begin{lemma}\label{l2a} The map $F$ has a unique fixed point $\fp$. 
If the trajectories of all points $x<\fp$ are attracted to $\fp$, then
the trajectories of all points of $\R$ are attracted to $\fp$.
Similarly, if the trajectories of all points $x>\fp$ are attracted to
$\fp$, then the trajectories of all points of $\R$ are attracted to
$\fp$.
\end{lemma}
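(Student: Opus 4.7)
The plan is to prove both statements of the lemma simultaneously via a single sign property of $F - \mathrm{id}$. This makes uniqueness of $\fp$ immediate and forces any orbit to either cross $\fp$ (and so be handed to the hypothesis) or to be monotone on its side of $\fp$, where monotone convergence combined with uniqueness finishes the argument. The bimodality of $F$ will play no role.

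First I would verify uniqueness. Differentiating
\[
F(y) - y = -\sigma y + \frac{1}{e^{ay}+1} - b
\]
gives $-\sigma - ae^{ay}/(e^{ay}+1)^2$, which is strictly negative for every $y \in \R$ whenever $\sigma \in [0,1]$ and $a > 0$. Since $F(y) - y$ has opposite signs at $+\infty$ and $-\infty$ in both regimes (the $-\sigma y$ term dominates when $\sigma > 0$, while $\frac{1}{e^{ay}+1} - b$ suffices when $\sigma = 0$ because $b \in (0,1)$), strict monotonicity yields exactly one zero, which is $\fp$. As an immediate consequence, $F(y) > y$ for $y < \fp$ and $F(y) < y$ for $y > \fp$.

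Next I would establish attraction. Suppose every trajectory from a point $y_0 < \fp$ converges to $\fp$, and fix any $y_0 > \fp$. Set $y_n = F^n(y_0)$. Either $y_k < \fp$ for some $k$, in which case the hypothesis applied at $y_k$ gives $y_n \to \fp$, or $y_n \geq \fp$ for every $n$. In the latter case the sign analysis above forces $y_{n+1} \leq y_n$, so $(y_n)$ is non-increasing and bounded below by $\fp$; it therefore converges to some $L \geq \fp$. Continuity of $F$ gives $F(L) = L$, and uniqueness of the fixed point forces $L = \fp$. The dual statement is proved by the mirror argument, with a non-decreasing sequence bounded above by $\fp$.

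I foresee no serious obstacle: the entire argument rests on the one-line derivative calculation that pins down the sign of $F - \mathrm{id}$ everywhere on $\R$; after that, the conclusion is just monotone convergence together with the uniqueness of the fixed point that the same calculation has already delivered.
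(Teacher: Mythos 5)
Your proof is correct, but it takes a genuinely different and more elementary route than the paper's. The paper establishes uniqueness from $F'<1$ via the Mean Value Theorem, and then handles attraction by contradiction: if some trajectory were not attracted, a cited result on interval maps (together with the fact that $\pm\infty$ are repelling) would produce a periodic orbit of period $2$, which by the hypothesis would have to lie entirely on one side of $\fp$, and the intermediate value theorem applied to $F-\mathrm{id}$ between the two points of that $2$-cycle would then yield a second fixed point --- a contradiction. You instead observe that $F-\mathrm{id}$ is strictly decreasing with a sign change, which simultaneously gives uniqueness of $\fp$ and the sign dichotomy $F(y)>y$ for $y<\fp$, $F(y)<y$ for $y>\fp$; any orbit on the side not covered by the hypothesis then either crosses $\fp$ (and is handed to the hypothesis) or remains on that side, where it is monotone toward $\fp$, bounded by $\fp$, and hence converges to a fixed point that uniqueness forces to be $\fp$. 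Your argument is self-contained (no appeal to the period-$2$ dichotomy theorem for interval maps) and avoids any discussion of the behavior at infinity, since the monotone subcase is automatically bounded by $\fp$; the paper's argument, by contrast, leans on machinery that it reuses elsewhere (negative Schwarzian derivative, period-$2$ orbits), which is presumably why the authors phrase it that way. Both proofs are valid; yours is arguably the cleaner one for this particular lemma.
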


\begin{proof}
If $x$ is sufficiently large, then $F(-x)>-x$ and $F(x)<x$. Therefore,
$F$ has a fixed point. Since $F'<1-\sigma<1$, by the Mean Value Theorem,
$F$ cannot have two distinct fixed points. We will denote the fixed point of $F$ by $\fp$. Obviously $\fp=\log\frac{1-\ol{x}}{\ol{x}}$.

Assume that there is a point of $\R$, whose trajectory is not
attracted to $\fp$. Since both $-\infty$ and $\infty$ are repelling,
by \cite{SKSF}, $F$ has a periodic orbit of period 2. If the trajectories
of all points $x<\fp$ (respectively, $x>\fp$) are attracted to $\fp$,
this periodic orbit has to lie entirely to the right (respectively,
left) of $\fp$. Thus, there is a fixed point to the right
(respectively, left) of $\fp$, a contradiction.
\end{proof}

\begin{lemma}\label{t1a}
If the fixed point of $F$ is attracting, then it is globally
attracting.
\end{lemma}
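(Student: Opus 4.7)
The plan is to pass to the conjugate map $F$ on $\R$ via Remark \ref{conjugacy} and split into two regimes based on whether $F$ is a homeomorphism or bimodal; from \eqref{derv1}, the threshold is $a=4(1-\sigma)$.

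For $a\le 4(1-\sigma)$, one has $F'(y)=1-\sigma-\frac{ae^{ay}}{(e^{ay}+1)^2}\in[0,1-\sigma]\subset[0,1)$ for every $y$, so $F$ is non-decreasing with slope strictly below $1$. Because $F(y)-y$ vanishes only at $\fp$ (Lemma \ref{l2a}) and tends to $\mp\infty$ as $y\to\pm\infty$, one has $F(y)<y$ for $y>\fp$; combined with $F(y)>F(\fp)=\fp$ by monotonicity, the orbit $\{F^n(y)\}$ is trapped in $(\fp,y]$ and strictly decreasing, hence tends by monotone convergence to a fixed point, which must be $\fp$. The mirror argument handles $y<\fp$, so global attraction holds in this regime without even invoking Lemma \ref{l2a}.

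For $a>4(1-\sigma)$, $F$ is bimodal and Lemma \ref{ls} gives $SF<0$, placing us in the $S$-bimodal interval-map setting. I would argue by contradiction. Assume $\fp$ is attracting but not globally attracting; then the contrapositive of both halves of Lemma \ref{l2a} gives non-attracted points strictly on each side of $\fp$, and the argument from \cite{SKSF} used in the proof of Lemma \ref{l2a} yields a period-$2$ orbit $\{p,q\}$ with $p<\fp<q$. To eliminate this orbit, I would first trap $F$ in a compact invariant interval $[A,B]$ (which exists because $F(y)-y\to\mp\infty$ at $\pm\infty$, so orbits are eventually bounded) and then apply Singer's theorem on $F|_{[A,B]}$: the immediate basin of every attracting periodic orbit of $F|_{[A,B]}$ contains a critical point of $F$. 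If $\{p,q\}$ is itself attracting, a counting argument against the only two critical points $y_l,y_r$---combined with $\fp$'s own need for a critical point in its basin---delivers the contradiction. If $\{p,q\}$ is non-attracting, a cross-ratio (minimum-principle) argument for $SF<0$ maps, applied to $F^2$, forces extra fixed points of $F^2$ to appear between $p$ and $\fp$ or between $\fp$ and $q$, and hence additional fixed points of $F$, contradicting uniqueness from Lemma \ref{l2a}.

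The main obstacle is completing the bimodal case cleanly. A possibly simpler route would exploit the hidden symmetries $F(y)+F(-y)=1-2b$ and $y_l=-y_r$ visible from \eqref{F}: these couple the orbits of the two critical points, and one may try to show that if $\fp$ attracts one critical orbit it automatically attracts the other, leaving no critical orbit available to sustain any alternative attractor and ruling out the hypothetical $\{p,q\}$ outright.
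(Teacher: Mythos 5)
Your monotone case ($a\le 4(1-\sigma)$) is correct and is essentially a more explicit version of the paper's one-line observation that an increasing $F$ has no period-$2$ orbit; only a small slip there: for $\sigma=0$ the quantity $F(y)-y=\frac{1}{e^{ay}+1}-b$ does \emph{not} tend to $\mp\infty$ as $y\to\pm\infty$ (it tends to $-b$ and $1-b$), though the sign conclusion you need still holds. The bimodal case, however, has genuine gaps, and it is exactly here that the paper takes a different and more robust route. The paper does not argue by contradiction: it uses the negative Schwarzian derivative to place one critical point, say $c_-$, in the immediate basin $A$ of $\fp$, then pulls back along the increasing left lap (with $y<c_-$ the point where $F(y)=\fp$, one has $F([y,c_-])=F([c_-,\fp])\subset A$), and finally observes that every $x<y$ satisfies $x<F(x)<\fp$ and so must eventually enter $[y,\fp]$; this shows $(-\infty,\fp]\subset A$ and Lemma \ref{l2a} finishes. (It also disposes separately of the easy case where $\fp$ sits on an outer lap.)

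Your contradiction scheme does not close in either sub-case. If the hypothetical orbit $\{p,q\}$ is attracting, Singer's theorem only requires that the immediate basin of \emph{each} attracting orbit contain a critical point or a boundary point; with two critical points and two attracting orbits ($\fp$ and $\{p,q\}$) this is satisfied by assigning one critical point to each basin, so the counting argument yields no contradiction. The contradiction only appears once you show, as the paper does, that having a critical point in the basin of $\fp$ forces the whole half-line $(-\infty,\fp]$ (hence $p$) into that basin. If $\{p,q\}$ is non-attracting, your key step fails outright: extra fixed points of $F^2$ produced between $p$ and $\fp$ would be period-$2$ points of $F$, not fixed points of $F$, so they do not contradict the uniqueness of the fixed point from Lemma \ref{l2a}; and the minimum-principle argument that is supposed to produce them is only gestured at. The symmetry $F(y)+F(-y)=1-2b$ you mention at the end is a correct identity, but as stated it is a hope rather than a proof. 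To repair your write-up, replace the counting step by the paper's basin-propagation argument.
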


\begin{proof}
If $F$ is strictly increasing, then it does not have a periodic orbit
of period 2, so $\fp$ is globally attracting.

Assume that $F$ is bimodal. If $\fp$ belongs to the left or right lap,
then by Lemma~\ref{l2a}, $\fp$ is globally attracting. Assume that
$\fp$ belongs to the interior of the middle lap. Since by Lemma \ref{ls} the Schwarzian
derivative of $F$ is negative, then the interval joining $\fp$ with
one of the critical points of $F$ is in the basin of attraction $A$ of
$\fp$. We may assume that this critical point is the left one, $c_-$.
There is a unique point $y<c_-$ such that $F(y)=\fp$. Then
$F([y,c_-])=F([c_-,\fp])\subset A$, so $[y,\fp]\subset A$. For every
point $x<y$ we have $x<F(x)<\fp$. Therefore, the trajectory of $x$
increases as long as it stays to the left of $y$. Since there are no
fixed points to the left of $y$, the trajectory has to enter $[y,\fp]$
sooner or later. This proves that $(-\infty,\fp]\subset A$, so by
Lemma~\ref{l2a}, $\fp$ is globally attracting.
\end{proof}

Now as $F$ is a conjugate map for $f$ we obtain Theorem \ref{afp}.
\end{proof}

From Proposition \ref{qreolx} we know that when $\sigma>0$ the perturbed equilibrium $\ol{x}$ will approach $1/2$ for small values of $a$ and will approach Nash equilibrium for sufficiently large intensity of choice. Thus, one may be interested in choosing large values of the parameter $a$. But does such behavior will result in (approximate) convergence of trajectories of the system to Nash equilibrium? Theorem \ref{afp} guarantees convergence as long as $\ol{x}$ is attracting. Nevertheless, we show that increasing the intensity of choice will result in losing stability of $\ol{x}$, and thus, the system will become unstable.

%Moreover, although it is impossible to provide an explicit condition for stability of $\ol{x}$ because it depends on $a$, $b$ and $\sigma$, one can show that when we increase intensity of choice, then once the stability is lost it is lost forever. In addition, for a fixed $a>0$ as $\sigma$ increases, the region of stability of $\ol{x}$ shrinks.
%Therefore, the increase in memory loss will destabilize the system for smaller values of the intensity of choice.

\begin{proposition} \label{QREstability}
There exists $a_0>0$ such that $\ol{x}$ is attracting for $a<a_0$, and $\ol{x}$ is repelling for $a>a_0$. Moreover, the threshold $a_0$ is decreasing with respect to the discount factor $\sigma$. 
%The instability of the quantal response equilibrium $\ol{x}$ appears earlier (with smaller intensity of choice $a$) as agents forget more from the previous plays ($\sigma$ decreases).
\end{proposition}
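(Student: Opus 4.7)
The plan is to reduce the stability question to monotonicity of a single function. By equation \eqref{dffixedpoint}, the multiplier at the fixed point is $f'_{ab\sigma}(\ol{x}) = 1 - \sigma - a\ol{x}(1-\ol{x})$; I set $\phi(a) := f'_{ab\sigma}(\ol{x}(a))$, where $\ol{x}(a)$ is the smooth curve of perturbed equilibria given implicitly by \eqref{fp11}. Since $a > 0$, $\sigma \in [0,1]$, and $\ol{x}\in(0,1)$, we immediately have $\phi(a) < 1 - \sigma \le 1$, so $\ol{x}$ is attracting precisely when $\phi(a) > -1$ and repelling precisely when $\phi(a) < -1$. The proposition therefore reduces to showing that $\phi$ is strictly decreasing on $(0,\infty)$, starts above $-1$, and tends to $-\infty$.

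The key computation is to establish $\phi'(a) < 0$. Differentiating $\phi(a) = 1 - \sigma - a\ol{x}(1-\ol{x})$ and substituting $\tfrac{d\ol{x}}{da}$ from \eqref{e11} gives, after simplification,
\begin{equation*}
\phi'(a) = -\,\ol{x}(1-\ol{x})\cdot\frac{\sigma + a\bigl[\ol{x}(1-\ol{x}) + (1-2\ol{x})(b-\ol{x})\bigr]}{\sigma + a\,\ol{x}(1-\ol{x})}.
\end{equation*}
The main obstacle is controlling the sign of the bracket, because for general $b$ the factor $(1-2\ol{x})(b-\ol{x})$ can be positive or negative depending on the position of $\ol{x}$ relative to $1/2$ and to $b$. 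The obstacle is resolved by the algebraic identity
\begin{equation*}
\ol{x}(1-\ol{x}) + (1-2\ol{x})(b-\ol{x}) = (\ol{x}-b)^2 + b(1-b),
\end{equation*}
which is manifestly strictly positive since $b\in(0,1)$. Hence $\phi'(a) < 0$ unconditionally. Combined with $\phi(0) = 1-\sigma$ (using $\ol{x}\to 1/2$ as $a\to 0^+$ from Proposition~\ref{qreolx}) and $\phi(a) \to -\infty$ as $a\to\infty$ (using $\ol{x}\to b$, also from Proposition~\ref{qreolx}), the intermediate value theorem yields a unique $a_0>0$ with $\phi(a_0) = -1$, establishing the first assertion.

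For the monotonicity of $a_0$ in $\sigma$, I would apply the implicit function theorem to $G(a,\sigma) := 1 + \phi(a,\sigma) = 2 - \sigma - a\,\ol{x}(a,\sigma)(1-\ol{x}(a,\sigma))$ at its zero $a = a_0(\sigma)$. The $a$-derivative is $\partial_a G = \phi'(a) < 0$ by the above. Differentiating the fixed-point relation $a\ol{x} = ab + \sigma\log\frac{1-\ol{x}}{\ol{x}}$ with respect to $\sigma$ yields
\begin{equation*}
\left(a + \frac{\sigma}{\ol{x}(1-\ol{x})}\right)\frac{\partial\ol{x}}{\partial\sigma} = \log\frac{1-\ol{x}}{\ol{x}},
\end{equation*}
so $\partial_\sigma\ol{x}$ has the same sign as $\tfrac12 - \ol{x}$. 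By Proposition~\ref{qreolx} the quantities $\tfrac12-\ol{x}$ and $\tfrac12-b$ share a sign, and hence so do $\tfrac12-\ol{x}$ and $1-2\ol{x}$, giving $(1-2\ol{x})\,\partial_\sigma\ol{x}\ge 0$. Therefore $\partial_\sigma G = -1 - a(1-2\ol{x})\,\partial_\sigma\ol{x} \le -1 < 0$, and the implicit function theorem yields $\frac{da_0}{d\sigma} = -\partial_\sigma G\,/\,\partial_a G < 0$, completing the plan.
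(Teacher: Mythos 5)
Your proof is correct, and its first half is essentially the paper's argument in disguise: you substitute $\frac{d\ol{x}}{da}$ from \eqref{e11} into the derivative of the multiplier $1-\sigma-a\ol{x}(1-\ol{x})$ and rely on the same algebraic identity $\ol{x}(1-\ol{x})+(1-2\ol{x})(b-\ol{x})=(\ol{x}-b)^2+b(1-b)$ that the paper uses (there written as $(1-2\ol{x})(b-\ol{x})=b(1-b)+(b-\ol{x})^2-\ol{x}(1-\ol{x})$ and combined with Lemma~\ref{monotonic}); your packaging has the mild advantage of making the negativity of $\phi'(a)$ manifest without invoking that lemma, and you also spell out the endpoint limits $\phi(0^+)=1-\sigma$ and $\phi(a)\to-\infty$, which the paper leaves implicit. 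Where you genuinely diverge is the monotonicity of $a_0$ in $\sigma$: the paper compares the maps $f_{ab\sigma_1}$ and $f_{ab\sigma_2}$ pointwise to show that the fixed point moves toward $1/2$ as $\sigma$ grows, then reads off from \eqref{dffixedpoint} that the multiplier at the fixed point strictly decreases in $\sigma$ for each fixed $a$; you instead differentiate the fixed-point relation in $\sigma$ and apply the implicit function theorem to $G(a,\sigma)=1+\phi(a,\sigma)$. The two routes prove the same monotonicity of the multiplier in $\sigma$ (yours is the infinitesimal version of the paper's global comparison); yours yields the slightly stronger conclusion $\frac{da_0}{d\sigma}<0$ cleanly in one line, at the cost of needing smoothness of $\ol{x}(a,\sigma)$ (which is fine, since $a+\frac{\sigma}{\ol{x}(1-\ol{x})}>0$), while the paper's comparison argument is more elementary and makes the geometric mechanism --- memory loss pulls $\ol{x}$ toward $1/2$, where $\ol{x}(1-\ol{x})$ is maximal --- more visible. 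One cosmetic nit: your parenthetical ``using $\ol{x}\to 1/2$'' for the limit $\phi(0^+)=1-\sigma$ is unnecessary (and not literally what Proposition~\ref{qreolx} gives when $\sigma=0$, where $\ol{x}\equiv b$); the limit follows simply from $a\,\ol{x}(1-\ol{x})\le a/4\to 0$.
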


%The uniqueness of the threshold $a_0$ follows from the fact that $f'_{ab\sigma}(\ol{x})$ is decreasing as a function of intensity of choice.  The proof of the second assertion relies on the observation that for $0\leq \sigma_1<\sigma_2\leq 1$ values of $f_{ab\sigma_2}$ surpass values of $f_{ab\sigma_1}$ when $x<1/2$, while for $x>1/2$ we have the opposite inequalities.

%The instability of the quantal response equilibrium $\ol{x}$ appears earlieras agents forget more from the previous plays ($\sigma$ decreases).

\begin{proof}%[Proof of Proposition \ref{QREstability}]
For the proof of the first assertion we show that if $b\in(0,1)$, then $\fabs'(\ol{x})$ is decreasing as a
function of $a$.

Assume first that $b\ne 1/2$. Multiply both sides of~\eqref{e11} by
$1-2\ox$:
\[
(1-2\ox)\left(a+\frac{\sigma}{\ox(1-\ox)}\right)\oxa=(1-2\ox)(b-\ox)
=b(1-b)+(b-\ox)^2-\ox(1-\ox).
\]
From this and~\eqref{dffixedpoint}, we get
\[
\frac{d\fabs'(\ox)}{da}=-\ox(1-\ox)-a(1-2\ox)\oxa=
\frac{\sigma}{\ox(1-\ox)}(1-2\ox)\oxa-b(1-b)-(b-\ox)^2.
\]
In view of Lemma~\ref{monotonic}, this is negative.

If $b=1/2$, then $\ox=1/2$, and $\fabs'(\ox)=1-\sigma-a/4$, so also
$\fabs'(\ol{x})$ is decreasing as a function of $a$.

We now move to the second assertion which states that the threshold $a_0$ is decreasing with respect to $\sigma$.

Let  $0\leq \sigma_1<\sigma_2\leq 1$. Then
\begin{equation} \label{a}
 f_{ab\sigma_1}(x) < f_{ab\sigma_2}(x) \Longleftrightarrow x < \frac 12,\qquad
 f_{ab\sigma_1}(x) > f_{ab\sigma_2}(x) \Longleftrightarrow x > \frac 12
\end{equation}
These inequalities follow from the fact that
\[ f_{ab\sigma_1}(x) < f_{ab\sigma_2}(x) \;\;\Leftrightarrow\;\; 1+\left(\frac{1-x}{x}\right)^{1-\sigma_2}\exp (a(x-b))<1+\left(\frac{1-x}{x}\right)^{1-\sigma_1}\exp (a(x-b))\]
Thus,
\[f_{ab\sigma_1}(x) < f_{ab\sigma_2}(x) \;\;\Longleftrightarrow\;\; \left(\frac{1-x}{x}\right)^{\sigma_2-\sigma_1}>1\]
and
\[f_{ab\sigma_1}(x) > f_{ab\sigma_2}(x) \;\;\Longleftrightarrow\;\; \left(\frac{1-x}{x}\right)^{\sigma_2-\sigma_1}<1\]
which implies \eqref{a}.

Let $b< 1/2$. Then $\ol{x}\in [b,1/2]$. From \eqref{a} we have \[f_{ab\sigma_2}(\ol{x}_{\sigma_1})>f_{ab\sigma_1}(\ol{x}_{\sigma_1})=\ol{x}_{\sigma_1}.\] Thus, once more from \eqref{a}, we infer that $1/2\geq \ol{x}_{\sigma_2}>\ol{x}_{\sigma_1}\geq b$.

Therefore, from \eqref{dffixedpoint} and the fact that
a term $z(1-z)$ increases if and only if the distance between $z$ and $\frac 12$ decreases,  we have $f_{ab\sigma_1}'(\ol{x}_{\sigma_1})>f_{ab\sigma_2}'(\ol{x}_{\sigma_2})$ for any given $a>0$.
Similar reasoning can be performed for $b>1/2$. Since for $b=1/2$ the only difference in the reasoning is that $\ol{x}_{\sigma_1}=\ol{x}_{\sigma_2}=1/2$, we obtain that for every $b\in (0,1)$ and $a>0$ \[f_{ab\sigma_1}'(\ol{x}_{\sigma_1})>f_{ab\sigma_2}'(\ol{x}_{\sigma_2}).\]
Thus (as the derivative at the fixed point cannot be greater than one), the instability at $\ol{x}_{\sigma_2}$ arises for smaller values of $a$ than for $\ol{x}_{\sigma_1}$.
\end{proof}

Proposition \ref{QREstability} implies that perturbed equilibrium is stable for sufficiently small intensity of choice. Then, once the stability is lost, with increasing intensity of choice, it will remain unstable.  Moreover, for a fixed $b$ as $\sigma$ increases, the region of stability of $\ol{x}$ shrinks.
Therefore, the increase of discount factor (memory loss) will destabilize the system for smaller values of intensity of choice.

%as agents discount/forget more the instability of the system will appear earlier  (with smaller intensity of choice $a$). Thus, convergence reasoning will be useful for smaller range of values of intensity of choice.

To sum up the findings of Theorem \ref{afp} and Proposition \ref{QREstability}, we have that for small intensity of choice the perturbed equilibrium attracts all trajectories of the system, so starting from any initial state (other than the case where the entire population chooses a pure strategy) the system will converge to $(\ol{x}, 1-\ol{x})$. Then there is a threshold where the perturbed equilibrium loses stability. Therefore, increasing intensity of choice will eventually destabilize the system. This threshold depends on discount (memory loss) factor $\sigma$ in monotonic way --- as more memory is lost ($\sigma$ increases), the instability appears earlier, for smaller intensity of choice.

\begin{proposition} \label{abthreshold}
If $\sigma>0$, then there exists threshold $a^*>0$ such that for $a>a^*$ the perturbed equilibrium $\ol{x}$ is repelling for every $b\in (0,1)$.
If $\sigma=0$, then for any intensity of choice $a$ there exists $b$ (sufficiently close to $0$ or $1$) such that the Nash equilibrium $\ol{x}$ is attracting.
\end{proposition}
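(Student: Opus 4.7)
The plan is to read stability directly off the eigenvalue formula from \eqref{dffixedpoint}, namely $f'_{ab\sigma}(\ol{x}) = 1-\sigma - a\ol{x}(1-\ol{x})$. Since this quantity is always at most $1-\sigma \le 1$, whether $\ol{x}$ is repelling is governed entirely by the single scalar inequality $a\ol{x}(1-\ol{x}) > 2-\sigma$. The whole problem therefore reduces to controlling $a\ol{x}(1-\ol{x})$ uniformly in $b$.

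For Part 1, I would first use the symmetry $f_{ab\sigma}(x) = 1-f_{a(1-b)\sigma}(1-x)$ to reduce to $b \in (0,1/2]$; by Proposition~\ref{qreolx}, then $\ol{x} \in [b,1/2]$. Rewriting the fixed-point relation \eqref{fp11} as $a\ol{x} - \sigma\log\tfrac{1-\ol{x}}{\ol{x}} = ab \ge 0$ and observing that the left-hand side is strictly increasing in $\ol{x}$, I introduce the critical value $\ol{x}_\ast = \ol{x}_\ast(a,\sigma) \in (0,1/2)$ defined by the boundary equation
\[
a\ol{x}_\ast = \sigma\log\frac{1-\ol{x}_\ast}{\ol{x}_\ast}.
\]
This $\ol{x}_\ast$ is the infimum of the interior equilibrium over $b \in (0,1/2]$. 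Because $x(1-x)$ is increasing on $(0,1/2]$, I obtain the $b$-independent lower bound $a\ol{x}(1-\ol{x}) \ge a\ol{x}_\ast(1-\ol{x}_\ast) = (1-\ol{x}_\ast)\,\sigma\log\tfrac{1-\ol{x}_\ast}{\ol{x}_\ast}$.

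The decisive step is then to show this lower bound tends to $\infty$ as $a \to \infty$. This follows from a short compactness argument: if $\ol{x}_\ast(a,\sigma) \ge \eta$ along a sequence $a \to \infty$, then $a\eta \le \sigma\log\tfrac{1-\eta}{\eta}$, which is bounded, a contradiction. Hence $\ol{x}_\ast \to 0$, which forces $\log\tfrac{1-\ol{x}_\ast}{\ol{x}_\ast} \to \infty$ while $1-\ol{x}_\ast \to 1$, so the full product blows up. Choosing $a^\ast$ so that $(1-\ol{x}_\ast)\,\sigma\log\tfrac{1-\ol{x}_\ast}{\ol{x}_\ast} > 2-\sigma$ for all $a > a^\ast$ gives the desired uniform instability; the case $b \in [1/2,1)$ follows by symmetry.

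For Part 2, Remark~\ref{xpe} says $\ol{x} = b$, so the eigenvalue reduces to $1-ab(1-b)$ and attractivity amounts to $0 < ab(1-b) < 2$, the left inequality being automatic. For any fixed $a > 0$, any sufficiently small $b$ (equivalently, $b$ sufficiently close to $1$) makes $ab(1-b) < 2$, so $\ol{x} = b$ is attracting. The main obstacle in the whole argument is really the uniformity in $b$ in Part~1: the nonlinear fixed-point relation blocks any direct estimate, so one has to isolate the extremal equilibrium $\ol{x}_\ast$ obtained in the $b \to 0^+$ limit and then quantify both its decay to $0$ and the accompanying divergence of $a\ol{x}_\ast(1-\ol{x}_\ast)$.
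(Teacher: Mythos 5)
Your argument is correct, and it reaches the conclusion by a route that is organized differently from the paper's. Both proofs start from the eigenvalue formula \eqref{dffixedpoint} and the fixed-point relation \eqref{fp11}, and both ultimately hinge on the same auxiliary function $g(x)=(1-x)\log\frac{1-x}{x}$: your repelling criterion at the extremal equilibrium reads $\sigma g(\ol{x}_\ast)>2-\sigma$, while the paper's threshold $a_1$ is defined by $g(\ol{x}_1)=\frac{2-\sigma}{\sigma}$, and at $b=0$ the paper's $\ol{x}_1$ satisfies exactly your defining equation $a\ol{x}_\ast=\sigma\log\frac{1-\ol{x}_\ast}{\ol{x}_\ast}$. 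The difference is in the decomposition. The paper solves $f'_{ab\sigma}(\ol{x})=-1$ for $\ol{x}$, substitutes into the fixed-point equation to obtain explicit boundary curves $b_1(a,\sigma)$, $b_2(a,\sigma)$ in the $(a,b)$-plane, and shows these curves reach $b=0$ and $b=1$ at a finite $a_1$ precisely when $\sigma>0$ (only asymptotically when $\sigma=0$); to conclude instability for \emph{all} $b\in(0,1)$ once $a>a_1$ it implicitly leans on the structure of the stability region (e.g.\ monotonicity of $f'_{ab\sigma}(\ol{x})$ in $a$ from Proposition~\ref{QREstability}, or a connectedness argument, neither of which is spelled out there). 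You instead minorize the instability criterion $a\ol{x}(1-\ol{x})>2-\sigma$ uniformly over $b$ by isolating the extremal equilibrium $\ol{x}_\ast=\lim_{b\to0^+}\ol{x}$ (legitimate, since $x\mapsto ax-\sigma\log\frac{1-x}{x}$ is strictly increasing) and showing the minorant $(1-\ol{x}_\ast)\,\sigma\log\frac{1-\ol{x}_\ast}{\ol{x}_\ast}$ diverges as $a\to\infty$. This buys a fully explicit, self-contained uniformity in $b$ with no appeal to the geometry of the stability region, at the cost of not producing the exact boundary curves that the paper uses to describe the phase diagram in its figures; Part 2 is essentially identical in both treatments.
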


\begin{proof}%[Proof of Proposition \ref{abthreshold}]
To show Proposition \ref{abthreshold} it is enough to prove that for $\sigma>0$ the boundary in $(a, b)$ plane between the region of attracting perturbed equilibrium and the region of repelling perturbed equilibrium crosses the levels $b=0$ and $b=1$. %On the other hand, 
For $\sigma=0$ this boundary does not cross the levels $b=0$ and $b=1$, instead it approaches them as intensity of choice increases (see Figures \ref{fig:025}, \ref{fig:075}).

First, we compute $\olx$ from the equation $f'_{ab\sigma}(\olx) = -1$. This equation has two real solutions when $a \geqslant 4(2-\sigma)$
\[
 \olx_1 = \frac 12 \left( 1 - \sqrt{1-\frac{4(2-\sigma)}{a}} \right) \in \big( 0,1/2 \big], \qquad \olx_2 = \frac 12 \left( 1 + \sqrt{1-\frac{4(2-\sigma)}{a}} \right) \in \big[ 1/2,1 \big).
\]
Observe, that these solutions are symmetric
\begin{equation}\label{symmetric_olx}
 1 - \olx_1 = 1 - \frac 12 \left( 1 - \sqrt{1-\frac{4(2-\sigma)}{a}} \right) = \frac 12 \left( 1 + \sqrt{1-\frac{4(2-\sigma)}{a}} \right) = \olx_2.
\end{equation}
Moreover, with $\sigma \in [0,1]$ fixed:
\begin{itemize}
\item $\olx_1$ as a function of $a$ is bijection $\big[ 4(2-\sigma),\infty \big) \longrightarrow \big( 0,\frac 12 \big]$,
\item $\olx_2$ as a function of $a$ is bijection $\big[ 4(2-\sigma),\infty \big) \longrightarrow \big[ \frac 12, 1 \big)$.
\end{itemize}

Second, we insert the solutions of $f'_{ab\sigma}(\olx) = -1$ into the equation for the fixed point of $f_{ab\sigma}$. As a result, we obtain formulas for $b$ as functions of $a$ and $ \sigma$:
\begin{align*}
 b_1(a,\sigma) &= \olx_1 - \frac{\sigma}{a} \log\left( \frac{1-\olx_1}{\olx_1} \right) \in \big[ 0,1/2 \big], \\
 b_2(a,\sigma) &= \olx_2 - \frac{\sigma}{a} \log\left( \frac{1-\olx_2}{\olx_2} \right) \in \big[ 1/2,1 \big].
\end{align*}
The first formula describes the bottom branch of the boundary between the region of stability of the fixed point and the region of attracting periodic orbit of period 2, and the second formula - the upper branch.

By \eqref{symmetric_olx} we obtain the the functions $b_1$ and $b_2$ are also symmetric
\begin{align*}
 1 - b_1(a,\sigma) &= 1 - \olx_1 - \frac{\sigma}{a} \log\left( \frac{1-\olx_1}{\olx_1} \right)^{-1}
 = (1 - \olx_1) - \frac{\sigma}{a} \log\left( \frac{\olx_1}{1-\olx_1} \right) \\
 &= \olx_2 - \frac{\sigma}{a} \log\left( \frac{1-\olx_2}{\olx_2} \right)
 = b_2(a,\sigma).
\end{align*}

We next determine a solution of $b_1(a,\sigma) = 0$ (and by symmetry of $b_2(a,\sigma) = 1$)
\begin{align*}
 &b_1(a,\sigma) = 0 \quad \Longleftrightarrow\quad \frac{\sigma}{a} \log\left( \frac{1-\olx_1}{\olx_1} \right) = \olx_1
 \quad\Longleftrightarrow\quad \frac{\sigma}{a} \olx_2 \log\left( \frac{1-\olx_1}{\olx_1} \right) = \olx_1\olx_2 \\
 &\Longleftrightarrow\quad \frac{\sigma}{a} (1-\olx_1) \log\left( \frac{1-\olx_1}{\olx_1} \right) = \frac{2-\sigma}{a}
 \quad\Longleftrightarrow\quad (1-\olx_1) \log\left( \frac{1-\olx_1}{\olx_1} \right) = \frac{2-\sigma}{\sigma}.
\end{align*}
Define
\[
 g(x) = (1-x) \log\left( \frac{1-x}{x} \right).
\]
Then $g(1/2) = 0$, $\lim\limits_{x \to 0^+} g(x) = \infty$ and $g'(x) = -\log\left( \frac{1-x}{x} \right) - \frac{1}{x} < 0$ for $x \in (0, 1/2]$. Therefore, $g\colon (0, 1/2] \mapsto [0,\infty)$ is bijection. By the fact that $\olx_1$ as a function of $a$ is also bijection we obtain that for and fixed $\sigma \in (0,1]$ there exists a unique $a_1 \in \big[ 4(2-\sigma),\infty \big)$ such that $b_1(a_1,\sigma) = 0$ and $b_2(a_1,\sigma) = 1$.

On the other hand, when $\sigma=0$ the equation $b_1(a,\sigma) = 0$ does not have any solution. Instead
\[
 \lim_{a \to \infty} b_1(a,\sigma) = 0, \qquad \lim_{a \to \infty} b_2(a,\sigma) = 1.
\]

\vspace{0.3cm}

Implications of these results:
\begin{enumerate}
\item If $\sigma \in (0,1]$, then there exists $a^* \geqslant 4(2-\sigma)$ such that for every $a>a^*$ the fixed point $\olx$ is not attracting.
\item If $\sigma=0$, then for any $a>0$ there exists $b \in (0,1)$ (sufficiently close to 0 or sufficiently close to 1) such that the fixed point $\olx=b$  is attracting.
\end{enumerate}
\end{proof}

Proposition \ref{abthreshold} gives another important distinction between no discount (perfect memory) model and discount (memory loss) case. When the intensity of choice is large, then in perfect memory case one can change conditions of the game (differentiate costs of the (pure) strategies) to impose the convergence to perturbed equilibrium. However, once the memory loss affects choices of agents ($\sigma>0$), then for a sufficiently large intensity of choice the system will inevitably become unstable and no change of conditions of the game will stabilize it.

In the remaining part of the article we study exactly how unpredictable the system will become when stability is lost.

\subsection{Logit best response dynamics --- no memory case}

We first discuss the case when there is no memory of previous learning steps, that is when $\sigma=1$. We show that the dynamics in this case is simple (see Proposition \ref{dynsigma0}).

In no memory case we get well-known logit best response dynamics \cite{alos2010logit,blume1993statistical}. %{\color{red} EXPAND (what is the exact connection)}  
In this dynamics, which can be derived from a random utility model, players adopt an action according to a full-support distribution of the logit form, which allocates larger probability to those actions which would deliver (myopically) larger payoffs. It therefore combines the advantage of having a specific theory about the origin of mistakes with the fact that it takes the magnitude of (suboptimal) payoffs fully into account. Noise is incorporated in the specification from the onset, but choices concentrate on best responses as noise vanishes.
For $\sigma=1$ the dynamics is described by the map%our map $\fabs$ is
\[
 f_{ab1}(x) =\frac{1}{1+\exp (a(x-b))}.
\]
This map is decreasing and because $f_{ab1}(0)>0$, $f_{ab1}(1)<1$ it has a unique equilibrium $\ol{x}\in (0,1)$. Monotonicity of $f_{ab1}$ yields that in the discussed case we don't have complicated dynamics. 

\begin{proposition} \label{dynsigma0}  Fix $b\in (0,1)$.
There exists intensity of choice $a_0>0$ such that $\overline{x}$ is attracting as long as $a<a_0$ and repelling when $a>a_0$.
Trajectories of all points from $[0,1]$ converge to the perturbed equilibrium $\overline x$ when $\ol{x}$ is attracting.  Otherwise, that is for $a>a_0$, it has an attracting periodic orbit of period $2$ which attracts trajectories of all points from $[0,1]$ except countably many points whose trajectories fall into $\overline{x}$.
\end{proposition}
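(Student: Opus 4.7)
The proposition splits into three claims, which I would address in the stated order.

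Existence of the threshold $a_0$ is a direct consequence of Proposition~\ref{QREstability} with $\sigma=1$. Global attraction to $\ol{x}$ on $(0,1)$ when $a<a_0$ is given by Theorem~\ref{afp}; the boundary points $0$ and $1$ require only the observation that $f_{ab1}(0),f_{ab1}(1)\in(0,1)$, so after one iterate they lie in the interior and Theorem~\ref{afp} applies.

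The main work is the case $a>a_0$. I would pass to the conjugate map $F(y)=\frac{1}{e^{ay}+1}-b$ of Remark~\ref{conjugacy}: this is a strictly decreasing $C^\infty$ bijection $\R\to(-b,1-b)$ with unique fixed point $\fp$. Consequently $G:=F^2$ is strictly increasing, and the goal is to show that $G$ has exactly three fixed points $p<\fp<q$, that $\{p,q\}$ is an attracting period-$2$ orbit of $F$, and that all $G$-orbits except a countable set converge to $p$ or $q$. For the count, Lemma~\ref{ls} gives $SF<0$ (applicable because $a>4(1-\sigma)=0$), hence $SG<0$ by the chain rule for Schwarzian derivatives. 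At any critical point $y_0$ of $G'$, the combination $SG(y_0)<0$ and $G'(y_0)>0$ forces $G'''(y_0)<0$, i.e.\ $y_0$ is a local maximum of $G'$. Hence $G'$ is unimodal and the set $\{G'=1\}$ contains at most two points; this chops $\R$ into at most three intervals on which $G-\mathrm{id}$ is strictly monotone, capping the number of fixed points of $G$ at three. For the matching lower bound I would use that $G$ is bounded, so $(G-\mathrm{id})(y)\to+\infty$ as $y\to-\infty$ and $\to-\infty$ as $y\to+\infty$, together with $G'(\fp)=F'(\fp)^2>1$ (by repulsion) which makes $G-\mathrm{id}$ cross zero with positive slope at $\fp$; a sign count then produces one extra zero on each side, labelled $p<\fp<q$. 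Since $F$ is decreasing, $F(p)>\fp$ is a fixed point of $G$ above $\fp$, hence $F(p)=q$; symmetrically $F(q)=p$, so $\{p,q\}$ is a period-$2$ orbit of $F$. The sign pattern of $G-\mathrm{id}$ near $p$ and $q$, combined with unimodality of $G'$, forces $0<G'(p),G'(q)<1$, making $\{p,q\}$ attracting.

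For the basins, since $G$ is strictly increasing on $\R$ every orbit is monotone and converges to a fixed point. An orbit converging monotonically to the repelling point $\fp$ must be eventually equal to $\fp$ (local repulsion forbids monotone convergence from either side); hence the basin of $\fp$ under $G$ is contained in $\bigcup_{n\ge 0}G^{-n}(\fp)$, which is at most countable because $F$ is a bijection onto its image. Every other orbit converges to $p$ or $q$, so the even and odd $F$-iterates of such a point converge to the two points of $\{p,q\}$. Undoing the conjugacy transports this picture to $f_{ab1}$ on $(0,1)$, and the endpoints $0,1$ enter $(0,1)$ after one iterate. The principal obstacle is the ``exactly three fixed points'' count for $G$, which is where the Schwarzian-derivative argument is essential; the rest is straightforward bookkeeping about monotone maps of the line.
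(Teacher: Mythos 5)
Your proof is correct, but the engine you use for the case $a>a_0$ is genuinely different from the paper's. Both arguments begin the same way: pass to the conjugate map $F(y)=\frac{1}{e^{ay}+1}-b$ of Remark~\ref{conjugacy}, observe that $F$ is strictly decreasing so $F^2$ is increasing, hence every orbit of $F^2$ is monotone, bounded (the image of $F$ lies in $(-b,1-b)$) and converges to a fixed point of $F^2$; and both exploit the negative Schwarzian derivative from Lemma~\ref{ls}. The difference is where the Schwarzian is applied. The paper applies Singer's theorem to $F$ itself: since $F$ has no critical points, the immediate basin of any attracting period-$2$ orbit must capture one of the endpoints $-b$, $1-b$ of the invariant interval, and a short ordering argument using the explicit relation $\gamma_2=\frac1a\log\left(\frac{1}{\gamma_1+b}-1\right)$ rules out two attracting orbits. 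You instead push the Schwarzian through the composition to $G=F^2$ and use $SG<0$ together with $G'>0$ to show every critical point of $G'$ is a strict local maximum, hence $G'$ is unimodal and $G-\mathrm{id}$ has at most three zeros; the boundary behaviour and $G'(\fp)=F'(\fp)^2>1$ then give exactly three zeros $p<\fp<q$ with $0<G'(p)=G'(q)<1$. Your route buys a strictly stronger structural fact --- there is exactly one period-$2$ orbit of any stability type --- which makes the basin bookkeeping immediate, at the cost of a slightly more delicate sign analysis (the strict inequality $G'(p)<1$ does follow, but deserves the one extra line showing $p$ cannot coincide with a solution of $G'=1$, since $G-\mathrm{id}$ would then have a local minimum at a sign change from positive to negative). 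One citation caveat: for $a<a_0$ you lean on Theorem~\ref{afp}, whose statement does cover $\sigma=1$ but whose written proof treats only increasing or bimodal $F$; the paper therefore re-proves global attraction for the decreasing map inside this proposition, via Singer's theorem and Lemma~\ref{l2a}, and strictly speaking you should either do the same or note that your own Schwarzian argument (no critical points, so one of the half-lines $(-\infty,\fp]$, $[\fp,\infty)$ lies in the basin, and Lemma~\ref{l2a} upgrades this to global attraction) closes that gap.
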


%The proof of this result follows from the fact that the map $f^2_{ab1}$ is increasing. The monotonicity of $f^2_{ab1}$ excludes existence of periodic orbits (other than fixed points) of $f^2_{ab1}$, and as a result, we obtain that $f_{ab1}$ does not have any periodic orbit of period greater than $2$. The threshold $a_0$ is one from Proposition \ref{QREstability}.

\begin{proof}%[Proof of Proposition \ref{dynsigma0}]
First, from \eqref{F} the map
\[F(x)=\frac{1}{e^{ax}+1}-b\]
is decreasing.
%First, $\overline{x}$ is attracting as long as $|f_{ab0}'(\overline{x})|<1$. As \[f_{ab0}'(\overline{x})=-\frac{a\exp(a(x-b))}{(1+\exp(a(x-b)))^2}\]
%Because $f_{ab1}$ is decreasing, 
Thus, $F^2$ is increasing. %Therefore, $f^2_{ab1}(x) \geqslant x$ for all $x \in [0,1]$.  This last condition 
This excludes existence of periodic orbits (other than fixed points) of $F^2$. As a result, $F$ does not have any periodic orbit of period greater than $2$. Thus, all trajectories converge to the fixed point $\ol{y}$ or a periodic orbit of period $2$ of $F$.

Values of $F$ are bounded by$-b$ and $1-b$, so $F$ has an attracting invariant interval $[-b,1-b]$.   
Let $a_0$ be a threshold from Proposition \ref{QREstability}. Let $a<a_0$.
Because from Lemma \ref{ls} $F$ has negative Schwarzian derivative and $F$ has no critical points then, by Singer theorem, $(-\infty,\ol{y}]$ or $[\ol{y},\infty)$ has to be attracted by $\ol{y}$. By Lemma \ref{l2a} $\ol{y}$ has to be globally attracting.

 Let $a>a_0$. Notice that 
 \[F^2(x)=x \Longleftrightarrow \frac{1}{1+\exp(aF(x))}-b=x \Longleftrightarrow F(x)=\frac 1a \log \left(\frac{1}{x+b}-1\right).\]
 Therefore, if $\{\gamma_1,\gamma_2\}$ is a periodic orbit of period 2, then 
 \begin{equation} \label{sigma1Fp2}
 \gamma_2=\frac 1a \log \left(\frac{1}{\gamma_1+b}-1\right).
 \end{equation}
Assume that $F$ has two attracting orbits of period 2: $\{\gamma_1',\gamma_2'\}$ and $\{\gamma_1'',\gamma_2''\}$. Without loss of generality we can assume that $\gamma_1'<\gamma_1''$. Then, by \eqref{sigma1Fp2}, we get that $\gamma_2'>\gamma_2''$. Since by Lemma \ref{ls} the Schwarzian
derivative of $F$ is negative, then  in the immediate basin of attraction of each periodic orbit has to be $-b$ or $1-b$.
We may assume that  $[-b,\gamma_1')$   is in the basin of attraction of
$\{\gamma_1',\gamma_2'\}$ and $(\gamma_2'',1-b]$ is in the basin of attraction of $\{\gamma_1'',\gamma_2''\}$. But then $\gamma_2'$ is attracted to $\{\gamma_1'',\gamma_2''\}$. This contradicts existence of two attracting periodic orbits. 
% We may assume that $(-\infty,\gamma_1')$ is attracted.
\end{proof}

Proposition \ref{dynsigma0} narrows down possible long-term behavior of the system for an arbitrary asymmetry of costs $b$ --  starting from any initial mixed strategy the trajectory will converge to the perturbed equilibrium $\overline{x}$ or to the attracting periodic orbit of period 2. The former and the latter behavior depends on the intensity of choice. Thus, losing stability of perturbed equilibrium leads to periodic behavior of the system.
%To sum up, when agents are affected by the maximal possible memory loss the system will either converge to the perturbed equilibrium or will oscillate following attracting periodic orbit of period 2. %  In Section \ref{?} we will show that whether we have the former or the latter case depends on the value of intensity of choice $a$. 
 
\subsection{Memory-dependent behavior}

%In this part we investigate the dynamics induced by $\fabs$.

% If $\sigma>0$, then for $a<4\sigma$ the transformation $\fabs$ is increasing. Thus, %by  \eqref{a} the fixed point $\ol{x}$ attracts every point from $(0,1)$.

%\begin{comment}

\begin{figure}
\label{fig:periodic}
     \centering
          \begin{subfigure}{0.9\textwidth}
         \centering
         \includegraphics[width=\textwidth]{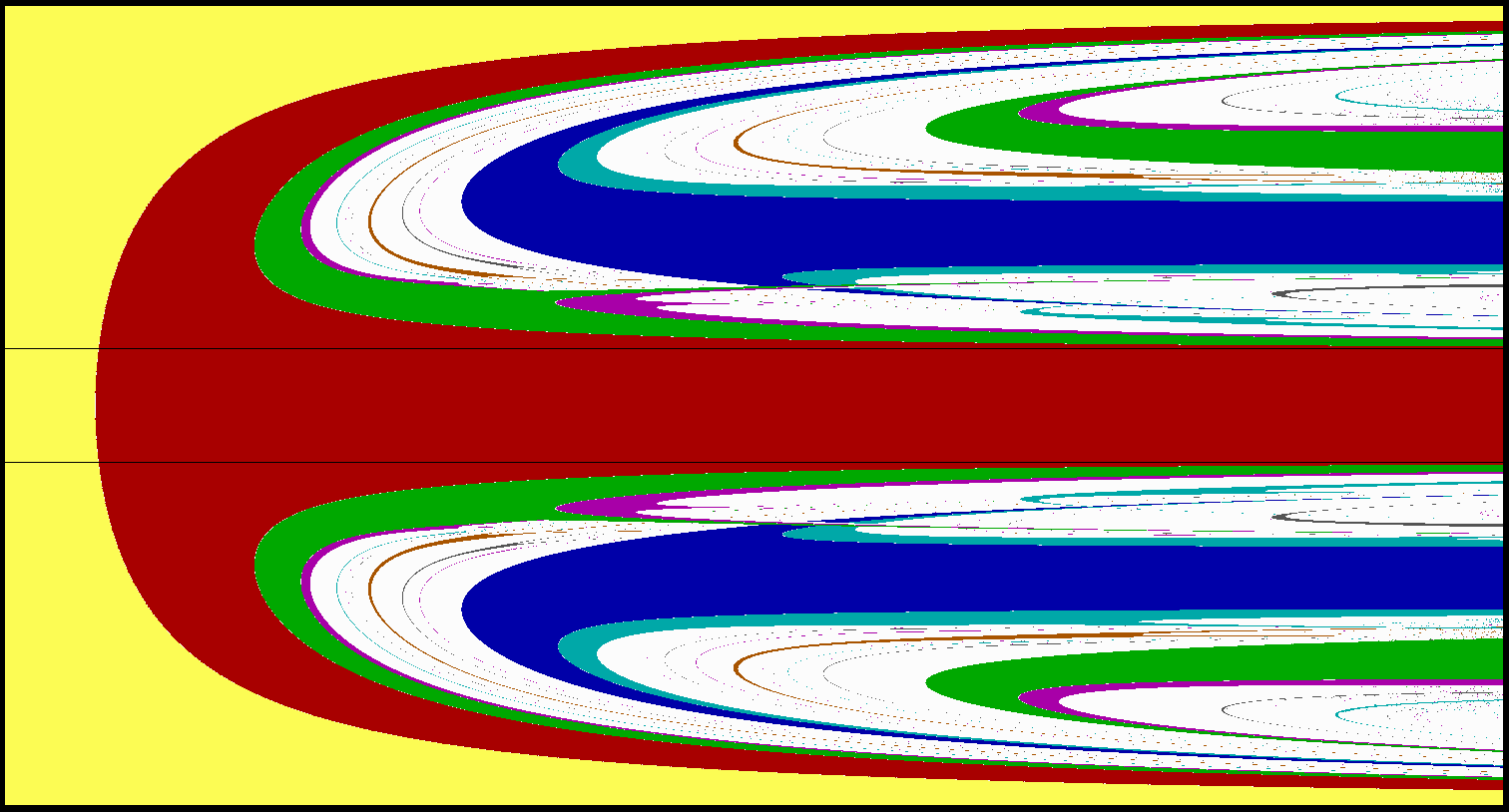}
         \caption{$\sigma=0.25$}
         \label{fig:025}
     \end{subfigure}
%     \hfill
%       \begin{subfigure}[ht]{0.9\textwidth}
%         \centering
%         \includegraphics[width=\textwidth]{periods_F_s_50-16_lines}
%         \caption{$\sigma=0.5$}
%         \label{fig:050}
%     \end{subfigure}
     \hfill
     \begin{subfigure}{0.9\textwidth}
         \centering
         \includegraphics[width=\textwidth]{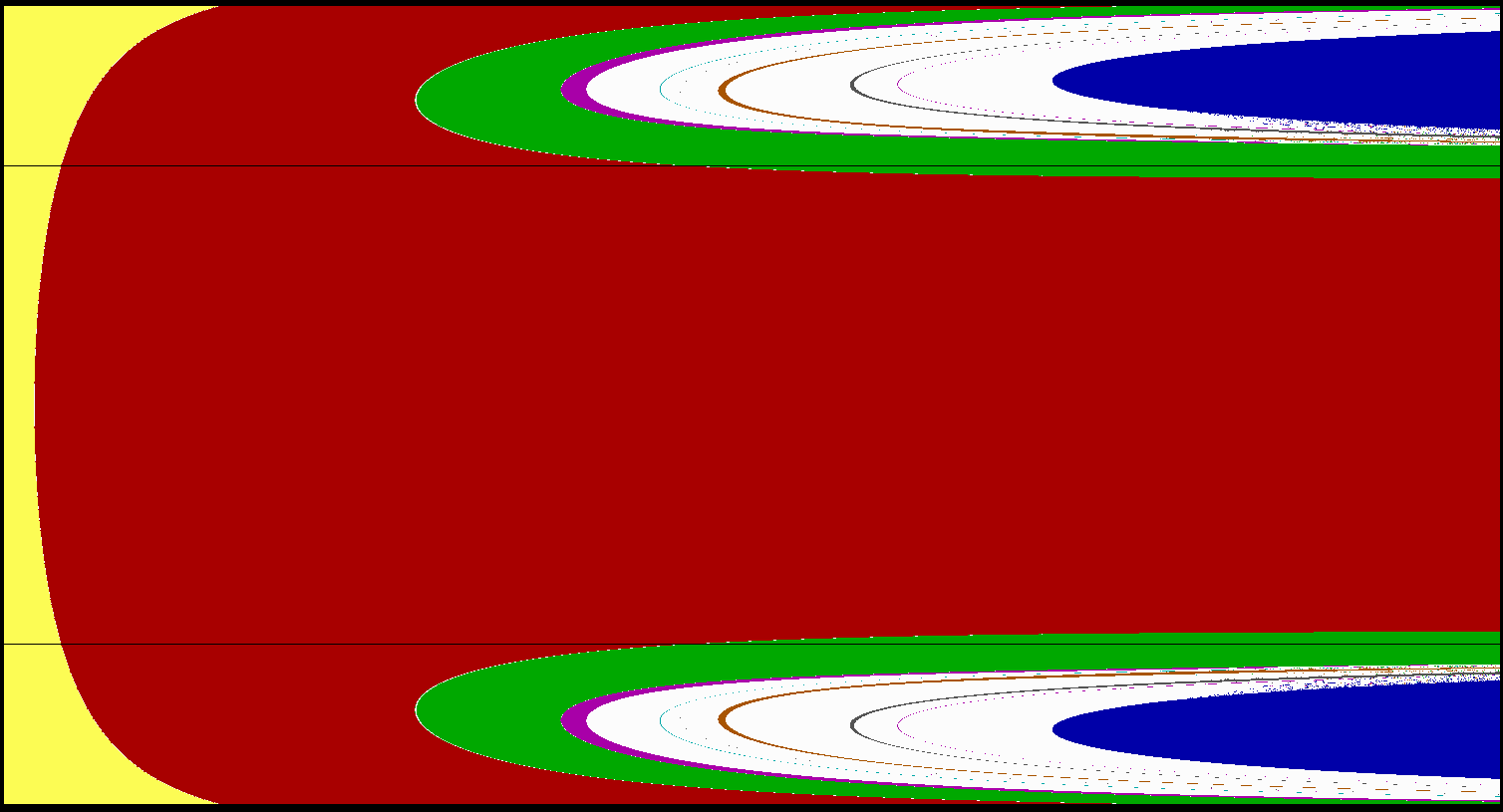}
         \caption{$\sigma=0.75$}
     \end{subfigure}
     \begin{minipage}{0.9\textwidth}
        \caption{\small Period diagrams of the small-period attracting periodic orbits associated with the map $f_{ab\sigma}$ for different values of $\sigma$. The horizontal axes are the intensity of choice $a \in [4,54]$ and the vertical axes are the asymmetry of cost $b \in [0,1]$. The colors encode the periods of attracting periodic orbits as follows: period 1 (fixed point) = {\color{yellow}yellow}, period 2 = {\color{red}red}, period 3 = {\color{blue}blue}, period 4 = {\color{green}green}, period 5 = {\color{brown}brown}, period 6 = {\color{cyan}cyan}, period 7 = {\color{darkgray}darkgray}, period 8 = {\color{magenta}magenta}, and period larger than 8 = white. The equilibrium analysis is only viable when the fixed point $b$ is stable, i.e. when $a \le 2/b(1-b)$. In other region of the phase-space, non-equilibrating dynamics arise and system proceeds through the period-doubling bifurcation route to chaos in the white region. The picture is generated from the following algorithm: 20000 preliminary iterations are discarded. Then a point is considered periodic of period $n$ if $|f_{ab\sigma}^n(x)-x|<10^{-16}$ and it is not periodic of any period smaller than $n$. %Slight asymmetry is caused by the fact that the starting point is the left critical point $x_l = 1/2 - \sqrt{1/4-1/a}$. In addition, for a fixed $a$, as we vary $b$ and penetrate into the chaotic regimes (white) from the outer layers, we numerically observe Feigenbaum's universal route to chaos as discussed below.
       Black lines describe bounds $b=\frac{1-\sigma}{2-\sigma}$, $b=\frac{1}{2-\sigma}$. By Theorem \ref{thm2period} we know what happens for large $a$. We can see from numerical computations that situation might be a little bit more complicated for some values of $a$. There is a possibility of the attracting periodic orbit of period 4.}
       \label{fig:075}
       \end{minipage}
\end{figure}
%\end{comment}

In this subsection we discuss long-term behavior of agents for large intensity of choice when $\sigma \in [0,1)$.

We study how  behavior of the system for large values of intensity of choice $a$ depends on the interplay between memory loss $\sigma$ (property of learning) and difference in costs of resources (property of the game reflected by the value of $b$).
Long-term behavior of agents differs when $b \in \left(\frac{1-\sigma}{2-\sigma},\frac{1}{2-\sigma}\right)$ and when $b \in \left(0, \frac{1-\sigma}{2-\sigma} \right) \cup \left( \frac{1}{2-\sigma}, 1\right)$, that is, it depends on proximity of the costs of resources.

\subsubsection{Long-term behavior for  $b \in \left(\frac{1-\sigma}{2-\sigma},\frac{1}{2-\sigma}\right)$} We begin with the case $b \in \left(\frac{1-\sigma}{2-\sigma},\frac{1}{2-\sigma}\right)$. We investigate the existence of an attracting periodic orbit of period 2 when intensity of choice is large.
%costs of different resources are close, that is when $b$ is close to $1/2$.  

%{\color{red} sketch --- what is for what...}

\begin{theorem}\label{thm2period}
\label{2period}
Let $\sigma\in [0,1)$ be fixed. For a given $b_0\in\left(\frac{1-\sigma}{2-\sigma},\frac12\right)$, there exists $a_1>0$
such that if $a\ge a_1$ and $b\in[b_0,1-b_0]$ then $f_{ab\sigma}$ has an
attracting periodic orbit of period $2$ which attracts trajectories of all points from $(0,1)$, except countably many whose trajectories fall into $\ol{x}$.
\end{theorem}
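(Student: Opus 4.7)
The plan is to transfer the problem via Remark~\ref{conjugacy} to the conjugate map $F$ and exploit that, for large $a$, the sigmoid $\frac{1}{e^{ay}+1}$ is exponentially close (together with its derivative) to the step function $\mathbf{1}_{y<0}$ uniformly on $\{|y|\ge\delta\}$. Introduce the piecewise-affine ``$a=\infty$'' limit
\[
\tilde F(y)=(1-\sigma)y+\mathbf{1}_{y<0}-b.
\]
Solving $\tilde F(y_1)=y_2$, $\tilde F(y_2)=y_1$ under the ansatz $y_1<0<y_2$ and subtracting/adding the two equations gives $y_2-y_1=\tfrac{1}{2-\sigma}$ and $y_1+y_2=\tfrac{1-2b}{\sigma}$, so
\[
y_1(b)=\frac{(1-\sigma)-b(2-\sigma)}{\sigma(2-\sigma)},\qquad y_2(b)=\frac{1-b(2-\sigma)}{\sigma(2-\sigma)}.
\]
The sign conditions $y_1<0<y_2$ are precisely $b\in\left(\frac{1-\sigma}{2-\sigma},\frac{1}{2-\sigma}\right)$; for $b\in[b_0,1-b_0]$, compactness of this interval yields a uniform bound $|y_i(b)|\ge\delta_0>0$, so the entire limiting orbit lies in the region where $F$ is exponentially close to $\tilde F$.

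Next, I would upgrade $\{y_1(b),y_2(b)\}$ to a genuine period-$2$ orbit of $F$ by an Implicit Function (or IVT) argument applied to $G(y)=F^2(y)-y$ in fixed small neighborhoods of $y_1(b)$ and $y_2(b)$. The perturbation $F-\tilde F$ and its derivative are of order $e^{-a\delta_0}$ and $ae^{-a\delta_0}$ on the relevant compact sets, while $(\tilde F^2)'(y_i(b))=(1-\sigma)^2\ne 1$, so for all sufficiently large $a$ (uniformly in $b\in[b_0,1-b_0]$) a period-$2$ orbit $\{y_1^\ast,y_2^\ast\}$ of $F$ exists and converges to $\{y_1(b),y_2(b)\}$. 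Using $F'(y)=1-\sigma-\frac{ae^{ay}}{(e^{ay}+1)^2}$ and the separation $|y_i^\ast|\ge\delta_0/2$, the multiplier
\[
(F^2)'(y_1^\ast)=F'(y_1^\ast)\,F'(y_2^\ast)\longrightarrow(1-\sigma)^2<1
\]
as $a\to\infty$, so the orbit is attracting once $a$ is large.

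The heart of the argument is global attraction. By Lemma~\ref{ls}, $SF<0$, so Singer's theorem applies and every attracting periodic orbit of $F$ carries a critical point in its immediate basin; since $F$ has exactly two critical points $c_l,c_r$ (which by solving $F'=0$ approach $0$ at the rate $\pm\frac{\log a}{a}$), there are at most two attracting periodic orbits in all. Proposition~\ref{QREstability} shows that $\fp$ is repelling for $a$ large. A direct computation gives $F(c_l)\to 1-b$ and $F(c_r)\to -b$ as $a\to\infty$, both lying uniformly on the positive and negative sides of $0$ and inside the contracting region of $\tilde F$ for $b\in[b_0,1-b_0]$; one more iterate, combined with the explicit $F^2$-contraction by factor $(1-\sigma)^2$ on each piece away from $0$, places both $c_l$ and $c_r$ in the immediate basin of the period-$2$ orbit just constructed. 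Singer's bound then rules out any other attracting periodic orbit.

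Finally, to upgrade ``unique attracting cycle'' to ``attracts every trajectory except a countable set'', I would invoke the classical theorem that $C^3$ bimodal maps with $SF<0$ admit no wandering intervals (see \cite{Ruette}): the $\omega$-limit of every point is either the attracting period-$2$ orbit or a subset of the non-wandering set disjoint from it, which in our setting reduces to the orbit of the single repelling fixed point $\fp$. Since $|F'(\fp)|>1$, the only trajectories not attracted to the period-$2$ orbit are the countable backward orbit $\bigcup_{n\ge 0}F^{-n}(\fp)$. Conjugating back via Remark~\ref{conjugacy} yields the theorem. The main anticipated difficulty is this last step: ruling out exotic invariant Cantor-type attractors or further non-hyperbolic structures that could trap an uncountable set of trajectories; the saving grace is the uniform contraction rate $(1-\sigma)^2<1$ of $F^2$ off any small neighborhood of the origin, together with the strong repulsion at $\fp$, which together leave no room for such pathologies once the estimates are made uniform in $b\in[b_0,1-b_0]$.
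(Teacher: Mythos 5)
Your construction of the attracting $2$-cycle follows essentially the same route as the paper's. Your piecewise-affine limit $\tilde F(y)=(1-\sigma)y+\mathbf{1}_{y<0}-b$ is exactly the pair of affine bounds $F_-(y)=(1-\sigma)y+1-b$ and $F_+(y)=(1-\sigma)y-b$ used in the paper, and your $y_1(b),y_2(b)$ coincide with the paper's $x_-,x_+$ (indeed $x_-=\frac{(1-\sigma)-b(2-\sigma)}{\sigma(2-\sigma)}$ and $x_+=\frac{1-b(2-\sigma)}{\sigma(2-\sigma)}$). Where the paper closes the existence step with explicit trapping intervals $F(I_-)\subset I_+$, $F(I_+)\subset I_-$ together with the uniform bound $0\le F'<1-\sigma$ off a neighborhood of the origin, you use an IVT/implicit-function perturbation of $\tilde F^2$; both are fine and both give the multiplier tending to $(1-\sigma)^2<1$. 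The appeal to the negative Schwarzian derivative (Lemma~\ref{ls} plus Singer's theorem) to cap the number of attracting or neutral cycles by the number of critical points, and the computation $F(c_-)\to 1-b$, $F(c_+)\to -b$ to place both critical orbits in the basin of the $2$-cycle, are likewise common to both arguments (the latter step deserves the explicit ordering $F(c_+)<z_-<c_-<0<c_+<z_+<F(c_-)$ that the paper records, but this is routine).

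The genuine gap is your final step. Absence of wandering intervals plus Singer's theorem does \emph{not} imply that the set of points not attracted to the unique attracting cycle is countable. For an $S$-bimodal (or $S$-unimodal) map whose critical points all converge to a single attracting cycle, the complement of the basin is a closed, invariant, nowhere dense repelling set that can perfectly well be an uncountable Cantor set; this happens whenever the map has positive topological entropy (e.g.\ a hyperbolic window of the quadratic family beyond the Feigenbaum point, or the parameter regime of Theorem~\ref{chaos} here, where attracting cycles coexist with a horseshoe). So neither the repulsion at $\fp$ nor the contraction of $F^2$ away from $0$ ``leaves no room for such pathologies'' on its own — your argument would prove countability in situations where it is false. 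What actually forces countability is the combinatorial structure the paper extracts: setting $J_-=[F(c_+),c_-]$ and $J_+=[c_+,F(c_-)]$, one has $F(J_-)\subset J_+$ and $F(J_+)\subset J_-$ with $0\le F'<1-\sigma$ there, so every orbit that ever enters $J_-\cup J_+$ converges to $P$; on the middle lap $J_0=[c_-,c_+]$ the map is a decreasing homeomorphism onto the invariant interval $J$, so the set of points whose forward orbit never leaves $J_0$ is a nested intersection of single intervals and reduces to $\{\fp\}$. Hence the exceptional set is exactly $\bigcup_{n\ge 0}F^{-n}(\fp)$, which is countable because $F$ is at most $3$-to-$1$. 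You need to supply this (or an equivalent) trapping argument; the general classification of $\omega$-limit sets only gives that the exceptional set is nowhere dense and of measure zero, which is strictly weaker than the statement of Theorem~\ref{thm2period}.
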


\begin{corollary} \label{cor2period}
For a fixed $\sigma\in[0,1)$ and $b\in \left(\frac{1-\sigma}{2-\sigma},\frac{1}{2-\sigma}\right)$ there exists $a_1>0$ such that for $a\geq a_1$ trajectories of all points from $(0,1)$ are attracted to the periodic orbit of period $2$, except countably many whose trajectories fall into the interior perturbed equilibrium $\ol{x}$. 
\end{corollary}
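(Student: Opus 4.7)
The plan is to reduce this corollary to a direct application of Theorem \ref{thm2period}. The key observation is that the interval $\left(\frac{1-\sigma}{2-\sigma}, \frac{1}{2-\sigma}\right)$ appearing in the corollary is symmetric about $1/2$, since its endpoints sum to $1$: we have $\frac{1}{2-\sigma} = 1 - \frac{1-\sigma}{2-\sigma}$. So for any fixed $b$ in this interval, both $b$ and $1-b$ also lie in this interval, and in particular both strictly exceed $\frac{1-\sigma}{2-\sigma}$.

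Given such $b$, I would choose any $b_0$ satisfying $\frac{1-\sigma}{2-\sigma} < b_0 < \min\{b, 1-b\}$. Such a $b_0$ exists by the symmetric observation above, and by construction $b_0 < \min\{b,1-b\} \leq 1/2$, so $b_0 \in \left(\frac{1-\sigma}{2-\sigma}, 1/2\right)$ as required by the hypothesis of Theorem \ref{thm2period}. I would then verify that $b \in [b_0, 1-b_0]$: since $b \geq \min\{b,1-b\} > b_0$ and $1-b \geq \min\{b,1-b\} > b_0$ (the latter rearranges to $b < 1 - b_0$), the containment holds.

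With this setup, I would apply Theorem \ref{thm2period} to the fixed $b_0$ to obtain the corresponding threshold $a_1 > 0$. The theorem then immediately gives the conclusion: for all $a \geq a_1$, the map $f_{ab\sigma}$ has an attracting periodic orbit of period $2$ which attracts the trajectories of all points in $(0,1)$, except for the countably many preimages of $\ol{x}$.

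Since the argument is simply matching the hypotheses of the theorem to the setting of the corollary, there is no substantive obstacle — the real dynamical work has already been carried out in Theorem \ref{thm2period}. The only point requiring care is choosing $b_0$ strictly below $\min\{b,1-b\}$, to ensure that the closed interval $[b_0, 1-b_0]$ actually contains $b$ in its interior.
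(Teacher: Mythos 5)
Your proposal is correct and matches the paper's (implicit) argument: the corollary is stated as an immediate consequence of Theorem \ref{thm2period}, obtained exactly as you describe by exploiting the symmetry of the interval $\left(\frac{1-\sigma}{2-\sigma},\frac{1}{2-\sigma}\right)$ about $1/2$ and picking $b_0$ strictly between $\frac{1-\sigma}{2-\sigma}$ and $\min\{b,1-b\}$. Your care in taking the inequality strict (which also covers $b=1/2$, where $\min\{b,1-b\}$ itself would violate the hypothesis $b_0<1/2$) is exactly the right detail to attend to.
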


Theorem \ref{thm2period} guarantees that when the intensity of choice is large enough, then the system will inevitably converge to an attracting periodic orbit of period 2.
Thus, although the system will not converge, the behavior will be predictable. 
Moreover, the threshold $a_1$ can be chosen in such a way that for a wide variety of levels of asymmetry of cost functions ($b\in [b_0,1-b_0]$) if the intensity of choice crosses this level each system will be attracted (excluding countably many trajectories falling into perturbed equilibrium, but which are almost impossible to get into) to the attracting periodic orbit of period 2. Obviously this attracting periodic orbit has to depend on the values of $a$, $b$ and $\sigma$.

The proof of Theorem \ref{thm2period} relies on careful choice of two disjoint intervals $I_-$ and $I_+$ such that $F(I_-)\subset I_+$ and $F(I_+)\subset I_-$. This last property implies existence of an attracting periodic orbit of period 2. As we deal with a discrete dynamical system we have to take into account that some trajectories may fall into the repelling perturbed equilibrium.
The property of attraction of almost all trajectories follows from existence of an attracting invariant set. We show that all trajectories eventually enter the invariant set and then they either hit the fixed point and stay there, or they are attracted by the periodic orbit.

\begin{proof}[Proof of Theorem \ref{2period}]
We begin with some auxiliary lemmas.

In addition to $F$, let us consider two linear maps,
\[
F_-(x)=(1-\sigma) x+1-b\ \ \ \textrm{and}\ \ \ F_+(x)=(1-\sigma) x-b.
\]
Since $0<\frac1{e^{ax}+1}<1$, we have $F_+<F<F_-$. Let us improve those
estimates.

\begin{lemma}\label{l1c}
If $x<0$ then $F_-(x)+\frac1{ax}<F(x)<F_-(x)$; if $x>0$ then
$F_+(x)<F(x)<F_+(x)+\frac1{ax}$.
\end{lemma}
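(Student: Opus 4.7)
The plan is purely computational: reduce the four required inequalities to elementary facts about the exponential, using the explicit form
\[
F(x)-F_-(x)=\frac{1}{e^{ax}+1}-1=-\frac{e^{ax}}{e^{ax}+1},\qquad F(x)-F_+(x)=\frac{1}{e^{ax}+1}.
\]
In particular, $F-F_-<0$ and $F-F_+>0$ on all of $\R$, which immediately gives the two ``easy'' inequalities $F(x)<F_-(x)$ (for $x<0$) and $F_+(x)<F(x)$ (for $x>0$).

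The two remaining bounds are the genuinely quantitative part. For $x>0$, the inequality $F(x)<F_+(x)+\frac{1}{ax}$ becomes
\[
\frac{1}{e^{ax}+1}<\frac{1}{ax},
\]
i.e. $ax<e^{ax}+1$, which is clear since $e^{t}>t$ for every $t\in\R$. For $x<0$, the inequality $F_-(x)+\frac{1}{ax}<F(x)$ becomes
\[
\frac{1}{ax}<-\frac{e^{ax}}{e^{ax}+1}.
\]
Both sides are negative, so I would substitute $t=-ax>0$ and rewrite this as $\frac{1}{t}>\frac{1}{e^{t}+1}$, i.e. $e^{t}+1>t$, which again follows from $e^{t}>t$.

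There is no real obstacle here; the main thing to be careful about is sign bookkeeping when $x<0$, since the $1/(ax)$ term is negative and the inequality $\frac{1}{ax}<-\frac{e^{ax}}{e^{ax}+1}$ needs to be handled after clearing the negatives (equivalently, by passing to $t=-ax>0$). Once the substitutions are in place every step reduces to the single inequality $e^{t}>t$ for $t\in\R$.
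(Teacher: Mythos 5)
Your proof is correct and follows essentially the same route as the paper: both arguments reduce the quantitative bounds to the elementary inequality $e^t>t$ after writing $F-F_\pm$ explicitly in terms of $\frac{1}{e^{ax}+1}$ (the paper discards the $+1$ in the denominator before invoking $e^t>t$, while you keep it and use $e^t+1>t$, an immaterial difference). The sign bookkeeping for $x<0$ via the substitution $t=-ax$ is handled correctly.
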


\begin{proof}
If $x<0$ then
\[
F_-(x)-F(x)=1-\frac1{e^{ax}+1}=\frac{e^{ax}}{e^{ax}+1}<e^{ax}=
\frac1{e^{-ax}}<\frac1{-ax}.
\]
If $x>0$ then
\[
F(x)-F_+(x)=\frac1{e^{ax}+1}<\frac1{e^{ax}}<\frac1{ax}.
\]
\end{proof}

Now we fix $b_0\in\left(\frac{1-\sigma}{2-\sigma},\frac12\right)$ and
assume that $b\in[b_0,1-b_0]$. Set $\phi(b)=2b-\sigma b-1+\sigma$. Note
that $\phi(1-b)=1-2b+\sigma b$. Since $b>b_0$ and $1-b>b_0$, we have
\begin{equation}\label{e1}
\phi(b)\ge\phi(b_0)>0\ \ \ \textrm{and}\ \ \ \phi(1-b)\ge\phi(b_0)>0.
\end{equation}

Set
\[
x_-=-\frac{\phi(b)}{\sigma(2-\sigma)}\ \ \ \textrm{and}\ \ \ x_+=
\frac{\phi(1-b)}{\sigma(2-\sigma)}.
\]
Observe that $x_-<0<x_+$.

\begin{lemma}\label{l2}
We have $F_-(x_-)=x_+$ and $F_+(x_+)=x_-$.
\end{lemma}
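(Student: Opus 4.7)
The claim is a direct algebraic verification, so the plan is simply to substitute the definitions and simplify, taking advantage of a useful symmetry to keep the bookkeeping light.

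First I would record the identity
\[
\phi(b)+\phi(1-b)
=(2b-\sigma b-1+\sigma)+(1-2b+\sigma b)
=\sigma,
\]
obtained by adding the two defining expressions. This single relation will let me shortcut both equalities.

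For $F_-(x_-)=x_+$, I would clear denominators by multiplying through by $\sigma(2-\sigma)$: the claim becomes
\[
-(1-\sigma)\phi(b)+\sigma(2-\sigma)(1-b)=\phi(1-b).
\]
Rearranging, this is the same as
\[
\sigma(2-\sigma)(1-b)=\phi(1-b)+(1-\sigma)\phi(b)=\sigma\phi(b)+\bigl(\phi(b)+\phi(1-b)\bigr)-\phi(b),
\]
which, after substituting $\phi(b)+\phi(1-b)=\sigma$, reduces to $\sigma(2-\sigma)(1-b)=\sigma\phi(b)+\sigma-\phi(b)=(\sigma-1)(-\phi(b))+\sigma$. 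Plugging in the explicit formula $\phi(b)=2b-\sigma b-1+\sigma$ and expanding both sides yields the same polynomial in $b$ and $\sigma$, completing the verification. (In practice I would just expand the two sides of the original polynomial equality directly; the identity above is only a sanity check that things will cancel cleanly.)

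The second equality $F_+(x_+)=x_-$ I would handle by exactly the analogous computation with the roles of $b$ and $1-b$ interchanged: multiplying $(1-\sigma)x_+-b=x_-$ through by $\sigma(2-\sigma)$ gives $(1-\sigma)\phi(1-b)-\sigma(2-\sigma)b=-\phi(b)$, which rearranges symmetrically and reduces to the same polynomial identity. There is no real obstacle here — the entire lemma is a mechanical check — and the only point worth flagging is that writing $\phi(b)+\phi(1-b)=\sigma$ explicitly makes the algebra noticeably shorter than brute-force expansion.
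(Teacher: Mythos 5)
Your overall approach is the same as the paper's: the lemma is a mechanical substitution, and the cleared-denominator identities you write down, namely $-(1-\sigma)\phi(b)+\sigma(2-\sigma)(1-b)=\phi(1-b)$ and $(1-\sigma)\phi(1-b)-\sigma(2-\sigma)b=-\phi(b)$, are both correct and do follow by direct expansion, exactly as in the paper. The observation $\phi(b)+\phi(1-b)=\sigma$ is also correct and is a reasonable way to organize the computation.

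However, the shortcut you display is carried out with a sign error, and the intermediate identity you land on is false. From $\phi(1-b)+(1-\sigma)\phi(b)$ the correct regrouping is
\[
\phi(1-b)+(1-\sigma)\phi(b)=\bigl(\phi(b)+\phi(1-b)\bigr)-\sigma\phi(b)=\sigma-\sigma\phi(b)=\sigma\bigl(1-\phi(b)\bigr),
\]
and since $1-\phi(b)=2-2b+\sigma b-\sigma=(2-\sigma)(1-b)$, this equals $\sigma(2-\sigma)(1-b)$ as required. Your version, $\sigma\phi(b)+\bigl(\phi(b)+\phi(1-b)\bigr)-\phi(b)=\sigma+(\sigma-1)\phi(b)$, is not equal to $\sigma(2-\sigma)(1-b)$ in general (the two sides differ unless $\sigma\in\{1/2,1\}$ or $\phi(b)=0$), so the claimed reduction ``$\sigma(2-\sigma)(1-b)=\sigma\phi(b)+\sigma-\phi(b)$'' is not an identity and expanding it would not close the argument. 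Because you explicitly fall back on expanding the original cleared-denominator equality directly, the proof is salvageable and essentially coincides with the paper's, but as written the displayed chain of equalities does not hold and should be replaced by the corrected regrouping above (with the symmetric computation for $F_+(x_+)=x_-$).
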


\begin{proof}
We have
\[\begin{split}
F_-(x_-)&=(1-\sigma) \frac{-2b+\sigma b+1-\sigma}{\sigma(2-\sigma)}+(1-b)\\ &=
\frac{1-2\sigma+\sigma^2-2b+3\sigma b-\sigma^2 b+2\sigma-\sigma^2-2b\sigma+\sigma^2 b}
{\sigma(2-\sigma)}\\ &=\frac{1-2b+b\sigma}{\sigma(2-\sigma)}=x_+,
\end{split}\]
and
\[\begin{split}
F_+(x_+)=(1-\sigma) \frac{1-2b+\sigma b}{\sigma(2-\sigma)}-b&=
\frac{1-2b+\sigma b-\sigma+2\sigma b-\sigma^2 b-2b\sigma+b\sigma^2}{\sigma(2-\sigma)}\\ &=
\frac{1-2b+\sigma b-\sigma}{\sigma(2-\sigma)}=x_-.
\end{split}\]
\end{proof}

\begin{lemma}\label{l3}
For any $\eps>0$ there exists $\alpha(\eps)$ such that if
$a\ge\alpha(\eps)$ and $\left|x\right|\ge\eps$ then $0\le
F'(x)<1-\sigma$.
\end{lemma}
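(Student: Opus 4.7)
\textbf{Proof plan for Lemma \ref{l3}.} The plan is to use the explicit expression for $F'$ computed earlier in the excerpt, namely
\[
F'(x) = (1-\sigma) - \frac{a e^{ax}}{(e^{ax}+1)^2}.
\]
The term $\frac{a e^{ax}}{(e^{ax}+1)^2}$ is strictly positive for every $x \in \R$, so the upper bound $F'(x) < 1-\sigma$ is immediate and holds for all $a>0$ and all $x$, regardless of the condition $|x|\ge\eps$.

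The content of the lemma therefore lies in the lower bound $F'(x) \ge 0$, which amounts to showing that
\[
\frac{a e^{ax}}{(e^{ax}+1)^2} \le 1-\sigma
\]
whenever $a$ is sufficiently large (depending on $\eps$) and $|x|\ge\eps$. First I would bound the left-hand side cleanly for $|x|\ge\eps$: if $x\ge\eps$, drop the $+1$ in the denominator to get $\frac{a e^{ax}}{(e^{ax}+1)^2} \le \frac{a}{e^{ax}} \le a e^{-a\eps}$; if $x\le-\eps$, drop the $e^{ax}$ term in the denominator (it is smaller than $1$) to get $\frac{a e^{ax}}{(e^{ax}+1)^2}\le a e^{ax} \le a e^{-a\eps}$. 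Either way, for $|x|\ge\eps$ we have the uniform estimate $\frac{a e^{ax}}{(e^{ax}+1)^2} \le a e^{-a\eps}$.

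Since $\sigma \in [0,1)$, the right-hand side $1-\sigma$ is a fixed positive constant, while $a e^{-a\eps} \to 0$ as $a\to\infty$. Hence there exists $\alpha(\eps)>0$ such that $a e^{-a\eps} \le 1-\sigma$ for all $a\ge\alpha(\eps)$, and for such $a$ we conclude $F'(x) \ge 0$ on $\{|x|\ge\eps\}$, completing the proof. There is no serious obstacle: the argument is just a quantitative version of the observation that the logistic derivative $\frac{a e^{ax}}{(e^{ax}+1)^2}$ is concentrated near $x=0$ with mass of order $1/a$ as $a$ grows, so it is negligible once we stay a fixed distance $\eps$ away from the origin. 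The only care required is to write the bound symmetrically in $x>0$ and $x<0$ so that it is a single function of $a$ and $\eps$.
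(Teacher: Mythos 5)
Your argument is correct and is essentially identical to the paper's: both get the upper bound for free from positivity of $\frac{ae^{ax}}{(e^{ax}+1)^2}$, and both establish the lower bound via the symmetric estimate $\frac{ae^{ax}}{(e^{ax}+1)^2}\le ae^{-a|x|}\le ae^{-a\eps}$ together with $ae^{-a\eps}\to 0$ as $a\to\infty$. No differences worth noting.
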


\begin{proof}
We have
\[
F'(x)=1-\sigma-a\frac{e^{ax}}{(e^{ax}+1)^2},
\]
so $F'(x)<1-\sigma$. Since $\frac{e^{ax}}{(e^{ax}+1)^2}<e^{ax}$, we get
$F'(x)>1-\sigma-ae^{ax}$. Similarly, $\frac{e^{ax}}{(e^{ax}+1)^2}<e^{-ax}$, so
$F'(x)>1-\sigma-ae^{-ax}$. Therefore, $F'(x)>1-\sigma-ae^{-a|x|}$. If
$\left|x\right|\ge\eps$, then we get
$F'(x)>1-\sigma-ae^{-\eps a}$. Since $\lim_{a\to\infty}ae^{-\eps a}=0$,
the lemma follows.
\end{proof}

Set $K=\frac{\phi(b_0)}{2\sigma(2-\sigma)}$ and consider intervals
$I_-=[x_--K,x_-+K]$ and $I_+=[x_+-K,x_++K]$.

\begin{lemma}\label{l4}
There is $a_0$ (depending on $b_0$) such that if $a\ge a_0$ then
$0\le F'(x)<1-\sigma$ for all $x\in I_-\cup I_+$.
\end{lemma}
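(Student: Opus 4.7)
The plan is to observe that the intervals $I_-$ and $I_+$ are, by construction, uniformly bounded away from $0$, and then to apply Lemma~\ref{l3} with an appropriate choice of $\eps$. All of the nontrivial work has already been packaged into Lemma~\ref{l3}; what remains is a bookkeeping check that $K$ serves as a uniform distance-to-zero for points in $I_-\cup I_+$.

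First I would verify the inclusion $I_-\cup I_+\subset\{x:\left|x\right|\ge K\}$ whenever $b\in[b_0,1-b_0]$. For $x\in I_-$, using $x_-<0$ together with $\phi(b)\ge\phi(b_0)$ from \eqref{e1}, we get
\[
x\le x_-+K=-\frac{\phi(b)}{\sigma(2-\sigma)}+\frac{\phi(b_0)}{2\sigma(2-\sigma)}\le -\frac{\phi(b_0)}{2\sigma(2-\sigma)}=-K,
\]
so $\left|x\right|\ge K$. Symmetrically, for $x\in I_+$, using $\phi(1-b)\ge\phi(b_0)$,
\[
x\ge x_+-K=\frac{\phi(1-b)}{\sigma(2-\sigma)}-\frac{\phi(b_0)}{2\sigma(2-\sigma)}\ge\frac{\phi(b_0)}{2\sigma(2-\sigma)}=K.
\]
Note that $K>0$ depends only on $b_0$ and $\sigma$, and is in particular independent of $b\in[b_0,1-b_0]$.

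Now apply Lemma~\ref{l3} with $\eps=K$ to obtain a threshold $\alpha(K)$ such that for all $a\ge\alpha(K)$ and all $x$ with $\left|x\right|\ge K$ one has $0\le F'(x)<1-\sigma$. Setting $a_0=\alpha(K)$ (which depends only on $b_0$ and $\sigma$), the previous paragraph shows that every $x\in I_-\cup I_+$ satisfies $\left|x\right|\ge K$, so the inequality $0\le F'(x)<1-\sigma$ holds on $I_-\cup I_+$ as required.

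The only ``obstacle'' worth flagging is ensuring that the constants $\phi(b),\phi(1-b)$ can both be bounded below by $\phi(b_0)$ uniformly in $b\in[b_0,1-b_0]$; this is exactly the content of \eqref{e1}, which in turn relied on the hypothesis $b_0>\frac{1-\sigma}{2-\sigma}$ making $\phi(b_0)$ strictly positive. With that uniform lower bound in hand, the rest is immediate from Lemma~\ref{l3}.
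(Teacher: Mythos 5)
Your proof is correct and follows essentially the same route as the paper's: establish $\left|x\right|\ge K$ on $I_-\cup I_+$ using \eqref{e1} and the definition of $K$, then invoke Lemma~\ref{l3} with $\eps=K$ and set $a_0=\alpha(K)$. No substantive differences.
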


\begin{proof}
We have $\left|x_-\right|=\frac{\phi(b)}{\sigma(2-\sigma)}$, so
by~\eqref{e1}, for every $x\in I_-$ we get
\[
\left|x\right|\ge\frac{\phi(b)}{\sigma(2-\sigma)}-K\ge
K.
\]
Similarly, $\left|x_+\right|=\frac{\phi(1-b)}{\sigma(2-\sigma)}$, so
for every $x\in I_+$ we get
\[
\left|x\right|\ge\frac{\phi(1-b)}{\sigma(2-\sigma)}-K\ge
K.
\]

So
$\left|x\right|\ge K$ for all
$x\in I_-\cup I_+$.
Set $a_0= \alpha(K)$.
Then, by Lemma~\ref{l3}, if $a\ge a_0$ then $0\le F'(x)<1-\sigma$ for
all $x\in I_-\cup I_+$.
\end{proof}

\begin{lemma}\label{l5}
%If $a\ge a_0$ then 

$F(I_-)\subset I_+$ and $F(I_+)\subset I_-$ for sufficiently large $a$.
\end{lemma}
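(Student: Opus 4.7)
The idea is to sandwich $F$ between the two affine maps $F_-$ and $F_+$ using Lemma~\ref{l1c}, and then exploit the two key features of these affine maps: by Lemma~\ref{l2} they exchange the centres $x_-$ and $x_+$ of our intervals, and by construction they both have slope $1-\sigma<1$, so they contract $I_\pm$ by a definite factor. The perturbation between $F$ and $F_\pm$ is of order $1/(a|x|)$, which is negligible for large $a$ provided $|x|$ stays bounded away from zero on $I_-\cup I_+$.

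\textbf{Step 1: the intervals are separated from $0$.} I would first verify that $I_-$ lies entirely in $(-\infty,-K]$ and $I_+$ in $[K,\infty)$. By~\eqref{e1} and the definition of $K$,
\[
|x_\pm|=\frac{\phi(b)}{\sigma(2-\sigma)}\geq\frac{\phi(b_0)}{\sigma(2-\sigma)}=2K,
\]
so $x_-+K\leq -K<0<K\leq x_+-K$. In particular, for $x\in I_-$ the left-hand inequality of Lemma~\ref{l1c} applies, and for $x\in I_+$ the right-hand one does.

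\textbf{Step 2: $F(I_-)\subset I_+$.} Since $F_-$ is affine with slope $1-\sigma$ and $F_-(x_-)=x_+$ by Lemma~\ref{l2},
\[
F_-(I_-)=\bigl[x_+-(1-\sigma)K,\;x_++(1-\sigma)K\bigr].
\]
The upper bound is immediate: $F(x)<F_-(x)\leq x_++(1-\sigma)K<x_++K$. For the lower bound, Lemma~\ref{l1c} gives $F(x)>F_-(x)+\tfrac{1}{ax}\geq x_+-(1-\sigma)K-\tfrac{1}{a|x|}$; since $|x|\geq K$, this is at least $x_+-K$ as soon as $\tfrac{1}{aK}\leq\sigma K$, i.e.\ $a\geq 1/(\sigma K^2)$. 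The symmetric argument, using $F_+(x_+)=x_-$ and $F_+<F<F_++\tfrac{1}{ax}$ on $x>0$, establishes $F(I_+)\subset I_-$.

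\textbf{Main obstacle.} The only delicate point is balancing the two small quantities: the contraction deficit $\sigma K$ built into $F_\pm$ and the perturbation error $1/(a|x|)$ between $F$ and $F_\pm$. The choice $K=\phi(b_0)/(2\sigma(2-\sigma))$ in the hypothesis is made precisely so that the contraction dominates once $a$ is large; combining the threshold from Lemma~\ref{l4} with $a\geq 1/(\sigma K^2)$ yields a single $a_1$ (depending only on $b_0$ and $\sigma$) for which both inclusions hold simultaneously on $[b_0,1-b_0]$.
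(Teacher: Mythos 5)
Your proof is correct and takes essentially the same route as the paper: sandwich $F$ between the affine maps $F_\pm$, use Lemma~\ref{l2} to see that they swap $x_-$ and $x_+$, and absorb the $1/(a|x|)$ error from Lemma~\ref{l1c} into the slack $\sigma K$ once $a$ is large. The only (harmless) differences are that the paper bounds the width of $F(I_\mp)$ via the derivative estimate of Lemma~\ref{l4} while you get the same factor $(1-\sigma)$ by applying the affine sandwich pointwise over the whole interval (yielding the slightly cruder but still valid threshold $a\ge 1/(\sigma K^2)$), and a small slip in Step~1: $|x_+|=\phi(1-b)/(\sigma(2-\sigma))$, not $\phi(b)/(\sigma(2-\sigma))$, though both quantities are still at least $2K$ by~\eqref{e1}, so your separation claim stands.
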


\begin{proof}
By~\eqref{e1} and Lemmas~\ref{l1c},~\ref{l2} and~\ref{l4}, we have
\[
F(I_-)\subset[F(x_-)-(1-\sigma) K,F(x_-)+(1-\sigma) K]\subset
\left[x_++\frac{1}{ax_-}-(1-\sigma) K,x_++(1-\sigma) K\right].
\]
We have
\[
\frac{1}{ax_-}-(1-\sigma) K=-\frac{\sigma(2-\sigma)}{a\phi(b)}-(1-\sigma)
K\ge-K
\]
for $a\ge \frac{2-\sigma}{K\phi(b)}$, so
\[
F(I_-)\subset[x_+-K,x_++(1-\sigma) K]\subset I_+.
\]

Similarly,
\[
F(I_+)\subset[F(x_+)-(1-\sigma) K,F(x_+)+(1-\sigma) K]\subset
\left[x_--(1-\sigma) K,x_-+\frac{1}{ax_+}+(1-\sigma) K\right].
\]
We have
\[
\frac{1}{ax_+}+(1-\sigma) K\le K,
\]
for $a \ge \frac{2-\sigma}{K\phi(1-b)}$,
so
\[
F(I_+)\subset[x_--(1-\sigma) K,x_-+K]\subset I_-.
\]
Thus, for \[a\ge a_1=\max\left\{a_0,\frac{2-\sigma}{K\phi(b)},\frac{2-\sigma}{K\phi(1-b)}\right\}\]
the lemma follows.
\end{proof}

\begin{theorem}\label{t1}
For a given $\sigma\in(0,1)$ and
$b_0\in\left(\frac{1-\sigma}{2-\sigma},\frac12\right)$, there exists $a_0$
such that if $a\ge a_0$ and $b\in[b_0,1-b_0]$ then $F$ has an
attracting periodic orbit of period $2$ which attracts all points from $(0,1)$ except countably many which fall into $\ol{y}$.
\end{theorem}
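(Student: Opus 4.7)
The plan is to produce the attracting period-$2$ orbit by a contraction argument on $F^2$ via Lemmas~\ref{l4} and~\ref{l5}, and then combine the negative Schwarzian derivative of $F$ (Lemma~\ref{ls}) with its bimodal structure to rule out any other attracting cycle, so that global attraction on $\R$ minus the countable backward orbit of $\ol y$ follows.

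For the construction, Lemma~\ref{l5} gives $F(I_-)\subset I_+$ and $F(I_+)\subset I_-$, so $F^2$ maps each of $I_-, I_+$ into itself; Lemma~\ref{l4} bounds $F'$ on $I_-\cup I_+$ by $1-\sigma$, hence $(F^2)'<(1-\sigma)^2<1$ there. Banach's fixed point theorem yields unique attracting fixed points $p_-\in I_-$ and $p_+\in I_+$ of $F^2$, so $\{p_-,p_+\}$ is an attracting period-$2$ orbit of $F$. By Proposition~\ref{QREstability}, $\ol y$ is repelling for $a$ large, so $\Gamma:=\bigcup_{n\ge 0}F^{-n}(\ol y)$ is countable (each $F^{-n}(\ol y)$ is finite by bimodality of $F$).

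For uniqueness of the attractor, note that $F'(x)\to 1-\sigma<1$ as $|x|\to\infty$, so every orbit of $F$ enters in finite time a compact forward-invariant interval $J\supset I_-\cup I_+$ containing both critical points $c_-,c_+$. On $J$, Singer's theorem (applicable by Lemma~\ref{ls}) forces every attracting cycle of $F$ to contain a critical point in its immediate basin; since $F$ has only two critical points, at most two attracting cycles coexist. I would verify that both $c_\pm$ lie in the basin of $\{p_-,p_+\}$: solving $F'(x)=0$ gives $c_\pm\to 0$ as $a\to\infty$, hence $F(c_-)\to 1-b$ and $F(c_+)\to -b$. By Lemma~\ref{l1c}, as $a\to\infty$ the map $F$ converges uniformly on compact subsets of $\R\setminus\{0\}$ to the piecewise linear map $L$ that equals $F_-$ on $\{y<0\}$ and $F_+$ on $\{y>0\}$, whose unique attracting period-$2$ orbit $\{x_-,x_+\}$ (Lemma~\ref{l2}) attracts every initial condition in $\R\setminus\{0\}$ via the $(1-\sigma)^2$-contraction of $L^2$. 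Consequently, for $a$ large, each $c_\pm$ falls into $I_-\cup I_+$ after finitely many $F$-iterates, placing both critical points in the basin of $\{p_-,p_+\}$ and excluding any second attracting cycle.

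A standard argument for bimodal negative-Schwarzian maps (in the spirit of the proof of Proposition~\ref{dynsigma0}) then shows that every $y\in J\setminus\Gamma$ satisfies $\omega(y)=\{p_-,p_+\}$: because both critical orbits lie in the basin of $\{p_-,p_+\}$ and the only fixed point is the repelling $\ol y$, no nontrivial invariant set disjoint from this basin can support recurrence, so the $\omega$-limit of any orbit not eventually landing on $\ol y$ must equal $\{p_-,p_+\}$. The main obstacle is the critical-point chase above: since $c_\pm$ sit near the repelling fixed point $\ol y$ and far from the trapping intervals $I_\pm$, delivering them into the basin of $\{p_-,p_+\}$ is the delicate step, relying on the large-$a$ shadowing of $F$ by the piecewise linear system $L$.
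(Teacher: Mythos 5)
Your construction of the attracting $2$-cycle (the alternating trapping intervals $I_\pm$ from Lemma~\ref{l5}, the derivative bound of Lemma~\ref{l4}, and the contraction of $F^2$ on each $I_\pm$) is exactly the paper's argument, and your use of the negative Schwarzian derivative to cap the number of attracting cycles is also in the paper. The genuine gap is in your last paragraph. The principle you invoke --- ``both critical orbits lie in the basin of $\{p_-,p_+\}$ and the only fixed point is repelling, hence no nontrivial invariant set outside the basin can support recurrence'' --- is not a theorem for bimodal negative-Schwarzian maps, and it is false in general: in a periodic window occurring after chaos has developed (e.g.\ the period-$3$ window of the logistic family), the critical orbit is captured by the attracting cycle and all fixed points are repelling, yet the map retains a horseshoe, positive entropy, and an \emph{uncountable} set of points not attracted to the cycle. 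Singer's theorem only limits the number of attracting/neutral cycles; it says nothing about uncountable repelling invariant sets. So ``all but countably many points converge to $P$'' does not follow from what you have established.

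What is actually needed --- and what the paper proves --- is the concrete geometric fact that the basin of $P$ contains everything outside the middle lap: the \emph{large} intervals $J_-=[F(c_+),c_-]$ and $J_+=[c_+,F(c_-)]$ (not just your small $I_\pm$) satisfy $0\le F'<1-\sigma$ on $J_-\cup J_+$ (they lie in the outer, increasing laps) and $F(J_\mp)\subset J_\pm$, so all of $J_-\cup J_+$ is attracted to $P$; moreover $F(\R)\subset J=[F(c_+),F(c_-)]$. Hence the only way for an orbit to avoid the basin is to stay forever in $J_0=[c_-,c_+]$, where $F$ is monotone decreasing; the set of such points is an invariant interval around $\ol{y}$, which must degenerate to $\{\ol{y}\}$ since a nondegenerate one would carry a non-repelling fixed point of $F^2$, contradicting Singer. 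This is where ``countably many exceptions'' comes from. Incidentally, this structure also makes your critical-orbit chase unnecessary: $F(c_\pm)$ are by definition the endpoints of $J_\mp$, so the critical orbits are trapped after one step; your shadowing-by-$L$ route is workable but requires verifying that the $L$-orbits of $1-b$ and $-b$ stay uniformly bounded away from $0$ (they do, by at least $\phi(b_0)$), which you flag but do not carry out.
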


\begin{proof}
Let $a_1$ be the constant from Lemma~\ref{l5}. We may additionally
assume that $a_1>4(1-\sigma)$. Then for $a\ge a_1$ we have
$F'(0)=1-\sigma-\frac{a}4<0$, so by Lemma~\ref{l4} the interval $I_-$
lies to the left of 0, and $I_+$ to the right of 0. Therefore,
$I_-\cap I_+=\emptyset$. Now the claim of the theorem follows from
Lemmas~\ref{l4} and~\ref{l5}.

Now we can describe the dynamics of $F$ (and therefore of $f_{ab\sigma}$) %for $a,b$ as in Theorem~\ref{t1}. 
Let $P=\{z_-,z_+\}$, where $z_-<0<z_+$,
be the periodic orbit found there. From the formula for $F'$ it
follows that $F$ has two critical points, $c_-<0$ and $c_+>0$. From
Lemmas~\ref{l4} and~\ref{l5} it follows that
\[
F(c_+)<z_-<c_-<0<c_+<z_+<F(c_-).
\]
In particular, the interval $J=[F(c_+),F(c_-)]$ is invariant.
Moreover, the trajectories of both critical points are attracted to
$P$, so by Lemma~\ref{ls}, there are no attracting or neutral periodic
points except $z_-$ and $z_+$.

Consider intervals $J_-=[F(c_+),c_-]$ and $J_+=[c_+,F(c_-)]$. We have
$0\le F'<1-\sigma$ on $J_-\cup J_+$, so $F(J_-)\subset J_+$ and
$F(J_+)\subset J_-$. Therefore the trajectories of all points from
$J_-\cup J_+$ converge to $P$.

Our map is decreasing on $J_0=[c_-,c_+]$, and has there a fixed point
$z_0$. Since there are no attracting or neutral periodic points in
$J_0$, trajectories of all points from $J_0$ (except $z_0$) are
repelled from $z_0$ and eventually enter $J_-\cup J_+$. Then they are
attracted to $P$. Similar argument shows that trajectories of all
points from $\R\setminus J$ eventually enter $J$, and then they
either hit $z_0$ and stay there, or are attracted by $P$.

\end{proof}

Of course, the same theorem holds if we replace $F$ by $f_{ab\sigma}$, thus we obtain Theorem \ref{thm2period}.
\end{proof}
%
%{\color{blue} To show that there exists and attracting periodic orbit of period 2 we find intervals $I_-$, $I_+$ such that $F(I_-)\subset I_+$ and $F(I_+)\subset I_-$, $I_-\cap I_+=\emptyset$. Then ...}

%{\color{blue} Interpretation of Theorem \ref{2period}}

%For given $\sigma$ and costs close enough with agents using EWA algorithm with sufficiently large intensity of choice system will converge to periodic orbit of period 2. This behavior excludes possibility of {\color{red} any behavior resulting close to perturbed equilibrium when intensity of choice is large enough.} Nevertheless, the long-term behavior is simple.

\subsubsection{Long-term behavior outside $ \left(\frac{1-\sigma}{2-\sigma},\frac{1}{2-\sigma}\right)$}
Now we look at the case $b \in \left(0, \frac{1-\sigma}{2-\sigma} \right) \cup \left( \frac{1}{2-\sigma}, 1\right)$. 
Parameter $b$ is the characteristic of our game -- it tells us how different are the costs of resources. Taking $b$ far enough from $1/2$ implies that costs of resources are distinguishable for agents who are discounting/forgetting past costs with factor $\sigma$. We show that in such case the chaotic behavior emerges.%\footnote{{\color{red} Can we state the following: approaching NE at the cost of destabilizing the system???}}
%Once more we will use two linear maps,

\begin{theorem} \label{chaos}
For a fixed $\sigma>0$ and $b$, if either $b\le 1/2$ and
$b<\frac{1-\sigma}{2-\sigma}$, or $b\ge 1/2$ and $b>\frac{1}{2-\sigma}$, then there
exists $a_1>0$ such that if $a>a_1$ then $f_{ab\sigma}$ is Li-Yorke chaotic and  $h (f_{ab\sigma})\geq \log\frac{1+\sqrt{5}}{2}$.%\footnote{{\color{red} what with $(1-\sigma)/(2-\sigma)$ and $1/(2-\sigma)$?}}
\end{theorem}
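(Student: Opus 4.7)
The approach is to use the topological conjugacy of Remark~\ref{conjugacy} and work entirely with the map $F$, since Li-Yorke chaos and topological entropy are conjugacy invariants. The strategy is to produce a periodic orbit of period $3$ for $F$ whenever $a$ is large enough. Once such an orbit is exhibited, Li and Yorke's ``period three implies chaos'' theorem immediately yields Li-Yorke chaos. For the entropy bound, the classical Misiurewicz-Szlenk type lower bound for an interval map carrying a periodic orbit of odd period $m \ge 3$ gives $h \ge \log \lambda_m$, where $\lambda_m$ is the largest real root of $x^m - 2x^{m-2} - 1 = 0$. The factorization $x^3 - 2x - 1 = (x+1)(x^2 - x - 1)$ yields $\lambda_3 = \tfrac{1+\sqrt{5}}{2}$, which is exactly the required bound $h(F) \ge \log\tfrac{1+\sqrt{5}}{2}$.

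By the symmetry $y \mapsto -y$, $b \mapsto 1-b$, it suffices to treat $b \le 1/2$ with $b < \tfrac{1-\sigma}{2-\sigma}$. As in the proof of Theorem~\ref{2period}, I work with the piecewise linear model $\widetilde F$ equal to $F_-(y) = (1-\sigma)y + 1 - b$ on $y < 0$ and $F_+(y) = (1-\sigma)y - b$ on $y > 0$, noting that $|F - \widetilde F| = O(1/a)$ away from $0$ by the analogue of Lemma~\ref{l1c}, and that the slope of $F$ is in $[0, 1-\sigma)$ on the relevant regions for large $a$ by the analogue of Lemma~\ref{l3}. Crucially, the hypothesis $\phi(b) = 2b - \sigma b - 1 + \sigma < 0$ is precisely what forced the alternating period-$2$ candidates $x_-, x_+$ of Theorem~\ref{2period} to collapse onto the same side of $0$, so the standard alternating period-$2$ construction is unavailable and one must find a richer periodic structure.

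The heart of the argument is to exhibit three pairwise disjoint closed intervals $I_1, I_2, I_3$ bounded uniformly away from $0$ with $F(I_1) \supset I_2$, $F(I_2) \supset I_3$, and $F(I_3) \supset I_1$ for all sufficiently large $a$. I locate the candidates by solving the linear equations for a period-$3$ cycle of $\widetilde F$: enumerate the admissible itineraries in $\{-,+\}^3$ up to cyclic rotation, and for each solve the resulting $3 \times 3$ linear system obtained by composing $F_-$ and $F_+$ in the prescribed order and imposing cyclicity. Under the hypothesis $b < \tfrac{1-\sigma}{2-\sigma}$, the expected outcome is that exactly one non-constant itinerary (with one symbol of one sign and two of the other) yields a self-consistent sign pattern, producing a genuine period-$3$ cycle of $\widetilde F$ whose points are uniformly bounded away from $0$. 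One then takes $I_j$ to be small closed neighborhoods of these piecewise linear periodic points, disjoint and uniformly away from $0$; the $O(1/a)$ estimate on $|F - \widetilde F|$ then yields the three covering inclusions for all sufficiently large $a$.

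The intermediate value theorem applied to $F^3 - \mathrm{id}$ on $I_1$ then produces a fixed point of $F^3$ inside $I_1$, and the disjointness of the $I_j$ forces the period to be exactly $3$. Transferring through the smooth conjugacy $x \mapsto \frac{1}{a}\log\frac{1-x}{x}$ gives a period-$3$ orbit of $f_{ab\sigma}$ in $(0,1)$, and the Li-Yorke and Misiurewicz-Szlenk invocations mentioned above complete the proof. The principal obstacle is the third step: correctly identifying the admissible itinerary in $\{-,+\}^3$, verifying that the algebraic sign conditions on the resulting linear system match precisely the hypothesis $b < \tfrac{1-\sigma}{2-\sigma}$ (one expects the sign conditions to reduce cleanly to $\phi(b) < 0$ after elementary manipulation), and confirming that the perturbation from $\widetilde F$ to $F$ preserves the covering inclusions. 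The final verification step is routine given the $O(1/a)$ error bound and the contraction estimate on the slope, but the combinatorial choice of itinerary is where the argument genuinely engages the hypothesis.
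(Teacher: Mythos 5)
Your overall architecture --- pass to the conjugate map $F$, produce a period-$3$ point, then invoke Li--Yorke's theorem for the chaos claim and the Block--Guckenheimer--Misiurewicz--Young bound $h\ge\log\frac{1+\sqrt5}{2}$ for the entropy claim --- is exactly the paper's. The gap sits in the step you yourself flag as the principal obstacle: the piecewise linear skeleton $\widetilde F$ (equal to $F_-$ on $y<0$ and $F_+$ on $y>0$) does \emph{not} have a period-$3$ cycle throughout the hypothesized range of $b$. Write $\lambda=1-\sigma$ and $S=1+\lambda+\lambda^2$. The itineraries $(+,+,+)$ and $(-,-,-)$ give the sign-inconsistent fixed points $-b/\sigma$ and $(1-b)/\sigma$; the itinerary $(-,-,+)$ is consistent only for $b>\frac{1+\lambda^2}{S}>\frac12$, hence irrelevant when $b\le\frac12$; and the itinerary $(+,+,-)$, with cycle $y_0=\frac{1-bS}{1-\lambda^3}$, $y_1=\lambda y_0-b$, $y_2=\lambda y_1-b$, is sign-consistent if and only if $\frac{\lambda^2}{S}<b<\frac{\lambda}{S}$. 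That window is a proper subinterval of the theorem's range $b<\frac{\lambda}{1+\lambda}$: for $\sigma=\frac12$ it is $\left(\frac17,\frac27\right)$ while the theorem covers all $b<\frac13$, and for $b$ near $0$ every itinerary fails. This is not an accident: both branches of $\widetilde F$ are orientation-preserving contractions, so $\widetilde F$ is a contracted-rotation-type map whose unique attracting cycle has a period governed by a rotation number varying with $b$; it is never chaotic and has period $3$ only in that narrow window. So your construction produces nothing over most of the parameter set the theorem claims.

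The chaos is generated precisely by the feature your model discards: the decreasing middle lap of $F$ between the critical points $c_-<0<c_+$, which shrink to $0$ as $a\to\infty$ but never disappear. The paper's proof exploits this fold directly: it picks $x_0\in(c_-,c_+)$ with $F(x_0)=c_-$ and shows, using $F(c_-)\to 1-b$, $F(c_+)\to -b$ and the hypothesis $b<\frac{1-\sigma}{2-\sigma}$ (equivalently $(1-\sigma)(1-b)-b>0$), that $F^3(x_0)>c_+$; this orbit segment straddling the critical interval yields the period-$3$ point, after which the Li--Yorke and BGMY citations finish the proof just as you propose. To repair your argument you would have to let one point of the cycle live in, or map through, the middle lap rather than insisting all three intervals stay uniformly away from $0$. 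A secondary remark: even inside the window where the $\widetilde F$-cycle exists, your intervals are carried by branches of slope at most $1-\sigma<1$, so the covering relations force $|I_1|\le(1-\sigma)^3|I_1|+O(1/a)$; the intermediate value theorem still yields a period-$3$ point, so this is not fatal, but it underlines that no expansion is produced by the cycle itself --- the entropy must come from the fold, as in the paper.
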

%The proof of existence of Li-Yorke chaos has several steps. First, we show existence of a point such that its first and third iterates by the map $F$ are located on opposite sides of the point. This property implies existence of a periodic orbit of period 3. From here we apply Sharkovsky Theorem to obtain the existence of periodic orbits of all periods and Li-Yorke chaos for the map $F$. Finally, as $F$ and $\fabs$ are topologically conjugate we get Li-Yorke chaos for the map $\fabs$.

\begin{proof}%[Proof of Theorem \ref{chaos}]
First, we show auxiliary lemmas.
\begin{lemma}\label{l2c}
If $a$ is sufficiently large, then the map $F$ has two critical
points, $c_-<0$ and $c_+>0$, independent of $b$. For $\sigma$ fixed,
those points go to $0$ as $a$ goes to infinity.
\end{lemma}

\begin{proof}
We have
\[
F'(x)=1-\sigma-\frac{ae^{ax}}{(e^{ax}+1)^2}.
\]
Thus, if $t=e^{ax}$, then the equation for the zeros of $F'$ becomes
\[
t^2+\left(2-\frac{a}{1-\sigma}\right)t+1=0.
\]
If $a$ is sufficiently large, then this equation has two roots, both
positive, and one less than 1 and the other one larger than 1.

For a given $\eps>0$, as $a$ goes to infinity, $F'$ converges
uniformly to $1-\sigma$ on $\R\setminus(-\eps,\eps)$. Thus, if $a$ is
sufficiently large, both critical points have to be in $(-\eps,\eps)$.
\end{proof}

\begin{lemma}\label{l3c}
Fix $\sigma$ and $b$. Assume that $b\le 1/2$ and
$b<\frac{1-\sigma}{2-\sigma}$. Then there is $a_0$ such that if $a>a_0$ then
there is a point $x_0\in(c_-,c_+)$ such that $F(x_0)=c_-$ and
$F^3(x_0)>c_+$.
\end{lemma}

\begin{proof}
By Lemmas~\ref{l1c} and~\ref{l2c}, as $a$ goes to infinity, then
$F(c_-)$ goes to $1-b$ and $F(c_+)$ goes to $-b$. We have $F(0)\ge
0>c_-$ and $F(c_+)<c_-$, so there is a point $x_0\in(c_-,c_+)$ such
that $F(x_0)=c_-$. For $a$ sufficiently large, we get $F^3(x_0)$
arbitrarily close to $(1-\sigma)(1-b)-b$, which is positive, while $c_+$
is arbitrarily close to 0. Thus, $F^3(x_0)>c_+$.
\end{proof}

From this we get an immediate corollary.

\begin{theorem}\label{t1c}
For a fixed $\sigma>0$ and $b$, if either $b\le 1/2$ and
$b<\frac{1-\sigma}{2-\sigma}$, or $b\ge 1/2$ and $b>\frac{1}{2-\sigma}$, then there
exists $a_0$ such that if $a>a_0$ then $F$ has a periodic point of
period $3$.
\end{theorem}

\begin{proof}
In the first case this follows from Lemma~\ref{l3c}. In the second
case, use the first one and the identity $F_{1-b}(-x)=-F_b(x)$.
\end{proof}

Existence of a periodic point of period 3 implies Li-Yorke chaos of the system \cite{SKSF} for $F$ and thus for $f_{ab\sigma}$.
 Finally, for any interval map if it has periodic point of period 3, then the entropy of this maps is at least $\log \frac{1+\sqrt{5}}{2}$, \cite{BGMY}.  
This completes the proof of Theorem \ref{chaos}.
\end{proof}

\begin{corollary} \label{corchaos}
For fixed $\sigma\in [0,1)$ and $b \in \left(0, \frac{1-\sigma}{2-\sigma} \right) \cup \left( \frac{1}{2-\sigma}, 1\right)$ the system becomes chaotic for sufficiently large values of the intensity of choice.
\end{corollary}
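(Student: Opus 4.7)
The proof plan is almost immediate from the work already done: the corollary is essentially a packaging of Theorem~\ref{chaos} together with a prior result, so my main task is to unpack the quantifiers correctly and handle the boundary case $\sigma=0$ separately.

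First, I would split into the two cases $\sigma \in (0,1)$ and $\sigma = 0$. For $\sigma \in (0,1)$, the hypothesis on $b$ in the corollary, namely $b \in \left(0,\frac{1-\sigma}{2-\sigma}\right) \cup \left(\frac{1}{2-\sigma},1\right)$, matches exactly the hypothesis of Theorem~\ref{chaos}: in the first interval we have $b < 1/2$ with $b < \frac{1-\sigma}{2-\sigma}$, and in the second interval we have $b > 1/2$ with $b > \frac{1}{2-\sigma}$. Theorem~\ref{chaos} then provides a threshold $a_1 > 0$ such that for every $a > a_1$ the map $f_{ab\sigma}$ is Li-Yorke chaotic and has topological entropy at least $\log\frac{1+\sqrt 5}{2} > 0$. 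Since both Li-Yorke chaos and positive topological entropy are standard manifestations of ``chaotic'' behavior as discussed in the subsection on attracting orbits and chaos, the conclusion follows.

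Second, for the endpoint case $\sigma = 0$, the intervals for $b$ collapse to $(0,1/2) \cup (1/2,1)$, so the hypothesis reduces to $b \neq 1/2$. This is the pure Multiplicative Weights Update regime, for which Li-Yorke chaos at sufficiently large intensity of choice has already been established in the references cited in the paper (in particular~\cite{palaiopanos2017multiplicative} for a specific instance and~\cite{CFMP2019} for the general nonatomic two-resource congestion game with $b \ne 1/2$). I would simply cite these results to close the case.

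The only mild subtlety is making sure I have not overlooked any constraint from Theorem~\ref{chaos} that would prevent its direct use; in particular, Theorem~\ref{chaos} assumes $\sigma > 0$ explicitly, which is why the $\sigma=0$ case must be peeled off and handled by citation. There is no genuine obstacle here — all the analytic work (existence of a period-$3$ point for $F$ via Lemmas on critical points, followed by Sharkovsky's theorem and the Block--Guckenheimer--Misiurewicz--Young entropy bound) has been carried out inside Theorem~\ref{chaos} itself, so the corollary is essentially a one-line deduction in each case.
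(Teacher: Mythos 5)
Your proposal is correct and matches the paper's (implicit) treatment: Corollary~\ref{corchaos} is stated without proof as an immediate repackaging of Theorem~\ref{chaos}, whose hypotheses coincide with $b\in\left(0,\frac{1-\sigma}{2-\sigma}\right)\cup\left(\frac{1}{2-\sigma},1\right)$ for $\sigma\in(0,1)$, and the $\sigma=0$ case is exactly what the paper defers to \cite{CFMP2019}. Your attention to the quantifier mismatch at $\sigma=0$ is the only nontrivial point, and you resolve it the same way the authors do.
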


\begin{comment}
\begin{figure}
\centering
\includegraphics[width=0.8\textwidth]{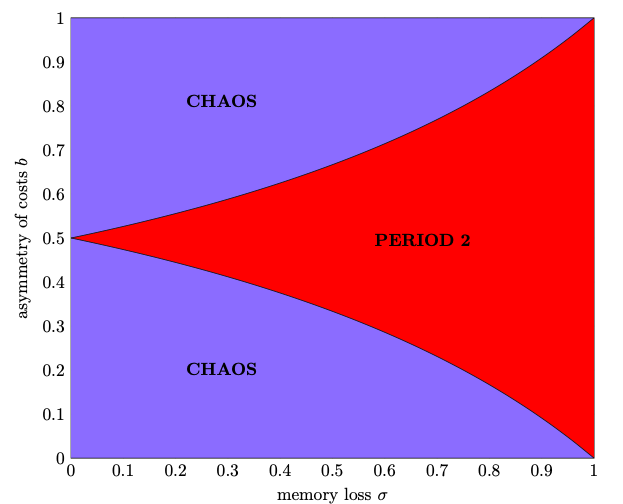} %0.68
\begin{minipage}{0.9\textwidth}
\caption{\small Behavior of the system for large values of intensity of choice $a$. As long as $b\in \left(\frac{1-\sigma}{2-\sigma},\frac{1}{2-\sigma}\right)$, almost all trajectories are attracted to the periodic orbit of period 2. Outside this interval we observe chaotic behavior.}
\end{minipage}
\label{fig: cp2}
\end{figure}
\end{comment}

%For a fixed $\sigma\in [0,1)$ our results are the following (see Figure \ref{fig: periodic})
%\begin{enumerate}
%\item If $b\in I_{\sigma}=\left(\frac{1-\sigma}{2-\sigma},\frac{1}{2-\sigma}\right)$, then for sufficiently large $a$ almost every $f_{ab\sigma}$-trajectory converges to the attracting orbit of period 2 (Theorem \ref{2period} and Corollary \ref{cor2period})
%\item For  $b\in (0,1)\backslash I_{\sigma}$ for sufficiently large $a$ the map $f_{ab\sigma}$ has periodic orbits of every period, is Li-Yorke chaotic and has {\color{red} positive topological entropy}. (Theorem \ref{chaos}).%\footnote{{\color{red} Econ comment on definitions and interpretations of chaos.}}
%\end{enumerate}

Theorem \ref{chaos} shows us that when the difference in costs of resources is substantial if agents choose their strategies with sufficiently large intensity of choice $a$, then the system will inevitably become chaotic. In such case any long-term behavior will become extremely complex.
% to study.
%Thus, there is range of values of $b$ (and thus, different costs) descending with $\sigma$ in which the growth of the intensity of choice will lead to the chaotic behavior of the system. 
On the other hand, when the cost of resources are similar enough, memory loss makes those costs indistinguishable from the perspective of an agent. In such case when the intensity of choice is large the agents follow an attracting periodic orbit of period 2 (Theorem \ref{thm2period}).
%\footnote{Are we using intensity of choice or demand???}
This is a crucial differentiation of the long-term behavior of the system: existence of periodic orbit of period 2 which attracts almost all trajectories implies that although the system does not stabilize, it remains relatively predictable --- no matter the initial state of the system, it will converge to period 2 orbit, thus after some time, every even number of iterations of the map will place it close to its previous position. When the system becomes chaotic we land in an unpredictable regime
%at the antipode of this behavior 
with periodic orbits of different periods, dependence on initial conditions and complicated dynamics.
%In which case we end up decides an interplay between $b$ and $\sigma$.

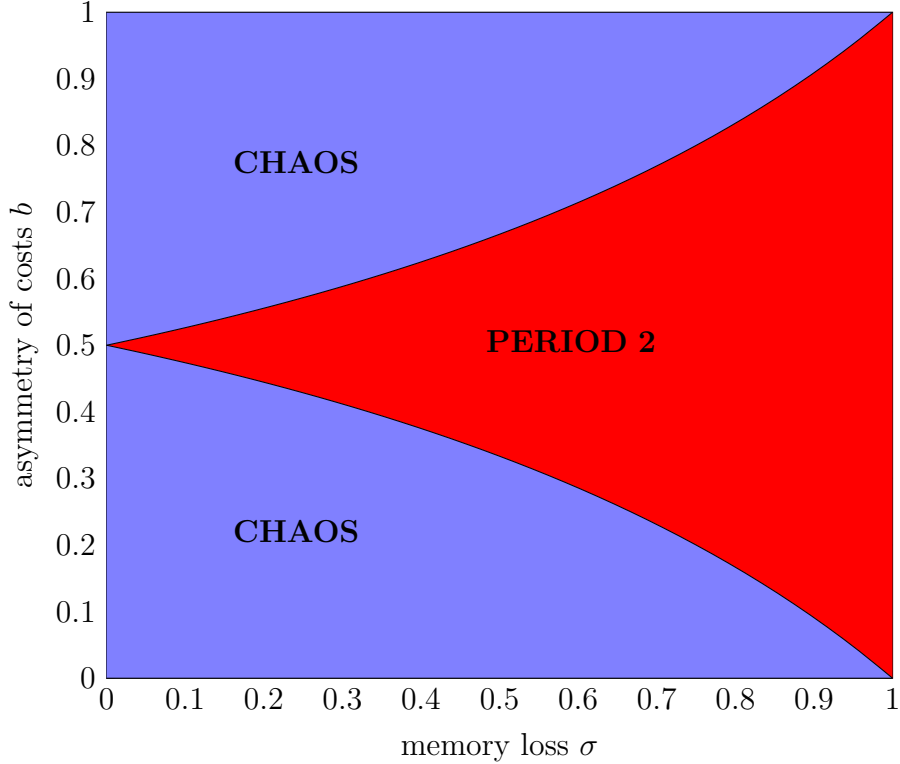
\begin{figure}
\centering
\begin{tikzpicture}[domain=0:1,scale=0.7]
  \begin{axis}[thick,smooth,no markers, 
  width=\linewidth,
  line width=0.5,
  grid=major, % Display a grid
  tick label style={font=\normalsize},
  legend style={nodes={scale=0.4, transform shape}},
  label style={font=\normalsize},
  legend image post style={mark=triangle},
  grid style={white},
  xlabel={memory loss $\sigma$},
  xlabel style={yshift=-0.2cm}, 
ylabel={asymmetry of costs $b$},
ylabel style={yshift=0.2cm}, 
   %y tick label style={
   % /pgf/number format/.cd,
    %fixed,
    %fixed zerofill,
   % precision=4
 %},
legend style={at={(1,1)}, anchor=north east,  draw=none, fill=none},
ymin = 0,
ymax = 1,
xmin=0,
xmax=1
  ]
           \addplot+[name path=A,black] {(1-x)/(2-x)};
        \addplot+[name path=B,black] {1/(2-x)};
        \addplot+[name path=C,black] {0};
        \addplot+[name path=D,black] {1};
        \addplot[red] fill between[of=A and B];
        \addplot[blue!50] fill between[of=A and C];
        \addplot[blue!50] fill between[of=B and D];
    
          \draw[xshift=7.0cm,yshift=6.4cm] node [right,text width=6cm]{{\bf PERIOD 2}};
  \draw[xshift=2.2cm,yshift=9.8cm] node [right,text width=6cm]{{\bf CHAOS}};
  \draw[xshift=2.2cm,yshift=2.8cm] node [right,text width=6cm]{{\bf CHAOS}};

   % \addplot[blue] coordinates {(100,0.9245) (200,0.9247) (300,0.9243) (400,0.9244) (500,0.9246)(750,0.9253) (1000,0.9248)};  \addlegendentry{Test}

%     \addplot[no marks,dotted,red] coordinates
%      {(100,0.6949) (200,0.7114) (300,0.7130) (400,0.7134) (500,0.7135 )(750,0.7134) (1000,0.7135)}; 
%      \addlegendentry{Error de entrenamiento}

   % \addplot[brown] coordinates {(100,0.8814) (200,0.8836) (300,0.8847) (400,0.8855) (500,0.8857)(750,0.8862) (1000,0.8863)};
    % \addlegendentry{Test} 

%      \addplot[black,dotted] coordinates
%     {(100,0.7285) (200,0.7564) (300,0.7671) (400,0.7721) (500,0.7745)(750,0.7766) (1000,0.7770)};
%     \addlegendentry{Error de entrenamiento}
  \end{axis}
\end{tikzpicture}
\begin{minipage}{0.9\textwidth}
\caption{\small Behaviors of the system for large values of the intensity of choice $a$. As long as $b\in \left(\frac{1-\sigma}{2-\sigma},\frac{1}{2-\sigma}\right)$, almost all trajectories are attracted to the periodic orbit of period 2. Outside this interval we observe chaotic behavior.}
\label{fig: cp2}
\end{minipage}
\end{figure}

Corollaries \ref{cor2period} and \ref{corchaos}  determine the sets of parameters $(\sigma,b)$  in which the long-term behavior for large intensity of choice is diametrically different (see Figure \ref{fig: cp2}). When $\sigma=0$ the interval $\left(\frac{1-\sigma}{2-\sigma},\frac{1}{2-\sigma}\right)$ shrinks to $\{1/2\}$ and the system will be chaotic if only cost functions of different paths are different. When $\sigma$ increases the interval where we observe attraction to the orbit of period 2 expands. As $\sigma$ tends to 1 chaotic behavior vanishes and in the instability region
%(which by Proposition \ref{QREstability} is growing larger as $\sigma$ increases)
almost all trajectories (except countably many) converge to the attracting periodic orbit of period 2. This result agrees with what we know for these special cases (see Proposition \ref{dynsigma0} here and Theorem 3.7 with Corollary 3.10 from \cite{CFMP2019}).%\footnote{{\color{red} Can we say something more about this attracting periodic orbit?}}
%For sufficiently large $a$ system becomes unstable.
We finally note that the phase transition at $(\sigma,\frac{1-\sigma}{2-\sigma})$ and $(\sigma,\frac{1}{2-\sigma}) $ implies that close to these values a small change of costs of resources (change of $b$) as well as small change in memory of the agents (change of $\sigma$) can push the system from simple periodic behavior to the complex chaotic one (or in the opposite direction).%\footnote{{\color{red}expand}}
%\begin{remark} $\sigma>1$ \end{remark}

\section{Conclusions}

In this paper we show that chaotic behavior can be observed in a large class of EWA dynamics for simple two-strategy nonatomic congestion game. We derive this class of dynamics from Galla and Farmer \cite{GallaFarmer_PNAS2013}. We show that in such game an increase in the intensity of choice will inevitably result in loosing stability of the system.
 %the perturbed equilibrium will inevitably loose stability, which will result in the instability of the system. 
 Moreover, the interplay between asymmetry of costs and memory loss will give qualitatively different behaviors for large values of the intensity of choice. For $\sigma=0$, that is when all previous costs are equally important, the system will become chaotic only if costs of resources are different. When $\sigma$ increases (memory loss/discount factor increases) the range of values of the parameter of asymmetry of costs $b$, for which the trajectories of almost all points will be attracted by periodic orbit of period 2, will grow, eventually for $\sigma=1$ attaining the whole unit interval $(0,1)$. This behavior gives two completely different regimes. The system where all trajectories are attracted to the periodic orbit of period 2 is predictable and the dynamics is simple, while chaotic regime is unpredictable resulting in complex dynamics.
%Thus, an increase in memory loss can prevent chaotic behavior.

%Putting these results in the wider context, they suggest 
Our results show that while potential/congestion games are traditionally  viewed as one of the most predictable classes of games in terms of their dynamics, their detailed picture is much more complicated.
These results are in line with numerous recent findings \cite{BCFKMP21,CP2019,CFMP2019,Thip18,mertikopoulos2017cycles,Farmer2019,GallaFarmer_ScientificReport18}, suggesting that complex and non-equilibrating behavior of agents employing learning rules %\footnote{Such as Multiplicative Weights Update (MWU), Expected Weights Attraction (EWA), Follow the Regularized Leader (FTRL) or ficticious play.} 
widely applied in economics %and other sciences 
seems to be common rather than exceptional. 

In addition, we show that memory loss can prevent chaos in two-strategy congestion game with homogeneous population of agents. But what will happen in heterogeneous case? And what if agents have more strategies/resources available? Evidently, the system will be more complicated. Nevertheless, in the full memory case one can observe the emergence of chaotic behavior for $b\neq 1/2$ as a consequence of the increase of the intensity of choice, both in heterogeneous case and for many strategies \cite{CFMP2019}. We leave the answer to the memory loss case for future work. Moreover, one may ask if these results are algorithm specific. Results on more general classes of dynamics \cite{BCFKMP21,mertikopoulos2018riemannian} suggest that our result can be generalized to larger class of dynamics like (discounted) FTRL dynamics. We also leave the answer to this question for the future work.

Lastly, Pangallo et al. \cite{Farmer2019} showed that best reply cycles --- basic topological structures in games --- predict nonconvergence of six well-known learning algorithms that are used in biology or are supported by experiments with human players. Best reply cycles are dominant in complicated and competitive games, indicating that in these cases equilibrium is typically an unrealistic assumption, and one must explicitly model the learning dynamics. These examples of complex and chaotic behavior strongly suggests that chaotic, non-equilibrium results can be further generalized to other  %variants of zero-sum
 games.
 
 %As Mertikopoulos and Sandholm result on Riemannian game dynamics \cite{MS2018} suggests that for large class of penalty dynamics the difference in behavior depends mainly on geometric properties, not topological ones. Therefore, as chaos is a topological property, the chaotic behavior detected for a special case of MWU should also occur for FTRL dynamics

\section*{Acknowledgements}

Georgios Piliouras acknowledge AcRF Tier 2 grant 2016-T2-1-170, grant PIE-SGP-AI-2018-01, NRF2019-NRF- ANR095 ALIAS grant and NRF 2018 Fellowship NRF-NRFF2018-07.
Research of Micha{\l} Misiurewicz was partially supported by grant number 426602 from the Simons Foundation.
Jakub Bielawski and Fryderyk Falniowski acknowledge support from a subsidy granted to Cracow University of Economics and COST Action CA16228 ``European Network for Game Theory''. Thiparat Chotibut was supported by Thailand Science Research and Innovation Fund Chulalongkorn University [CU\_FRB65\_ind (5)\_110\_23\_40].

\bibliographystyle{abbrv} 
\bibliography{sigma-jet-arxiv.bib}
%\bibliography{ms2}

\end{document}